\documentclass[12pt,draftclsnofoot,onecolumn]{IEEEtran}
\usepackage{bm}
\usepackage{amsmath}
\usepackage{amsthm}
\usepackage{extarrows}
\usepackage{amssymb}
\usepackage{graphicx}
\usepackage{color, xcolor}
\usepackage{enumerate}
\usepackage[hidelinks]{hyperref}
\usepackage{bookmark}
\usepackage{setspace}
\usepackage{multirow}
\usepackage{subfigure}
\usepackage{float}
\usepackage{booktabs}
\usepackage{makecell}
\usepackage{algorithm}
\usepackage{algpseudocode}
\usepackage{cite}
\usepackage{framed} 
\usepackage{scalerel}
\usepackage{tikz}
\usetikzlibrary{matrix}
\usepackage{booktabs}

\graphicspath{{figures/}}

\newcommand{\mr}{\mathrm}
\newcommand{\BE}{\begin{equation}}
\newcommand{\EE}{\end{equation}}
\newcommand{\BS}{\begin{subequations}}
\newcommand{\ES}{\end{subequations}}

\newtheorem{assumption}{Assumption}
\newtheorem{definition}{Definition}

\newtheorem{remark}{Remark}
\newtheorem{lemma}{Lemma}

\DeclareRobustCommand{\rchi}{{\mathpalette\irchi\relax}}
\newcommand{\irchi}[2]{\raisebox{\depth}{$#1\chi$}}

\begin{document}

\title{Memory AMP: Overflow Avoidance, Complexity Reduction, and Comparative Analysis}
%\author{\normalsize\IEEEauthorblockN{Shunqi~Huang, \emph{Graduate Student Member, IEEE}, Lei~Liu, \emph{Senior Member, IEEE}, and Brian~M.~Kurkoski, \emph{Member, IEEE}}}
\author{Shunqi~Huang,~\IEEEmembership{Graduate Student Member,~IEEE}, Lei~Liu,~\IEEEmembership{Senior Member,~IEEE}, and Brian~M.~Kurkoski,~\IEEEmembership{Member,~IEEE}%

\thanks{This paper was presented in part at the 2024 IEEE International Symposium on Information Theory (ISIT)~\cite{huang2024overflow}.}%

\thanks{Shunqi Huang and Brian~M.~Kurkoski are with School of Information Science, Japan Advanced Institute of Science and Technology (JAIST), Nomi 923-1292, Japan (e-mail: \{shunqi.huang, kurkoski\}@jaist.ac.jp).}%

\thanks{Lei Liu is with College of Information Science and Electronic Engineering, Zhejiang University (ZJU), Hangzhou
310027, China (e-mail: lei\_liu@zju.edu.cn).}%
}

\maketitle
%\begin{spacing}{1.25}

\begin{abstract}
Approximate message passing (AMP)-type algorithms are widely used for signal recovery in high-dimensional noisy linear systems. Recently, a framework called memory AMP (MAMP) was introduced, offering a new approach to incorporating memory terms within AMP algorithms. Building on this, a low-complexity gradient descent MAMP (GD-MAMP) was proposed for right-unitarily invariant matrices. In this paper, we first address an overflow problem in GD-MAMP caused by intermediate variables exceeding the floating-point range, which typically occurs when the condition number is large. Second, we propose two low-complexity variants of GD-MAMP: one replaces full-length memory with partial memory, while the other reduces the number of matrix-vector products per iteration by $1/3$ (from three to two). Neither degrades the convergence speed notably. Third, we develop a general gradient-based formulation for designing MAMP algorithms. This formulation recovers warm-started conjugate gradient VAMP (WS-CG-VAMP) as a special case. Furthermore, we show that the computation of the orthogonalization parameters in this formulation can suffer from catastrophic cancellation, which explains the finite-precision instability of WS-CG-VAMP. Finally, we derive an equivalent reformulation, termed WS-CG-VAMP(r), which reduces the number of matrix-vector products by up to $50\%$. Measured by matrix-vector products, GD-MAMP converges faster for small condition numbers, whereas WS-CG-VAMP(r) converges faster for large ones under high-precision arithmetic but may diverge in IEEE double precision due to catastrophic cancellation.
\end{abstract}

\begin{IEEEkeywords}
Approximate message passing (AMP), memory AMP (MAMP), warm-started conjugate gradient VAMP (WS-CG-VAMP), overflow avoidance, catastrophic cancellation.
\end{IEEEkeywords}

\section{Introduction}
In this paper, we focus on the signal recovery problem from noisy linear systems, i.e.,
\begin{align}\label{Eqn:system}
    \bm{y} = \bm{Ax} + \bm{n},
\end{align}
where $\bm{x} \in \mathbb{C}^{N \times 1}$ is a signal vector with independent and identically distributed (IID) entries, $\bm{y} \in \mathbb{C}^{M \times 1}$ is a known vector of observations, $\bm{A} \in \mathbb{C}^{M \times N}$ is a known measurement matrix, and $\bm{n}\!\sim\!\mathcal{CN}(\bm{0},\sigma^2\bm{I}_M)$ is a Gaussian noise vector. For general non-Gaussian priors and general matrices, finding the optimal $\bm{x}$ is NP-hard \cite{verdu1985optimum, micciancio2001hardness}.

\subsection{Background}
Approximate message passing (AMP)\cite{donoho2009message}, derived from belief-propagation (BP) with Gaussian approximations, was proposed to solve the problem in (\ref{Eqn:system}). AMP offers several advantages. First, it has low complexity, requiring only $\mathcal{O}(MN)$ computations per iteration due to its use of a matched filter (MF) estimator. Second, the mean squared error (MSE) performance of AMP can be effectively tracked through state evolution (SE) \cite{donoho2009message, bayati2011dynamics}. Furthermore, if $\bm{A}$ is IID Gaussian and the SE has a unique fixed point, AMP is minimum mean squared error (MMSE) optimal (i.e., Bayes-optimal) in an uncoded system  \cite{reeves2019replica, barbier2020mutual}, and achieves the constrained capacity of (\ref{Eqn:system}) in a coded system \cite{liu2021capacity}.

However, AMP may perform poorly or even diverge when dealing with matrices that have correlated entries \cite{manoel2014sparse, vila2015adaptive, rangan2017inference}. To overcome this limitation, orthogonal/vector AMP (OAMP/VAMP) \cite{ma2017orthogonal, rangan2019vector} was proposed for right-unitarily invariant matrices $\bm{A}$. The SE of OAMP/VAMP was proposed in \cite{ma2017orthogonal}, and later proven correct in \cite{rangan2019vector, takeuchi2020rigorous}. Crucially, when the SE has a unique fixed point, it was proven that OAMP/VAMP is replica Bayes-optimal \cite{ma2017orthogonal, rangan2019vector, takeuchi2020rigorous} and replica capacity-optimal\footnote{It is not rigorous to say that OAMP/VAMP is Bayes-optimal and capacity-optimal since the replica method was only proven to be rigorous for some sub-classes of right-unitarily invariant matrices \cite{barbier2018mutual, li2022random}.} \cite{liu2021capacity, liu2021capacity_oamp}. Unfortunately, OAMP/VAMP includes a linear MMSE (LMMSE) estimator, which results in a high per-iteration complexity $\mathcal{O}(M^2N+M^3)$. While performing SVD on $\bm{A}$ can avoid this complexity, SVD itself incurs a complexity of $\mathcal{O}(\min\{M^2N, MN^2\})$. The orthogonality principle is fundamental to OAMP/VAMP. Recently, \cite{liu2023oamp} demonstrated the generality and significance of the orthogonality principle in generic iterative processes. This work shows that AMP (of the form in \cite{takeuchi2019unified}), expectation propagation (EP) \cite{minka2013expectation, opper2005expectation, cakmak2018expectation}, and OAMP/VAMP can be unified under the orthogonality principle.
 
The high complexity of OAMP/VAMP limits its applicability to large-scale systems. To reduce this complexity, convolutional AMP (CAMP) \cite{takeuchi2021bayes} was proposed for right-unitarily invariant matrices, with a low per-iteration complexity of $\mathcal{O}(MN)$. CAMP was also shown to achieve replica Bayes-optimality when its SE converges to a unique fixed point. However, CAMP may diverge for matrices with moderate to large condition numbers. Subsequently, memory AMP (MAMP) \cite{liu2022memory} was introduced as a general framework for designing memory-based AMP-type algorithms, under which AMP, OAMP/VAMP, and CAMP can all be unified. Based on this framework, a Bayes-optimal MAMP (BO-MAMP) algorithm was developed \cite{liu2022memory}. Since BO-MAMP can be derived by using standard gradient descent (GD), we refer to it as gradient descent MAMP (GD-MAMP) in this paper. Like AMP and CAMP, GD-MAMP has a low per-iteration complexity of $\mathcal{O}(MN)$. GD-MAMP achieves replica Bayes-optimality when its SE has a unique fixed point, and its analytically optimized damping guarantees the convergence of its SE. Currently, GD-MAMP has shown its effectiveness in various wireless communication applications, including channel estimation and signal detection for different multicarrier modulation schemes \cite{ge2023low, bian2023joint, chi2024interleave, qi2025MAMP, liu2026random, wang2026channel, chi2026achievable}.

Another notable work is warm-started conjugate gradient VAMP (WS-CG-VAMP) \cite{skuratovs2022compressed, skuratovs2022warm}, which converges faster than GD-MAMP when the condition number is large. However, under finite numerical precision, the computation of its orthogonalization parameters may suffer from numerical errors, causing WS-CG-VAMP to diverge. Around the same time, a low-complexity AMP algorithm for rotationally invariant matrices, referred to as RI-AMP, was developed in \cite{fan2022approximate, venkataramanan2022estimation}. More recently, the divergence-free extension of RI-AMP, referred to as RIAMP-DF, was proposed in \cite{liu2024unifying, luo2026rotationally}. The numerical results in \cite{luo2026rotationally} indicate that GD-MAMP converges slightly more slowly than RI-AMP and RIAMP-DF. However, the optimized damping of GD-MAMP is disabled in that comparison. Since the damping is an essential component of GD-MAMP, the comparison may not fully reflect the convergence performance of these algorithms.

\subsection{Motivation}
As mentioned above, GD-MAMP \cite{liu2022memory} is a low-complexity AMP-type algorithm for right-unitarily invariant matrices. However, several unresolved problems remain, together with opportunities for further complexity reduction:
\begin{itemize}
    \item
    Some intermediate variables in GD-MAMP exhibit exponential growth with the number of iterations. These variables may exceed the representational range of floating-point numbers when the number of iterations is large. This overflow problem typically occurs when the condition number of $\bm{A}$ is large, resulting in the algorithm crashing before convergence. This problem was not clearly identified in \cite{liu2022memory} and requires a theoretical method to resolve.
    \item
    Matrix-vector products dominate the computational cost of GD-MAMP, with each product requiring $\mathcal{O}(MN)$. The original form of GD-MAMP in \cite{liu2022memory} requires four matrix-vector products, which is double that required in AMP. An interesting question is whether we can reduce the number of matrix-vector products in GD-MAMP while preserving the convergence speed. 
\end{itemize}
Beyond improving GD-MAMP itself, it is natural to seek a general gradient-based formulation for designing MAMP algorithms. Such a formulation would clarify the relationship between WS-CG-VAMP~\cite{skuratovs2022compressed} and the MAMP framework. In addition, GD-MAMP and WS-CG-VAMP are both low-complexity AMP-type algorithms for right-unitarily invariant matrices. A fair comparison of their convergence efficiency is important, and should be based on the number of matrix-vector products. Furthermore, the source of the finite-precision instability in WS-CG-VAMP remains to be clarified.

\subsection{Contributions}
In this paper, we first solve the overflow problem of GD-MAMP. Then, we develop two variants of GD-MAMP to further reduce its complexity:
\begin{itemize}
    \item
    We develop an overflow-avoiding GD-MAMP (OA-GD-MAMP) using the eigenvalues of $\bm{A}\bm{A}^{\rm H}$, which maintains equivalence to GD-MAMP while solving the overflow problem. When the eigenvalues of $\bm{A}\bm{A}^{\rm H}$ are unknown, we introduce an approximation method to develop an alternative form of OA-GD-MAMP. This alternative form achieves performance comparable to the OA-GD-MAMP with known eigenvalues.
    \item
    We reduce the number of matrix-vector products per iteration in the original GD-MAMP from four to three by eliminating redundant computations. Subsequently, we develop two variants of GD-MAMP to further reduce the complexity. The first variant, partial GD-MAMP, reduces the space complexity by eliminating the requirement to store full-length memory estimates. More significantly, the second variant, complexity-reduced GD-MAMP (CR-GD-MAMP), only requires two matrix-vector products per iteration, reducing the complexity by $1/3$.
\end{itemize}
Beyond these improvements to GD-MAMP, we develop a general
gradient-based formulation for designing MAMP algorithms and
study WS-CG-VAMP within this formulation:
\begin{itemize}
    \item We develop a general gradient-based formulation for
    designing MAMP algorithms and show that WS-CG-VAMP can be
    recovered by specializing it to the CG method. We establish Krylov-type polynomial representations for the inner iterates and provide recursive constructions of the corresponding coefficients. Based on these results, we show that the computation of the orthogonalization parameters may suffer from catastrophic cancellation. We derive a relative roundoff-error bound, thereby explaining the finite-precision instability of WS-CG-VAMP.
    \item We derive an equivalent reformulation of WS-CG-VAMP, termed WS-CG-VAMP(r), which eliminates redundant computations. Relative to a direct implementation of WS-CG-VAMP, WS-CG-VAMP(r) reduces the number of matrix--vector products per iteration from $2I+4$ to $2I+1$, where $I$ denotes the number of inner CG iterations. Using the number of matrix-vector products as the measure, our numerical results show that GD-MAMP converges faster for small condition numbers, whereas WS-CG-VAMP(r) converges faster for large condition numbers under high-precision arithmetic but may diverge in IEEE double precision due to catastrophic cancellation.
\end{itemize}
A preliminary version of part of this work was published in
\cite{huang2024overflow}.

\subsection{Notation}
Boldface lowercase and boldface uppercase symbols denote vectors and matrices respectively. A matrix is called column-wise IID Gaussian and row-wise joint Gaussian if each column is IID Gaussian and each row is joint Gaussian. For $N \in \mathbb{N}^*$, $[N]$ denotes the set $\{1, \cdots\!, N\}$. The notation $\bm{V} \equiv [v_{i,j}]_{N \times N}$ represents an $N \times N$ matrix $\bm{V}$ with $(i,j)$-th entry $v_{i,j}$. For $c \in \mathbb{C}$, $c^{*}$ denotes the conjugate of $c$. For vectors or matrices, $(\cdot)^{\rm T}$ and $(\cdot)^{\rm H}$ denote the transpose and Hermitian transpose, respectively. The notation $\overset{\rm a.s.}{=}$ denotes almost sure convergence. Unless specified otherwise, as an algorithmic step, $a \overset{\rm a.s.}{=} \lim_{N \to \infty} b$ can be simplified to $a = b$.

\section{Preliminaries}
In this section, we begin with introducing the problem model and outlining the underlying assumptions. Then, we briefly review the MAMP framework and the GD-MAMP algorithm.

\subsection{Problem Formulation and Assumptions}\label{Sec:pre1}
We rewrite the linear system in (\ref{Eqn:system}) with two constraints as   
\BS\label{Eqn:system2}\begin{align}
&\Gamma: \quad  \bm{y}=\bm{Ax}+\bm{n},\\
&\Phi: \quad x_i\sim P_x, \;\;\forall i.
\end{align}\ES 
The MMSE of the system in (\ref{Eqn:system2}) is defined as
\begin{align}\label{Eqn:mmse}
    {\rm mmse}\{\bm{x}|\bm{y}, \bm{A}, \Gamma, \Phi\} \equiv \tfrac{1}{N}{\mr{E}}\{\|\hat{\bm{x}}_{\rm post}-{\bm{x}}\|^2\},
\end{align}
where $\hat{\bm{x}}_{\rm post}={\mr{E}}\{\bm{x}|\bm{y}, \bm{A}, {\Gamma}, {\Phi} \}$ is the posterior mean of $\bm{x}$. An iterative algorithm is said to be Bayes-optimal if its MSE converges to the MMSE in (\ref{Eqn:mmse}) \cite{kay1993fundamentals}. 

The assumptions of this paper are as follows:
\begin{assumption} \label{Assum:1}
    The entries of $\bm{x}$ are IID with zero mean and normalized variance, i.e., ${\rm E}\{\bm{x}\}=\bm{0}$ and $\tfrac{1}{N}{\rm E}\{\|\bm{x}\|^2\}=1$. Furthermore, the $(2+i)$-th moments of $\bm{x}$ are finite for some $i \in \mathbb{N}^*$.
\end{assumption}
\begin{assumption} \label{Assum:2}
    The large-system limit is assumed, where $M, N \to \infty$ with a fixed $\delta=M/N \in(0,\infty)$. In addition, $\bm{A}$ is a right-unitarily invariant matrix, meaning that in its singular value decomposition $\bm{A} = \bm{U\Sigma}\bm{V}^{\rm H}$, $\bm{V}$ is Haar distributed and independent of $\bm{U}$ and $ \bm{\Sigma}$.
\end{assumption}
Note that we consider the conventional proportional-growth regime, where $M,N\to\infty$ with $M/N=\delta$, and assume an $N$-independent i.i.d.\ signal prior. This differs from the sublinear-sparsity regime recently studied in
\cite{takeuchi2025generalized,takeuchi2025generalized_2,takeuchi2026direct}, which is outside the scope of this paper.

Unless specified otherwise, we assume that the extremal eigenvalues $\lambda_{\rm max}$ and $\lambda_{\rm min}$ of $\bm{A}\bm{A}^{\rm H}$ are known. In addition, its moments, $\lambda_t \equiv \tfrac{1}{N}{\rm tr}\{(\bm{A}\bm{A}^{\rm H})^t\}$ for $t \in [T']$ with a sufficiently large $T'$, are also known. When these quantities are unavailable, the approximation method in \cite{liu2022memory} can be employed:
\begin{itemize}
\item{$\{\lambda_t\}$ can be approximated by
\begin{align}\label{Eqn:lam}
    \lambda_{t}  \overset{\mr{a.s.}}{=} \lim_{N\to \infty} \|\bm{s}_{t}\|^2,
\end{align}
where $\bm{s}_t$ is given by a recursion: starting with $t=1$ and $\bm{s}_0\sim \mathcal{CN}(\bm{0},\tfrac{1}{N}\bm{I}_{N})$,
\begin{align}\label{Eqn:app_s}
    \bm{s}_t = 
    \begin{cases}
        \bm{A}\bm{s}_{t-1}, & \ {\rm if\;} t {\rm \;is\; odd} \\[1mm]
        \bm{A}^{\rm H} \bm{s}_{t-1}, & \ {\rm if\;} t {\rm \;is\; even} 
    \end{cases}.
\end{align}
}
\item{$\lambda_{\rm max}$ and $\lambda_{\rm min}$ can be replaced respectively by $\lambda_{\rm max}^{\rm up}$ and $\lambda_{\rm min}^{\rm low}$, where 
\begin{align}\label{Eqn:max_min}
    \lambda_{\rm max}^{\rm up} = (N \lambda_\tau)^{1/\tau},\ \lambda_{\rm min}^{\rm low} = 0.
\end{align}
The upper bound $\lambda_{\rm max}^{\rm up}$ becomes tighter as $\tau$ increases.
}
\end{itemize}

\subsection{Memory AMP}
Memory AMP (MAMP) \cite{liu2022memory} is a general algorithmic framework for designing AMP-type algorithms with memory-based estimators. In many existing works, the term ``MAMP'' specifically refers to the GD-MAMP algorithm \cite{liu2022memory}, which is reviewed in Section \ref{Sec:GD-MAMP}. To avoid confusion, we explicitly distinguish between the general MAMP framework and specific algorithms throughout this paper.
\begin{definition}[Memory AMP \cite{liu2022memory}]\label{def:MAMP}
An iterative process is said to be a memory AMP (MAMP) algorithm if it consists of a memory linear estimator (MLE) and a memory nonlinear estimator (MNLE): starting with iteration index $t=1$, $\bm{x}_1 = {\rm E}\{\bm{x}\} = \bm{0}$,
\BS \label{Eqn:gen_MAMP}
\begin{alignat}{2}
    {\rm MLE:} && \quad \bm{r}_t &= \gamma_t(\bm{X}_t) = \bm{Q}_t\bm{y} + \textstyle\sum_{i=1}^t{\bm{P}}_{t,i} \bm{x}_i, \label{Eqn:gen_MLE} \\
    {\rm MNLE:} && \quad \bm{x}_{t+1} &= \phi_t(\bm{R}_t), \label{Eqn:gen_MNLE}
\end{alignat}
\ES
where $\bm{X}_t = [\bm{x}_1, \cdots\!, \bm{x}_t]$, $\bm{R}_t = [\bm{r}_1, \cdots\!, \bm{r}_t]$, and 
\begin{itemize}
    \item Let $\bm{f}_t = \bm{x}_t - \bm{x}$ and $\bm{g}_t = \bm{r}_t - \bm{x}$ denote the error vectors of the estimates $\bm{x}_t$ and $\bm{r}_t$. The following orthogonality constraints hold:
    for $k \in [t]$,
    \BS\label{Eqn:orth}\begin{align}
        &\lim_{N\to\infty} \tfrac{1}{N} \bm{g}_t^{\rm H}\bm{f}_{k} \overset{\rm a.s.}{=} 0, \label{Eqn:orth_1} \\
        &\lim_{N\to\infty} \tfrac{1}{N} \bm{g}_t^{\rm H}\bm{x} \overset{\rm a.s.}{=} 0, \label{Eqn:orth_2} \\
        &\lim_{N\to\infty} \tfrac{1}{N} \bm{f}_{t+1}^{\rm H}\bm{g}_{k} \overset{\rm a.s.}{=} 0. \label{Eqn:orth_3}
    \end{align}\ES
    An MLE is said to be orthogonal if \eqref{Eqn:orth_1} and \eqref{Eqn:orth_2} hold, and an MNLE is said to be orthogonal if \eqref{Eqn:orth_3} holds. In other words, a MAMP algorithm consists of an orthogonal MLE and an orthogonal MNLE.
    \item $\bm{Q}_t\bm{A}$ and $\bm{P}_{t,i}$ are polynomials in $\bm{A}^{\rm H} \bm{A}$, i.e., $\bm{Q}_t\bm{A} = \sum_{k\geq1}\alpha_{t,k}(\bm{A}^{\rm H}\bm{A})^k$ and $\bm{P}_{t,i} = \sum_{k\geq0}\beta_{t,k}(\bm{A}^{\rm H}\bm{A})^k$, where $\{\alpha_{t,k}\}, \{\beta_{t,k}\}$ are real-valued coefficients. To enforce the orthogonality constraints in \eqref{Eqn:orth_1} and \eqref{Eqn:orth_2}, the following conditions are required:
    \BS\begin{align}
        & \tfrac{1}{N} {\rm tr} \{\bm{Q}_t\bm{A} \} =1, \label{Eqn:Q}  \\ 
        & {\rm tr} \{ \bm{P}_{t, i} \} =0,\  \forall i \leq t, \label{Eqn:P}
    \end{align}\ES
    Without loss of generality, we assume that the spectral norms of $\bm{Q}_t$ and $\bm{P}_{t,i}$ are finite. Hence, $\gamma_t(\cdot)$ is Lipschitz-continuous \cite{liu2022memory}. 
\end{itemize}
\end{definition}

For $ i,j \in [t]$, we define the normalized cross-error covariances (or the normalized error variance when $i=j$) as
\BS\begin{align}
    v^{{\phi}}_{i,j} &\equiv \tfrac{1}{N}{\mr E} \{\bm{f}_{i}^{\mr H} \bm{f}_{j}\}, \\
    v^{\gamma}_{i,j} &\equiv \tfrac{1}{N}{\mr E} \{\bm{g}_{i}^{\mr H} \bm{g}_{j}\},
\end{align}\ES
where $\bm{f}_i = \bm{x}_i - \bm{x}$ and $\bm{g}_i = \bm{r}_i - \bm{x}$ are the error vectors. These terms are collected into the error covariance matrices $\bm{V}^{\phi}_{t} \equiv [v^{{\phi}}_{i,j}]_{t \times t}$ and $\bm{V}^{\gamma}_{t} \equiv [v^{{\gamma}}_{i,j}]_{t \times t}$. For simplicity, in the remainder of this paper, we refer to $v_{i, j}^{\phi}$ as the covariance between $\bm{x}_i$ and $\bm{x}_j$, and $v_{t, t}^{\phi}$ as the variance of $\bm{x}_t$. The following lemma shows the asymptotic Gaussianity of the error vectors in a MAMP algorithm.

\begin{lemma}[Asymptotic Gaussianity \cite{takeuchi2021bayes, takeuchi2019unified}]\label{Lemma:AIIDG}
    Suppose that Assumptions \ref{Assum:1}-\ref{Assum:2} hold. For a MAMP in \eqref{Eqn:gen_MAMP}, if the MLE $\gamma_t(\cdot)$ is Lipschitz-continuous, and the MNLE $\phi_t(\cdot)$ is separable and Lipschitz-continuous, then for $k \in [t]$,
    \BS\begin{flalign}
        \lim_{N\to\infty} v_{t,k}^{\gamma} &\overset{\rm a.s.}{=} \tfrac{1}{N}\mr{E}\big\{[\gamma_t(\bm{X}+\bm{N}_t^\phi)-\bm{x}]^{\rm H} [\gamma_{k}(\bm{X}+\bm{N}_k^\phi)-\bm{x}]\big\}, \\
        \lim_{N\to\infty} v_{t+1,k+1}^{\phi} &\overset{\rm a.s.}{=} \tfrac{1}{N}\mr{E}\big\{[\phi_t(\bm{X}+\bm{N}_t^\gamma)-\bm{x}]^{\rm H} [\phi_{k}(\bm{X}+\bm{N}_k^\gamma)-\bm{x}]\big\},
    \end{flalign}\ES
    where $\bm{X} = [\bm{x}, \cdots\!,\bm{x}]$ has $t$ columns, 
    $\bm{N}_t^\phi = [\bm{n}_1^\phi, \cdots\!, \bm{n}_t^\phi]$ and 
    $\bm{N}_t^\gamma =[\bm{n}_1^\gamma, \cdots\!, \bm{n}_t^\gamma]$ are independent of ${\bm{n}}$ and ${\bm{x}}$, with IID Gaussian columns and jointly Gaussian rows satisfying
    \BS\begin{align}
        &\bm{n}_t^\phi \sim \mathcal{CN}(\bm{0}, v_{t,t}^\phi\bm{I}), \  {\rm E}\{\bm{n}_t^\phi(\bm{n}_k^\phi)^{\rm H}\} = v_{k,t}^\phi\bm{I}, \\
        &\bm{n}_t^\gamma \sim \mathcal{CN}(\bm{0}, v_{t,t}^{\gamma}\bm{I}), \  {\rm E}\{\bm{n}_t^\gamma(\bm{n}_k^\gamma)^{\rm H}\} = v_{k,t}^{\gamma}\bm{I}.
    \end{align}\ES
\end{lemma}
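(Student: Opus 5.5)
The argument reduces the MAMP iteration to an instance of the general long-memory orthogonal AMP framework of Takeuchi \cite{takeuchi2019unified,takeuchi2021bayes}, whose state evolution is proven by conditioning on the Haar matrix $\bm{V}$. The Gaussianity statements of Lemma \ref{Lemma:AIIDG} are then read off from that state evolution.

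\textbf{Step 1: change of basis.} Write $\bm{A}=\bm{U\Sigma V}^{\rm H}$. Since $\bm{Q}_t\bm{A}$ and $\bm{P}_{t,i}$ are polynomials in $\bm{A}^{\rm H}\bm{A}=\bm{V}\bm{\Sigma}^{\rm H}\bm{\Sigma}\bm{V}^{\rm H}$, the MLE error becomes
\begin{align}
\bm{g}_t=\bm{Q}_t\bm{n}+\bm{V}\Big[(\bm{D}_{t}-\bm{I})\bm{V}^{\rm H}\bm{x}+\textstyle\sum_i \bm{D}_{t,i}\bm{V}^{\rm H}(\bm{x}+\bm{f}_i)\Big],
\end{align}
where the $\bm{D}$'s are diagonal functions of $\bm{\Sigma}$. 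Using $\sum_i \bm{P}_{t,i}\bm{x}$ together with $\bm{Q}_t\bm{A}\bm{x}$ and regrouping, $\bm{g}_t$ is a linear combination of $\bm{V}\bm{D}\bm{V}^{\rm H}\bm{f}_i$, of $\bm{V}\bm{D}'\bm{V}^{\rm H}\bm{x}$ with a traceless effective $\bm{D}'$, and of a noise term. By the normalizations \eqref{Eqn:Q}--\eqref{Eqn:P}, every diagonal factor multiplying $\bm{V}^{\rm H}\bm{x}$ or $\bm{V}^{\rm H}\bm{f}_i$ has zero normalized trace. This trace-free property is exactly the hypothesis of Takeuchi's unified framework.

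\textbf{Step 2: map to the unified framework and invoke it.} The iteration now alternates between the MNLE $\phi_t$ and linear maps $\bm{V}\bm{D}\bm{V}^{\rm H}$ with trace-free $\bm{D}$. The MNLE is separable and Lipschitz with memory; its orthogonality \eqref{Eqn:orth_3} corresponds to the divergence-free/Onsager condition in that framework. The MLE is Lipschitz because the spectral norms are finite. Under Assumptions \ref{Assum:1}--\ref{Assum:2}, the framework's theorem, proven by the conditioning technique on Haar $\bm{V}$ with a general error model, yields two facts. First, $\bm{V}^{\rm H}\bm{f}_1,\dots,\bm{V}^{\rm H}\bm{f}_t$ behave jointly like Gaussian vectors with covariance $\bm{V}^\phi_t$, independent of $\bm{\Sigma}$ and $\bm{U}^{\rm H}\bm{n}$. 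Second, $\bm{g}_1,\dots,\bm{g}_t$ behave like jointly Gaussian vectors with covariance $\bm{V}^\gamma_t$, independent of $\bm{x}$. The finite $(2+i)$-th moment condition supplies the pseudo-Lipschitz convergence needed for these statements.

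\textbf{Step 3: read off the two identities.} For the first identity, because of the trace-free structure, the inner products $\tfrac1N\bm{g}_t^{\rm H}\bm{g}_k$ depend asymptotically only on $\bm{\Sigma}$, $\bm{V}^\phi_t$ and $\sigma^2$. Replacing $\bm{f}_i$ by Gaussian $\bm{n}^\phi_i$ (independent of $\bm{x}$ and $\bm{n}$) in $\gamma_t(\bm{X}+\bm{N}^\phi_t)$ gives the same limit. For the second identity, Gaussianity of $\bm{g}_k$ with covariance $\bm{V}^\gamma$, independent of $\bm{x}$, lets us apply the strong law for separable pseudo-Lipschitz functions to $\phi_t(\bm{X}+\bm{N}^\gamma_t)$.

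\textbf{Main obstacle.} The hardest part is Step 2: verifying that the memory MLE, with arbitrary polynomial coefficients summed over all past $\bm{x}_i$, fits the framework's hypotheses exactly. In particular, the $\bm{x}$-dependent term must combine into a trace-free operator. I would first try to absorb $\bm{x}$ by rewriting $\bm{x}_i=\bm{x}+\bm{f}_i$ and using $\tfrac1N{\rm tr}\{\bm{Q}_t\bm{A}\}+\sum_i\tfrac1N{\rm tr}\{\bm{P}_{t,i}\}=1$, which makes the coefficient of $\bm{x}$ equal to $\bm{Q}_t\bm{A}+\sum_i\bm{P}_{t,i}-\bm{I}$. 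This coefficient is trace-free by \eqref{Eqn:Q}--\eqref{Eqn:P}. If a direct match fails, the fallback is to rerun the conditioning induction of \cite{takeuchi2019unified} with this augmented memory.
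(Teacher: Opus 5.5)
\paragraph{Verdict: genuine gap in Step~3.} The paper gives no proof of this lemma; it imports it from \cite{takeuchi2021bayes, takeuchi2019unified}. So your reduction to Takeuchi's general error model, via conditioning on the Haar matrix $\bm{V}$, is the implied route. Steps~1--2 and the second identity are fine. The gap is in Step~3, and in your ``main obstacle'' paragraph.

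\paragraph{Why trace-freeness is not enough.} You argue that because the $\bm{x}$-coefficient $\bm{M}_t\equiv\bm{Q}_t\bm{A}+\sum_i\bm{P}_{t,i}-\bm{I}$ is trace-free, $\frac{1}{N}\bm{g}_t^{\rm H}\bm{g}_k$ depends only on $\bm{\Sigma}$, $\bm{V}^{\phi}_t$ and $\sigma^2$. Trace-freeness removes terms that are linear in a single operator; this is why \eqref{Eqn:orth_1}--\eqref{Eqn:orth_2} hold. It does not remove mixed second-order terms. Write $\bm{g}_t=\bm{Q}_t\bm{n}+\bm{M}_t\bm{x}+\sum_i\bm{P}_{t,i}\bm{f}_i$. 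The state evolution you invoke in Step~2 makes $\bm{V}^{\rm H}\bm{x}$ and $\bm{V}^{\rm H}\bm{f}_i$ jointly Gaussian with cross-covariance $c_i=\lim\frac{1}{N}\bm{x}^{\rm H}\bm{f}_i$. The true limit is therefore your Gaussian expression plus
\[
\textstyle\sum_i \tfrac{1}{N}{\rm tr}\{\bm{M}_t^{\rm H}\bm{P}_{k,i}\}\,c_i+\sum_i \tfrac{1}{N}{\rm tr}\{\bm{P}_{t,i}^{\rm H}\bm{M}_k\}\,c_i^{*}.
\]
The surrogate with $\bm{N}^{\phi}_t$ independent of $\bm{x}$ sets every $c_i=0$. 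These cross-covariances are not zero: $\bm{f}_1=-\bm{x}$ gives $c_1=-1$, and \eqref{Eqn:orth_3} makes $\bm{f}_{t+1}$ orthogonal to the $\bm{g}_k$, not to $\bm{x}$.

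\paragraph{Counterexample.} Take $t=1$, $\bm{Q}_1=\bm{A}^{\rm H}/\lambda_1$ and $\bm{P}_{1,1}=\beta(\bm{A}^{\rm H}\bm{A}-\lambda_1\bm{I})$; these satisfy \eqref{Eqn:Q}--\eqref{Eqn:P}. Since $\bm{x}_1=\bm{0}$, the true value $v^{\gamma}_{1,1}=\sigma^2/\lambda_1+(\lambda_2-\lambda_1^2)/\lambda_1^2$ does not depend on $\beta$. The right-hand side of the lemma exceeds it by $2\beta(\beta+1/\lambda_1)(\lambda_2-\lambda_1^2)$. So for a general MLE of Definition~\ref{def:MAMP} the first identity is false as stated, and no argument can establish it.

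\paragraph{What is missing and how to repair it.} The needed structural condition is $\bm{M}_t=\bm{0}$, i.e., $\bm{Q}_t\bm{A}+\sum_i\bm{P}_{t,i}=\bm{I}$.
\begin{itemize}
\item This holds for every residual-form MLE that Lemma~\ref{Lemma:MAMP} produces from a $\hat{\gamma}_t$ with $\hat{\bm{Q}}_t\bm{A}+\sum_i\hat{\bm{P}}_{t,i}=\bm{0}$, as in \eqref{Eqn:Q_P}.
\item That covers GD-MAMP, where $\varepsilon^{\gamma}_t=\sum_i p_{t,i}$, and the gradient formulation; Appendix~\ref{App:Var} shows the $\bm{x}$-term cancelling there.
\item Under this condition $\bm{g}_t=\bm{Q}_t\bm{n}+\sum_i\bm{P}_{t,i}\bm{f}_i$, and your Step~3 goes through.
\item For an arbitrary MLE, you can instead use $\bm{x}=-\bm{f}_1$ to replace $\bm{P}_{t,1}$ by $\bm{P}_{t,1}-\bm{M}_t$. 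This leaves the iterates unchanged. The lemma then holds only for that representative of $\gamma_t$, because its right-hand side depends on the representation.
\end{itemize}

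\paragraph{A smaller point.} Takeuchi's framework needs the divergence-free condition on $\phi_t$ as a function, which Lemma~\ref{Lemma:MAMP} supplies. Via Stein's lemma, \eqref{Eqn:orth_3} only gives $\bm{V}^{\gamma}_t\bm{p}^{\phi}_t=\bm{0}$, and only after Gaussianity is already known. So assume the derivative condition directly rather than deriving it from \eqref{Eqn:orth_3}.
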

Note that Lemma \ref{Lemma:AIIDG} does not state that the error matrices $\bm{F}_t = [\bm{f}_1, \cdots\!, \bm{f}_t]$ and $\bm{G}_t = [\bm{g}_1, \cdots\!, \bm{g}_t]$ are themselves Gaussian, as such a claim would be too strong. Instead, it establishes an asymptotic equivalence: for the specific purpose of computing the output variances or covariances ($v^{\gamma}$ and $v^{\phi}$), the true error matrices $\bm{F}_t$ and $\bm{G}_t$ behave as if they were drawn from the specified Gaussian distributions. In other words,
$\bm{F}_t$ and $\bm{G}_t$ can be replaced by the column-wise IID and row-wise jointly Gaussian matrices $\bm{N}_t^\phi$ and $\bm{N}_t^\gamma$, respectively. This property enables a deterministic and simplified state evolution analysis to track the performance of a MAMP algorithm.

In practice, we often start with general MLEs or MNLEs that may not satisfy the orthogonality constraints in \eqref{Eqn:orth}. The following lemma presents an orthogonalization procedure to construct a MAMP from such general estimators.
\begin{lemma}[MAMP Construction \cite{liu2022memory}]\label{Lemma:MAMP}
    Given a general MLE:
    \begin{align}
        \hat{\gamma}_t(\bm{X}_t) = \hat{\bm{Q}}_t\bm{y} + \textstyle\sum_{i=1}^t\hat{\bm{P}}_{t,i} \bm{x}_i,
    \end{align}
    where $\hat{\bm{Q}}_t\bm{A}$ and $\hat{\bm{P}}_{t,i}$ are polynomials in $\bm{A}^{\rm H} \bm{A}$, and an arbitrary differentiable, separable, and Lipschitz-continuous MNLE $\hat{\phi}_t(\cdot)$, then, we can construct an orthogonal MLE $\gamma_t(\cdot)$ and the orthogonal MNLE $\phi_t(\cdot)$ as follows:
    \BS
    \begin{align}
        \gamma_t(\bm{X}_t) &= \tfrac{1}{\varepsilon_t^{\gamma}} \big[\hat{\gamma}_t(\bm{X}_t)-\bm{X}_t\bm{p}^{\gamma}_t\big], \\
        \phi_t(\bm{R}_t) &= \tfrac{1}{\varepsilon_t^{\phi}} \big[\hat{\phi}_t(\bm{R}_t)-\bm{R}_t \bm{p}^{\phi}_t\big],
    \end{align}
    where 
    \begin{align}
        \varepsilon_t^{\gamma} &= \tfrac{1}{N}{\rm tr}\{\hat{\bm{Q}}_t\bm{A}\}, \\
        \bm{p}^{\gamma}_t &= \big[\tfrac{1}{N}{\rm tr}\{\hat{\bm{P}}_{t,1}\}, \cdots\!, \tfrac{1}{N}{\rm tr}\{\hat{\bm{P}}_{t,t}\}\big]^{\rm T}, \\
        \bm{p}^{\phi}_t &= \big[{\rm E}\{\tfrac{\partial \hat{\phi}_t }{\partial r_1}\}, \cdots\!, {\rm E}\{\tfrac{\partial \hat{\phi}_t}{\partial r_t}\} \big]^{\rm T},
    \end{align}
    \ES
    and $\varepsilon_t^{\phi}$ is a scaling constant, typically determined by minimizing the MSE of $\phi_t(\cdot)$.
\end{lemma}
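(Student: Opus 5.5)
The plan is to check, one at a time, that the constructed $\gamma_t$ satisfies the orthogonality constraints \eqref{Eqn:orth_1}--\eqref{Eqn:orth_2} and that the constructed $\phi_t$ satisfies \eqref{Eqn:orth_3}. For the MLE I will use the structural conditions \eqref{Eqn:Q}--\eqref{Eqn:P} in Definition~\ref{def:MAMP}. For the MNLE I will use Lemma~\ref{Lemma:AIIDG} together with Stein's lemma.

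\textbf{MLE part.} First write $\gamma_t$ in the MAMP form:
\begin{align*}
\bm{Q}_t = \hat{\bm{Q}}_t/\varepsilon_t^\gamma, \qquad \bm{P}_{t,i} = \big(\hat{\bm{P}}_{t,i} - \tfrac{1}{N}{\rm tr}\{\hat{\bm{P}}_{t,i}\}\bm{I}\big)/\varepsilon_t^\gamma .
\end{align*}
Both are still polynomials in $\bm{A}^{\rm H}\bm{A}$, and by construction $\tfrac1N{\rm tr}\{\bm{Q}_t\bm{A}\}=1$ and ${\rm tr}\{\bm{P}_{t,i}\}=0$.

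Next substitute $\bm{y}=\bm{Ax}+\bm{n}$ and $\bm{x}_i=\bm{x}+\bm{f}_i$. The error becomes
\begin{align*}
\bm{g}_t = (\bm{Q}_t\bm{A}-\bm{I})\bm{x} + \bm{Q}_t\bm{n} + \textstyle\sum_i \bm{P}_{t,i}\bm{x} + \textstyle\sum_i \bm{P}_{t,i}\bm{f}_i .
\end{align*}
Every matrix acting on $\bm{x}$ or on $\bm{f}_i$ here is a polynomial $\bm{V}p(\bm{\Sigma}^{\rm H}\bm{\Sigma})\bm{V}^{\rm H}$ with zero normalized trace, by \eqref{Eqn:Q}--\eqref{Eqn:P}.

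To get $\tfrac1N\bm{g}_t^{\rm H}\bm{f}_k\to0$ and $\tfrac1N\bm{g}_t^{\rm H}\bm{x}\to0$, I will use that $\bm{V}$ is Haar and independent of $\bm{x}$ and $\bm{n}$. I will invoke the same conditioning / asymptotic-freeness argument that underlies Lemma~\ref{Lemma:AIIDG} (as in \cite{takeuchi2021bayes,takeuchi2019unified}). It gives
\begin{align*}
\tfrac1N\bm{a}^{\rm H}\bm{V}D\bm{V}^{\rm H}\bm{b} \to \tfrac1N{\rm tr}\{D\}\cdot\tfrac1N\bm{a}^{\rm H}\bm{b}
\end{align*}
for vectors $\bm{a},\bm{b}$ in the span generated by past iterates. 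The trace-free terms therefore vanish, and the noise cross-terms vanish because $\bm{n}$ is independent and zero mean.

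\textbf{MNLE part.} By Lemma~\ref{Lemma:AIIDG}, I may replace $\bm{R}_t$ by $\bm{X}+\bm{N}_t^\gamma$ with jointly Gaussian rows, independent of $\bm{x}$. Then
\begin{align*}
\tfrac1N\bm{f}_{t+1}^{\rm H}\bm{g}_k \to \tfrac{1}{\varepsilon_t^\phi}{\rm E}\big\{\big(\hat\phi_t - \textstyle\sum_i p_i r_i - x\,\varepsilon_t^\phi\big)^* n_k^\gamma\big\}.
\end{align*}
Each piece is handled separately:
\begin{itemize}
\item The $x$ term vanishes because ${\rm E}\{x^* n^\gamma_k\}=0$, since $\bm{N}_t^\gamma$ is independent of $\bm{x}$.
\item The term with $r_i=x+n_i^\gamma$ gives $\sum_i p_i v^\gamma_{i,k}$.
\item Stein's lemma, applied conditionally on $x$ to the jointly Gaussian vector $(n_1^\gamma,\dots,n_t^\gamma)$, gives ${\rm E}\{\hat\phi_t^* n_k^\gamma\}=\sum_i {\rm E}\{\partial\hat\phi_t/\partial r_i\}^* v^\gamma_{i,k}$, up to conjugation conventions in the complex case.
\end{itemize}
With $p_i={\rm E}\{\partial\hat\phi_t/\partial r_i\}$ the two surviving sums cancel, so \eqref{Eqn:orth_3} holds for every $k\le t$ and for any $\varepsilon_t^\phi$. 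Differentiability and Lipschitz continuity justify Stein's lemma, and Lipschitz continuity is preserved by the affine correction.

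\textbf{Main obstacle.} The delicate step is the MLE argument: making rigorous that the trace-free Haar-conjugated polynomials decorrelate from the $\bm{f}_k$. The $\bm{f}_k$ depend on $\bm{V}$ through past iterations, so naive independence fails. I would handle this by appealing to the conditioning technique of \cite{takeuchi2019unified}, which the paper already relies on in Lemma~\ref{Lemma:AIIDG}, rather than rederiving it.
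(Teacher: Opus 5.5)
The paper does not prove this lemma; it imports it from \cite{liu2022memory}. So your argument has to be judged on its own.

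Your MNLE half is sound. Once $\bm R_t$ may be replaced by $\bm X+\bm N_t^\gamma$, Stein's lemma (with Wirtinger derivatives and circular symmetry in the complex case) cancels the $\hat\phi_t$ and $\bm R_t\bm p_t^\phi$ contributions for any nonzero $\varepsilon_t^\phi$.

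The gap is in treating the two halves as independent, one-at-a-time checks. Orthogonality of $\gamma_t$ is not a property of $\gamma_t$ alone. It depends on the inputs $\bm x_2,\dots,\bm x_t$ having been produced by orthogonal MNLEs. For example, suppose $\phi_1$ were the identity and $\gamma_1(\bm X_1)=\gamma_2(\bm X_2)=\bm Q\bm y$ with $\tfrac1N{\rm tr}\{\bm Q\bm A\}=1$. Both trace conditions hold, yet $\bm f_2=\bm g_1=\bm g_2$, so $\tfrac1N\bm g_2^{\rm H}\bm f_2\to v^\gamma_{1,1}\neq 0$.

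The same example refutes your justification of the noise cross-terms. For $k\geq 2$, $\bm f_k$ depends on $\bm n$ through $\bm y$, and here $\tfrac1N\bm f_2^{\rm H}\bm Q\bm n\to\sigma^2\tfrac1N{\rm tr}\{\bm Q^{\rm H}\bm Q\}\neq 0$.

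Likewise, the identity $\tfrac1N\bm a^{\rm H}\bm V D\bm V^{\rm H}\bm b\to\tfrac1N{\rm tr}\{D\}\,\tfrac1N\bm a^{\rm H}\bm b$ does not hold for arbitrary vectors ``in the span generated by past iterates''; it fails, e.g., for $\bm a=\bm b=\bm g_1$. It holds only for $\bm a,\bm b\in\{\bm x,\bm f_1,\dots,\bm f_t\}$, and only because the earlier MNLEs were orthogonal (Onsager-free). That is what makes $\bm V^{\rm H}[\bm x,\bm f_1,\dots,\bm f_t]$ asymptotically jointly Gaussian and independent of $\bm\Sigma$ and $\bm U^{\rm H}\bm n$ in the conditioning argument you defer to.

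What is missing is a joint induction on $t$. It is also what makes your use of Lemma~\ref{Lemma:AIIDG} non-circular in both halves, since that lemma is stated for a process that is already a MAMP.

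\textbf{Inductive hypothesis.} Assume $\gamma_\tau$ and $\phi_\tau$ are orthogonal for all $\tau<t$.

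\textbf{MLE step.} The conditioning argument then gives the SVD-domain Gaussianity above. It yields $\tfrac1N\bm f_k^{\rm H}\bm P_{t,i}\bm f_i\to\tfrac1N{\rm tr}\{\bm P_{t,i}\}\,v^\phi_{k,i}=0$, the analogous vanishing for the trace-free matrix acting on $\bm x$, and $\tfrac1N\bm f_k^{\rm H}\bm Q_t\bm n\to 0$. Hence $\gamma_t$ is orthogonal.

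\textbf{MNLE step.} Now $\gamma_1,\dots,\gamma_t$ and $\phi_1,\dots,\phi_{t-1}$ are all orthogonal, so $\bm G_t$ is asymptotically Gaussian. Your Stein computation then shows $\phi_t$ is orthogonal, which closes the induction.

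With this restructuring, and the noise step repaired, your route is the standard one.
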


\subsection{Gradient Descent MAMP}\label{Sec:GD-MAMP}
A low-complexity Bayes-optimal MAMP algorithm was developed in \cite{liu2022memory}. It is referred to as gradient descent MAMP (GD-MAMP) in this paper, since it can be derived using a scaled gradient descent method.

\textbf{\emph{GD-MAMP Algorithm}} 
\cite{liu2022memory}: Let ${\lambda}^\dag = ( \lambda_{\max} + \lambda_{\min}) / 2$ and $\bm{B} = \lambda^\dag\bm{I} - \bm{A}\bm{A}^{\mr H}$. Starting with $t=1$, $\bm{u}_{0} = \bm{0}_M$, $\bm{x}_1={\rm E}\{\bm{x}\}=\bm{0}_N$,
\BS\label{Eqn:GD-MAMP}
\begin{alignat}{2}
    {\rm MLE:} && \bm{u}_{t} &= \theta_t \bm{B} \bm{u}_{t-1} + \xi_t(\bm{y} - \bm{A}\bm{x}_t), \label{Eqn:MLE1}\\
    && \quad \quad\bm{r}_t &=\gamma_t (\bm{X}_t) = \tfrac{1}{{\varepsilon}^\gamma_t}\big(\bm{A}^{\mr H}\bm{u}_{t} + \textstyle\sum_{i=1}^t  p_{t, i}\bm{x}_i \big), \label{Eqn:MLE2}\\
    {\rm NLE:} && \quad  \bm{x}_{t + 1} &= \bar\phi_t(\bm{r}_t) = \big[\bm{x}_1, \cdots\!, \bm{x}_t, \phi_t ( \bm{r}_t)\big] \cdot \bm{\zeta}_{t+1}. \label{Eqn:NLE}
\end{alignat}\ES
The details of the MLE are as follows:
\begin{enumerate}
    \item
    The relaxation parameter $\theta_t$ is optimized by
    \begin{align}\label{Eqn:theta}
        \theta_t = (\lambda^\dag + \rho_t)^{-1}.
    \end{align}
    where $\rho_t = \sigma^2 / v_{t,t}^{\bar{\phi}}$ with $v_{1,1}^{\bar{\phi}} = (\tfrac{1}{N} \|\bm{y}-\bm{A}\bm{x}_1\|^2 - \delta \sigma^2)/w_0$ and $v_{t,t}^{\bar{\phi}}$ for $t \geq 2$ given in (\ref{Eqn:v_phi_bar}).
    \item
    For $i \geq 0$, $j \geq 0$, let\footnote{In practice, we do not compute $b_i$ as in \eqref{Eqn:b}. Instead, if the eigenvalues of $\bm{AA}^{\rm H}$ are known, we use $b_i = \tfrac{1}{N}\sum_{k=1}^M \hat{\lambda}_k^i$, where $\{\hat{\lambda}_k\}$ are the eigenvalues of $\bm{B} = \lambda^\dag\bm{I} - \bm{A}\bm{A}^{\mr H}$; otherwise, we approximate $b_i$ in a manner analogous to \eqref{Eqn:app_s}.}
    \begin{align}
        b_i &\equiv \tfrac{1}{N}{\rm tr}\{\bm{B}^i\} = \textstyle\sum_{k=0}^{i} \binom{i}{k} (-1)^k(\lambda^\dag)^{i-k}\lambda_k, \label{Eqn:b} \\
        w_i &\equiv  \tfrac{1}{N}{\rm tr}\{\bm{A}^{\rm H}\bm{B}^{i}\bm{A}\} = \lambda^\dag b_i - b_{i+1}, \label{Eqn:w}\\
        \bar{w}_{i,j} &= \lambda^\dag  w_{i+j}-w_{i+j+1}-w_{i}w_{j}, \\
        \vartheta_{t, i} &= \xi_i \textstyle\prod_{j=i+1}^t\theta_j,\ \;i \in [t],
    \end{align}
    % Let
    % \begin{align}\label{Eqn:vartheta}
    %     \vartheta_{t, i} =
    %     \begin{cases}
    %         \xi_t, & \ i = t \\[1mm]
    %         \xi_i \textstyle\prod_{\tau=i+1}^t\theta_\tau, & \ 1 \leq i < t
    %     \end{cases},
    % \end{align}
    where the weight parameter $\xi_t$ is optimized as: $\xi_1 = 1$ and
    \begin{align}\label{Eqn:xi}
        \xi_t = \frac{c_{t,2}c_{t,0}+c_{t,3}}{c_{t,1}c_{t,0}+c_{t,2}},\ \;t > 1,
        % \begin{cases}
        %     \dfrac{c_{t,2}c_{t,0}+c_{t,3}}{c_{t,1}c_{t,0}+c_{t,2}}, & \ t > 1 \\[1mm]
        %     1, & \ t = 1
        % \end{cases},
    \end{align}
    with
    \BS\label{Eqn:ct0_4}\begin{align}
        c_{t,0} &= \textstyle\sum_{i=1}^{t-1} \vartheta_{t, i} w_{t-i} / w_0,\\
        c_{t,1} &= \sigma^2  w_0+ v_{{t},{t}}^{\bar{\phi}}\bar{ w}_{0,0},\\
        c_{t,2} &= - \textstyle{\sum}_{i=1}^{t-1} \vartheta_{t, i} \big(\sigma^2 w_{t-i}+ \Re(v_{{t},{i}}^{\bar{\phi}})\bar{ w}_{0,t-i}\big), \\
        c_{t,3} &= \textstyle{\sum}_{i=1}^{t-1}\textstyle{\sum}_{j=1}^{t-1} \vartheta_{t, i}\vartheta_{t,j}\big(\sigma^2 w_{2t-i-j} + v^{\bar{\phi}}_{i,j}\bar{w}_{t-i,t-j}\big).
    \end{align}\ES
    The orthogonalization parameters $\{p_{t,i}\}$ and  normalization factor ${\varepsilon}^\gamma_t$ are given by
   \begin{align}
        p_{t, i} &= \vartheta_{t, i} w_{t-i}, \label{Eqn:p_ti} \\ 
        {\varepsilon}^\gamma_t &= \textstyle\sum_{i=1}^t p_{t, i}.
    \end{align}
    \item
    For $k \in [t]$, $v_{t, k}^{\gamma}$ is given by:
    \begin{align}\label{Eqn:v_gam}
        v^{\gamma}_{t, k} = \tfrac{1}{\varepsilon^\gamma_{t}\varepsilon^\gamma_{k}} \textstyle{\sum}_{i=1}^{t}\textstyle{\sum}_{j=1}^{k}  \vartheta_{t, i}\vartheta_{k,j}\big[\sigma^2  w_{t+k-i-j} + v^{\bar{\phi}}_{i,j}\bar{ w}_{t-i,k-j}\big].
    \end{align}
\end{enumerate}
The details of the NLE are as follows:
\begin{enumerate}
    \item{
    $\phi_t(\cdot)$ is a separable and Lipschitz-continuous function, which is the same as the NLE in OAMP/VAMP \cite{ma2017orthogonal}, \cite{rangan2019vector}. 
    }
    \item{
    For $t \geq 2$, 
    \begin{align}\label{Eqn:v_phi}
    v^{\phi}_{t,k} \overset{\rm a.s.}{=}
    \begin{cases}
        \lim\limits_{N\to\infty}(\tfrac{1}{N} \|\bar{\bm{z}}_t\|^2 - \delta \sigma^2)/w_0, & k = t \\
        \lim\limits_{N\to\infty}(\tfrac{1}{N} \bar{\bm{z}}_t^{\mr H} \bm{z}_{k} - \delta \sigma^2)/w_0 , & 1 \leq k < t 
    \end{cases}.
    \end{align}
    where $\bar{\bm{z}}_t = \bm{y}-\bm{A}\,\phi_{t-1}(\bm{r}_{t-1})$ and $\bm{z}_{k} = \bm{y}-\bm{A}\bm{x}_{k}$.
    }
    \item{
    Let $\bm{V}_{t+1}^{\phi}$ be the error covariance matrix for $\bm{x}_1, \cdots\!, \bm{x}_{t}, \phi_t(\bm{r}_t)$, which is assumed to be invertible\footnote{If $\bm{V}_{t+1}^{\phi}$ is singular or ill-conditioned, we employ the back-off damping in \cite{liu2022memory}, or add a small amount of diagonal loading.}. The optimal damping vector is given by 
    \begin{align}\label{Eqn:xi_2}
        \bm{\zeta}_{t+1} = 
        \frac{(\bm{V}_{t+1}^{\phi})^{-1} \bm{1}}{\bm{1}^{\rm T} (\bm{V}_{t+1}^{\phi})^{-1}\bm{1}}.
    \end{align}
    Then, for $k \in [t+1]$, 
    \begin{align}\label{Eqn:v_phi_bar}
        v^{\bar{\phi}}_{t+1,k} = v^{\bar{\phi}}_{k,t+1} = 1 \;/\; \bm{1}^{\rm T} (\bm{V}_{t+1}^{\phi})^{-1}\bm{1}.
    \end{align}
    It means that the error covariance matrix $\bm{V}_{t+1}^{\bar{\phi}}$ for the damped estimates $\bm{x}_1, \cdots\!, \bm{x}_{t+1}$ is an L-banded matrix \cite{huang2023algebra} (referred to as an L-matrix in \cite{bouthat2021matrices, vstampach2022hilbert, vstampach2022asymptotic}). This special structure ensures the convergence of the SE of GD-MAMP, and allows $(\bm{V}_{t+1}^{\phi})^{-1}\bm{1}$ to be computed in  $\mathcal{O}(t)$ time \cite{takeuchi2022convergence,liu2022sufficient, liu2021sufficient}. In practice, we typically set the maximum damping length as $L=2$ or $L=3$ to improve the robustness. That is, $\bm{x}_{t+1} = \big[\bm{x}_{t-\ell+2}, \cdots\!, \bm{x}_t, \phi_t ( \bm{r}_t)\big] \cdot \bm{\zeta}_{t+1}$, where $\ell = \min\{t+1, L\}$ and $\bm{\zeta}_{t+1}$ is obtained by replacing $\bm{V}_{t+1}^{\phi}$ with its bottom-right $\ell \times \ell$ submatrix in \eqref{Eqn:xi_2} and \eqref{Eqn:v_phi_bar}.
    }
\end{enumerate}

\subsection{Simplification of GD-MAMP}\label{SSec:SGD}
The time complexity of GD-MAMP is dominated by the number of matrix-vector products. The original implementation of GD-MAMP in \eqref{Eqn:GD-MAMP} requires four matrix-vector products per iteration, outlined as follows\footnote{Since $\bm{z}_t = \bm{y} - \bm{A}\bm{x}_t = [\bm{z}_1, \cdots\!, \bm{z}_{t-1}, \bar{\bm{z}}_t] \cdot \bm{\zeta}_{t}$, computing it in \eqref{Eqn:MLE1} does not require a matrix-vector product.}: 
\begin{itemize}
    \item Computing $\bar{\bm{z}}_t = \bm{y}-\bm{A}\phi_{t-1}(\bm{r}_{t-1})$ in \eqref{Eqn:v_phi} requires one.
    \item Computing $\bm{B}\bm{u}_{t-1} = (\lambda^\dag\bm{I} - \bm{A}\bm{A}^{\mr H})\bm{u}_{t-1}$ in (\ref{Eqn:MLE1}) requires two.
    \item Computing $\bm{A}^{\mr H}\bm{u}_{t}$ in (\ref{Eqn:MLE2}) requires one.
\end{itemize}
Notice that the computation of $\bm{A}^{\rm H}\bm{u}_{t-1}$ is redundant, since it was already computed in the iteration $t-1$. Therefore, we can rewrite (\ref{Eqn:GD-MAMP}) as: starting with $t = 1$, $\bm{x}_1 = \hat{\bm{r}}_0 = \bm{0}_{N}$ and $\bm{u}_{0} = \bm{0}_{M}$,
\BS
\label{Eqn:SGD_MAMP}
\begin{alignat}{2}
    {\rm MLE:} && \bm{u}_{t} &= \theta_t\lambda^\dag\bm{u}_{t-1} + \xi_t\bm{y} - \bm{A}(\theta_t\hat{\bm{r}}_{t-1} + \xi_t\bm{x}_t), \label{Eqn:SGD_MAMP_a}\\
    && \hat{\bm{r}}_{t} &= \bm{A}^{\mr H}\bm{u}_{t} \label{Eqn:SGD_MAMP_b}\\
    && \bm{r}_t &= \gamma_t(\bm{X}_t) = \tfrac{1}{{\varepsilon}^\gamma_t}\big(\hat{\bm{r}}_{t} + \textstyle\sum_{i=1}^t  p_{t, i}\bm{x}_i \big), \label{Eqn:SGD_MAMP_c}\\
    {\rm NLE:} && \quad \bm{x}_{t+1} &= \bar\phi_t(\bm{r}_t) = \big[\bm{x}_1, \cdots\!, \bm{x}_t, \phi_t ( \bm{r}_t)\big] \cdot \scaleto{\bm{\zeta}}{8pt}_{t+1}. 
\end{alignat}
\ES
As a result, the number of matrix-vector products required in each iteration of GD-MAMP can be decreased from \emph{four} to \emph{three}, leading to an almost $1/4$ reduction in complexity. This reduction is relatively trivial. In Section~\ref{Sec:CR_GD}, we propose a variant of GD-MAMP, further reducing the number of matrix-vector products per iteration from \emph{three} to \emph{two} without degrading the convergence speed notably. 

\section{Overflow-Avoiding GD-MAMP}\label{Sec:OA-GD}
In this section, we first analyze the risk of overflow in GD-MAMP. To address this problem, we reformulate the original GD-MAMP to obtain an overflow-avoiding GD-MAMP (OA-GD-MAMP), which uses the eigenvalues of $\bm{A}\bm{A}^{\rm H}$. When these eigenvalues are unknown, we propose an approximation method to ensure the effectiveness of OA-GD-MAMP.

\subsection{Overflow Issue in GD-MAMP}
Numerical overflow is a fundamental issue in computer arithmetic, which occurs when a calculated value exceeds the maximum representable limit of its data type. For instance, the double-precision floating-point format can store a maximum value of approximately $1.8 \times 10^{308}$. This leads to a computational failure.

In GD-MAMP, the risk of overflow originates from the calculation of $w_t = \tfrac{1}{N}{\rm tr}\{\bm{A}^{\rm H}\bm{B}^t\bm{A}\}$, which is required for up to $t = 2T$ with $T$ being the maximum number of iterations. When the spectral radius of $\bm{B}$ is greater than $1$, the value of $w_t$ increases exponentially with $t$. For a highly ill-conditioned matrix $\bm{A}$, GD-MAMP requires a large $T$ for convergence. In such cases, $w_t$ may exceed the representable limit, leading to a collapse of the algorithm. 

\subsection{OA-GD-MAMP with Known Eigenvalues of \texorpdfstring{$\bm{A}\bm{A}^{\rm H}$}{TEXT}}
In GD-MAMP, we observe that $w$ is always paired with the corresponding $\vartheta$, for example, $\vartheta_{t,i}w_{t-i}$ in \eqref{Eqn:p_ti} and $\vartheta_{t,i}\vartheta_{t,j}w_{2t-i-j}$ in \eqref{Eqn:v_gam}. Although $w$ can grow exponentially, the product term $\vartheta w$ remains bounded by a moderate constant. For instance, as $w_{t-i}$ becomes large, $\vartheta_{t,i}$ approaches zero, ensuring that $|\vartheta_{t,i} w_{t-i}|$ remains bounded. This observation motivates us to find a method to calculate $\vartheta w$ directly, without explicitly calculating $w$ and $\vartheta$, thereby avoiding potential overflow and underflow respectively.

We define some element-wise operations as follows: for $\bm{a}=[a_1, \cdots\!, a_k]^{\rm T} \in \mathbb{R}^k$, $\bm{b}=[b_1, \cdots\!, b_k]^{\rm T}\in \mathbb{R}^k$, and $m \in \mathbb{R}$,
\BS\begin{align}
    & |\bm{a}| \equiv \big[|a_1|, \cdots\!, |a_k| \big]^{\rm T},\ {\rm sgn}(\bm{a}) \equiv \big[{\rm sgn}(a_1),\cdots\!, {\rm sgn}(a_k)\big]^{\rm T}, \\
    & m^{\circ \bm{a}} \equiv [m^{a_1}, \cdots\!, m^{a_k}]^{\rm T},\ \log^{\circ}(\bm{a}) \equiv \big[\log(a_1), \cdots\!, \log(a_k)\big]^{\rm T}, \\
    & \bm{a} \circ \bm{b} \equiv [a_1 b_1, \cdots\!, a_k b_k]^{\rm T},\ \bm{a}^{\circ m} \equiv [a_1^m, \cdots\!, a_k^m]^{\rm T}.
\end{align}\ES
%$|\bm{a}| \equiv \big[|a_1|, \cdots\!, |a_k| \big]^{\rm T}$, ${\rm sgn}(\bm{a}) = \big[{\rm sgn}(a_1),\cdots\!, {\rm sgn}(a_k)\big]^{\rm T}$, $\log^{\circ}(\bm{a}) = \big[\log(a_1), \cdots\!, \log(a_k)\big]^{\rm T}$, $\mathrm{e}^{\circ \bm{a}} \equiv [\mathrm{e}^{a_1}, \cdots\!, \mathrm{e}^{a_k}]^{\rm T}$, $\bm{a} \circ \bm{b} \equiv [a_1 b_1, \cdots\!, a_k b_k]^{\rm T}$, and $\bm{a}^{\circ m} = [a_1^m, \cdots\!, a_k^m]^{\rm T}$.
Consider that the eigenvalues of $\bm{A}\bm{A}^{\rm H}$ are known. Then, the vector of eigenvalues of $\bm{B} = \lambda^\dag\bm{I} - \bm{A}\bm{A}^{\mr H}$, denoted as $\bm{\lambda}_{\bm{B}}$, is also known. For any non-zero $\alpha$, if ${\rm sgn}(\alpha)$ and $\log|\alpha|$ are known, the following lemma provides a method to calculate $\alpha w_k$ without calculating $\alpha$ and $w_k$ individually. 
\begin{lemma}\label{Lemma:oa_1}
    For any $\alpha \in \mathbb{R}\backslash\{0\}$, $k \geq 0$, we have
    \BS\begin{align}
        \alpha w_k &= \frac{{\rm sgn}(\alpha)}{N} \bm{1}^{\rm T} \big[(\lambda^\dag\bm{1}-\bm{\lambda_B}) \circ \bm{s}_\lambda^{\circ k} \circ \mathrm{e}^{\circ \log|\alpha|\bm{1}+k\bm{\lambda}_{B}^{\log}} \big], 
    \end{align}
    where 
    \begin{align}
        \bm{s}_{\lambda} \equiv {\rm sgn}(\bm{\lambda}_{B}),\ \bm{\lambda}_{B}^{\log} \equiv \log^{\circ}|\bm{\lambda}_{B}|.
    \end{align}
    \ES
\end{lemma}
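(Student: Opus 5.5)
The plan is to expand $w_k$ as an explicit sum over the eigenvalues of $\bm{B}$, then write each summand of $\alpha w_k$ in sign/log-magnitude form.

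\emph{Step 1: eigen-decomposition of $w_k$.} By \eqref{Eqn:w}, $w_k = \tfrac{1}{N}{\rm tr}\{\bm{A}^{\rm H}\bm{B}^k\bm{A}\} = \tfrac{1}{N}{\rm tr}\{\bm{B}^k\bm{A}\bm{A}^{\rm H}\}$ by the cyclic property of the trace. Let $\bm{A}\bm{A}^{\rm H} = \bm{U}\bm{\Lambda}\bm{U}^{\rm H}$ with $\bm{\Lambda} = {\rm diag}(\lambda_1,\dots,\lambda_M)$. Then $\bm{B} = \bm{U}(\lambda^\dag\bm{I}-\bm{\Lambda})\bm{U}^{\rm H}$ has eigenvalues $\hat\lambda_m = \lambda^\dag - \lambda_m$, which are the entries of $\bm{\lambda}_B$. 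Hence $\bm{A}\bm{A}^{\rm H} = \lambda^\dag\bm{I} - \bm{B}$ shares this eigenbasis, and
\begin{align}
w_k = \tfrac{1}{N}\textstyle\sum_{m=1}^M (\lambda^\dag - \hat\lambda_m)\,\hat\lambda_m^k .
\end{align}
As a check, this is equivalent to $w_k = \lambda^\dag b_k - b_{k+1}$ in \eqref{Eqn:w}, with $b_i$ computed from the eigenvalues of $\bm{B}$ as in the footnote.

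\emph{Step 2: sign/log-magnitude rewriting.} Multiply by $\alpha = {\rm sgn}(\alpha)\,\mathrm{e}^{\log|\alpha|}$. For each $m$ with $\hat\lambda_m\neq 0$, write $\hat\lambda_m^k = {\rm sgn}(\hat\lambda_m)^k\,\mathrm{e}^{k\log|\hat\lambda_m|}$. Then
\begin{align}
\alpha\hat\lambda_m^k = {\rm sgn}(\alpha)\,{\rm sgn}(\hat\lambda_m)^k\,\mathrm{e}^{\log|\alpha| + k\log|\hat\lambda_m|}.
\end{align}
Collecting over $m$ in vector form gives $(\lambda^\dag\bm{1}-\bm{\lambda}_B)\circ\bm{s}_\lambda^{\circ k}\circ\mathrm{e}^{\circ(\log|\alpha|\bm{1}+k\bm{\lambda}_B^{\log})}$. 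Summing with $\bm{1}^{\rm T}$ and multiplying by ${\rm sgn}(\alpha)/N$ yields the claimed identity.

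\emph{Main obstacle: zero eigenvalues of $\bm{B}$.} If some $\hat\lambda_m = 0$, then $\log|\hat\lambda_m| = -\infty$, and the conventions must be fixed. For $k\ge 1$, set ${\rm sgn}(0)=0$, so the term vanishes, as does $0^k$; equivalently, $\mathrm{e}^{-\infty}=0$ in floating point. For $k=0$, one needs $0^0=1$ and $0\cdot(-\infty)=0$, so the summand reduces to $(\lambda^\dag - \hat\lambda_m)\alpha$, consistent with $\hat\lambda_m^0 = 1$. I would state these conventions explicitly, or note that the $k=0$ case can be handled directly as $\alpha w_0$. Apart from this, the argument is the elementary identity $a^k = {\rm sgn}(a)^k\,\mathrm{e}^{k\log|a|}$ applied entrywise.
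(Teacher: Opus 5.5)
Your proof is correct and takes essentially the paper's route: you write $\alpha w_k=\tfrac{1}{N}\alpha\,\bm{1}^{\rm T}\big[(\lambda^\dag\bm{1}-\bm{\lambda}_B)\circ\bm{\lambda}_B^{\circ k}\big]$ and then substitute $\alpha={\rm sgn}(\alpha)\mathrm{e}^{\log|\alpha|}$ and $\bm{\lambda}_B^{\circ k}=\bm{s}_\lambda^{\circ k}\circ\mathrm{e}^{\circ k\bm{\lambda}_B^{\log}}$; the only difference is that you reach the eigenvalue sum via the cyclic trace and a shared eigenbasis, whereas the paper uses $w_k=\lambda^\dag b_k-b_{k+1}$. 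Your treatment of zero entries of $\bm{\lambda}_B$ (for $k=0$, $0\cdot(-\infty)$ is undefined) covers a case the paper leaves implicit, though its pseudocode computes the $k=0$ term directly and applies the lemma only for $k\geq 1$.
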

\begin{IEEEproof}
    Recall $w_k = \lambda^\dag b_k - b_{k+1}$, where $b_k = \tfrac{1}{N}{\rm tr}\{\bm{B}^k\}$. 
    \begin{align}
    \alpha w_k &= \tfrac{1}{N} \alpha \big(\lambda^\dag{\rm tr}\{\bm{B}^k\} - {\rm tr}\{\bm{B}^{k+1}\}\big) \nonumber \\
    &= \tfrac{1}{N} \alpha \bm{1}^{\rm T} \big[(\lambda^\dag\bm{1}-\bm{\lambda}_{B})   \circ \bm{\lambda_B}^{\circ k} \big]. \label{Eqn:L3_p1}
    \end{align}
    It is clear that 
    \BS\label{Eqn:L3_p2}\begin{align}
        \alpha &= {\rm sgn}(\alpha) \mathrm{e}^{\log|\alpha|}, \\
        \bm{\lambda_B}^{\circ k} &= \bm{s}_{\lambda}^{\circ k}  \circ \mathrm{e}^{\circ k\bm{\lambda}_{B}^{\log}}.
    \end{align}\ES
    We have completed the proof by substituting (\ref{Eqn:L3_p2}) into (\ref{Eqn:L3_p1}).
\end{IEEEproof}
Since $\vartheta_{t,i} = \xi_i \prod_{j=i+1}^t\theta_j \neq 0$, we have
\begin{align}
    {\rm sgn}(\vartheta_{t,i}) = {\rm sgn}(\xi_i),\ \log|\vartheta_{t,i}| = \log|\xi_i| + \textstyle\sum_{j=i+1}^t \log|\theta_j|.
\end{align}
Thus, we can apply Lemma \ref{Lemma:oa_1} to calculate the product terms like $\vartheta_{t,i}w_{t-i}$ and $\vartheta_{t,i}\vartheta_{t,j}w_{2t-i-j}$. However, the complexity of calculating each $\vartheta w$ via Lemma \ref{Lemma:oa_1} is $\mathcal{O}(M)$. Over $T$ iterations, the algorithm needs to calculate $\mathcal{O}(T^3)$ such terms. This results in a total complexity of $\mathcal{O}(MT^3)$, which is relatively high.

To reduce this complexity, we introduce
\begin{align}\label{Eqn:rchi}
    \rchi_{k} \equiv \theta_0^k w_k,
\end{align}
where $\theta_0 = (\lambda^\dag + \sigma^2)^{-1}$. Then, any product term $\vartheta w_k$ can be computed indirectly as
\begin{align}\label{Eqn:bw}
    \vartheta w_k = {\rm sgn}(\vartheta)\mathrm{e}^{\log|\vartheta|-k\log\theta_0}\rchi_k.
\end{align}
In the $t$-th iteration, we only need to calculate $\rchi_{2t-1}$ and $\rchi_{2t}$ via Lemma \ref{Lemma:oa_1}, reducing the overall complexity from $\mathcal{O}(MT^3)$ to $\mathcal{O}(MT+T^3)$. Furthermore, the following Lemma \ref{Lemma:oa_2} establishes a bound on the magnitude of $\rchi_{k}$, ensuring its computation is free from the risk of overflow.
\begin{lemma}\label{Lemma:oa_2}
    For any $k\geq 0$, we have
    \begin{align}
        |\rchi_{k}| \leq \delta (\lambda^\dag + \theta_0^{-1}),
    \end{align}
    where $\delta = M/N$.
\end{lemma}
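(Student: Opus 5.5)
The plan is to bound $\rchi_k=\theta_0^k w_k$ directly through the eigenvalue representation used in the proof of Lemma~\ref{Lemma:oa_1}. Let $\{\mu_m\}_{m=1}^M$ be the eigenvalues of $\bm{A}\bm{A}^{\rm H}$, all in $[\lambda_{\min},\lambda_{\max}]$ with $\mu_m\ge 0$. The eigenvalues of $\bm{B}=\lambda^\dag\bm{I}-\bm{A}\bm{A}^{\rm H}$ are then $\hat\lambda_m=\lambda^\dag-\mu_m$. Since $w_k=\tfrac1N{\rm tr}\{\bm{A}^{\rm H}\bm{B}^k\bm{A}\}=\tfrac1N{\rm tr}\{\bm{A}\bm{A}^{\rm H}\bm{B}^k\}$, we get
\begin{align}
    \rchi_k = \tfrac{1}{N}\textstyle\sum_{m=1}^{M} \mu_m\,(\theta_0\hat\lambda_m)^k ,
\end{align}
which is exactly the vector formula of Lemma~\ref{Lemma:oa_1} with $\lambda^\dag-\hat\lambda_m=\mu_m$.

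The key step is to show that each ratio $|\theta_0\hat\lambda_m|$ is at most $1$. Because $\lambda^\dag=(\lambda_{\max}+\lambda_{\min})/2$, the spectrum of $\bm{B}$ lies in $[-(\lambda_{\max}-\lambda_{\min})/2,\,(\lambda_{\max}-\lambda_{\min})/2]$. This gives $|\hat\lambda_m|\le \lambda^\dag-\lambda_{\min}\le\lambda^\dag<\lambda^\dag+\sigma^2=\theta_0^{-1}$. Hence $|\theta_0\hat\lambda_m|^k\le 1$ for every $k\ge0$, and this is the only place where the choice of $\theta_0$ enters.

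With this in hand, the triangle inequality gives
\begin{align}
|\rchi_k|\le \tfrac1N\textstyle\sum_m \mu_m \le \tfrac{M}{N}\lambda_{\max}.
\end{align}
It then remains to check $\lambda_{\max}\le\lambda^\dag+\theta_0^{-1}$, which holds because $\lambda_{\max}=2\lambda^\dag-\lambda_{\min}\le 2\lambda^\dag\le\lambda^\dag+\theta_0^{-1}$. Following the form of Lemma~\ref{Lemma:oa_1}, one could also bound $\mu_m=\lambda^\dag-\hat\lambda_m$ by $|\lambda^\dag|+|\hat\lambda_m|\le\lambda^\dag+\theta_0^{-1}$, which yields the stated constant $\delta(\lambda^\dag+\theta_0^{-1})$ directly. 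This is presumably the authors' intended route.

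No step here is a real obstacle. The only care needed is the nonnegativity of $\mu_m$ and of $\lambda_{\min}$, which makes the spectral radius of $\bm{B}$ at most $\lambda^\dag$, and the fact that $\sigma^2>0$, or simply $\ge0$, keeps $\theta_0^{-1}\ge\lambda^\dag$.
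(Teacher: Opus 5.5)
Your proof is correct, but it organizes the estimate differently from the paper. The paper splits $w_k=\lambda^\dag b_k-b_{k+1}$, so that $\rchi_k=\lambda^\dag\,\theta_0^k b_k-\theta_0^{-1}\,\theta_0^{k+1}b_{k+1}$. It then bounds each normalized trace $|\theta_0^k b_k|=|\tfrac1N{\rm tr}\{(\theta_0\bm{B})^k\}|\le\delta$ using the spectral radius of $\theta_0\bm{B}$, and applies the triangle inequality to the two terms. The fallback route in your last paragraph, which bounds $\mu_m=\lambda^\dag-\hat\lambda_m$ by $\lambda^\dag+\theta_0^{-1}$, is exactly this argument written eigenvalue by eigenvalue.

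Your main argument instead keeps $\bm{A}\bm{A}^{\rm H}\succeq\bm{0}$ as a nonnegative weight in the spectral sum and uses only $|\theta_0\hat\lambda_m|\le 1$. This yields the sharper intermediate bound $|\rchi_k|\le\tfrac1N\sum_m\mu_m=\lambda_1=\rchi_0$ for every $k$: the whole sequence is dominated by its zeroth term, which is independent of $\lambda^\dag$ and $\sigma^2$. You then relax this to $\delta\lambda_{\max}\le 2\delta\lambda^\dag\le\delta(\lambda^\dag+\theta_0^{-1})$.

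The paper's version works directly with the moments $b_k$ that the algorithm already uses and needs no simultaneous diagonalization. Yours exploits the sign structure that the two-term triangle inequality throws away, and so it gives a more informative bound on $\rchi_k$.
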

\begin{IEEEproof}
    Recall $w_k = \lambda^\dag b_k - b_{k+1}$, where $b_k = \tfrac{1}{N}{\rm tr}\{\bm{B}^k\}$. We have
    \begin{align}\label{Eqn:Xi}
        |\rchi_{k}| = \big| \lambda^\dag \theta_0^k b_k - \theta_0^{-1} \theta_0^{k+1} b_{k+1} \big|.
    \end{align}
    The spectral radius of $\theta_0\bm{B}$ is $(\lambda^{\dag}-\lambda_{\min}) / (\lambda^{\dag}+\sigma^2)$, which is less than 1. Thus, for $i \geq 0$,
    \begin{align}\label{Eqn:theta_b}
        -\delta \leq \theta_0^k b_k = \tfrac{1}{N}{\rm tr}\{(\theta_0\bm{B})^k\} \leq \delta.
    \end{align} 
   Both $\lambda^\dag$ and $\theta_0$ are positive. Then, following \eqref{Eqn:Xi} and \eqref{Eqn:theta_b}, we have $|\rchi_{k}| \leq \delta (\lambda^\dag + \theta_0^{-1})$.
\end{IEEEproof}
We refer to this reformulated algorithm as overflow-avoiding GD-MAMP (OA-GD-MAMP). Further details and the pseudocode of OA-GD-MAMP are presented in Appendix \ref{App:OA}. 

\subsection{OA-GD-MAMP with Unknown Eigenvalues of \texorpdfstring{$\bm{A}\bm{A}^{\rm H}$}{TEXT}}
In practice, the eigenvalues of $\bm{A}\bm{A}^{\rm H}$ are usually unknown. As described in Section \ref{Sec:pre1}, we set $\lambda^\dag$ as $(\lambda_{\max}^{\rm up} + \lambda_{\min}^{\rm low}) / 2$, where $\lambda_{\max}^{\rm up}$ and $\lambda_{\min}^{\rm low}$ are given in (\ref{Eqn:max_min}) with a suitable $\tau$. Subsequently, the exact calculation of $\rchi_k$ via Lemma \ref{Lemma:oa_1} is no longer available. To address this, the following lemma provides a method to  approximate $\rchi_k$.
\begin{lemma}\label{Lemma:oa_h}
    For $k \geq 0$, 
    \begin{align}
        \rchi_k \overset{\rm a.s.}{=} \lim_{N\to\infty} \bar{\bm{h}}_i^{\rm H} \bar{\bm{h}}_{k-i} \label{Eqn:oa_rchi}
    \end{align}
    where $i = \lceil k/2 \rceil$ and $\bar{\bm{h}}_i$ is given by a recursion
    \begin{align}
        \bar{\bm{h}}_i = \theta_0\bm{B}\bar{\bm{h}}_{i-1},
    \end{align}
    with $\bar{\bm{h}}_0 = \bm{A}\bm{h}_0$, $\bm{h}_0 \sim \mathcal{N}(\bm{0},\tfrac{1}{N}\bm{I}_{N})$.
\end{lemma}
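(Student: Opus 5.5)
We unroll the recursion and reduce the claim to a trace concentration statement for a Gaussian quadratic form. By induction on $i$, $\bar{\bm{h}}_i = (\theta_0\bm{B})^i\bm{A}\bm{h}_0$. Since $\bm{B} = \lambda^\dag\bm{I}-\bm{A}\bm{A}^{\rm H}$ is Hermitian with real coefficients (real $\theta_0$), $\big((\theta_0\bm{B})^i\big)^{\rm H} = (\theta_0\bm{B})^i$. With $i=\lceil k/2\rceil$ this gives
\begin{align}
\bar{\bm{h}}_i^{\rm H}\bar{\bm{h}}_{k-i} = \bm{h}_0^{\rm H}\bm{A}^{\rm H}(\theta_0\bm{B})^{k}\bm{A}\bm{h}_0 = \theta_0^k\,\bm{h}_0^{\rm H}\bm{C}_k\bm{h}_0, \quad \bm{C}_k \equiv \bm{A}^{\rm H}\bm{B}^k\bm{A}.
\end{align}
The splitting $i+(k-i)=k$ is only a computational device that keeps the vectors bounded, as discussed below. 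Algebraically, any split gives the same quadratic form.

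Next we compute the expectation and fluctuation of this quadratic form. Because $\bm{h}_0\sim\mathcal{N}(\bm{0},\tfrac1N\bm{I}_N)$ is independent of $\bm{A}$, conditioning on $\bm{A}$ gives
\begin{align}
{\rm E}\{\theta_0^k\bm{h}_0^{\rm H}\bm{C}_k\bm{h}_0\mid\bm{A}\} = \tfrac{1}{N}\theta_0^k{\rm tr}\{\bm{A}^{\rm H}\bm{B}^k\bm{A}\} = \theta_0^k w_k = \rchi_k,
\end{align}
by the definitions in \eqref{Eqn:w} and \eqref{Eqn:rchi}. For Gaussian $\bm{h}_0$ the conditional variance is of order $\tfrac{1}{N^2}{\rm tr}\{(\theta_0^k\bm{C}_k)^2\} \le \tfrac{1}{N}\|\theta_0^k\bm{C}_k\|^2$. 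To bound the spectral norm, we write $\theta_0^k\bm{C}_k = \bm{A}^{\rm H}(\theta_0\bm{B})^k\bm{A}$. As shown in the proof of Lemma~\ref{Lemma:oa_2}, $\theta_0\bm{B}$ has spectral radius less than $1$, so $\|\theta_0^k\bm{C}_k\|\le\lambda_{\max}$. This bound is uniform in $k$ and $N$, provided $\lambda_{\max}$ stays bounded as $N\to\infty$. Consequently the variance is $\mathcal{O}(1/N)$.

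Finally, we pass from variance bounds to almost sure convergence. Convergence in probability follows from Chebyshev, but almost sure convergence needs a summable tail. We would use a fourth-moment bound or the Hanson--Wright inequality, which gives $P(|\bm{h}_0^{\rm H}\bm{D}\bm{h}_0 - \tfrac1N{\rm tr}\bm{D}|>\epsilon)\le 2\exp(-cN\min\{\epsilon^2/\|\bm{D}\|^2,\epsilon/\|\bm{D}\|\})$. This tail is summable in $N$, so Borel--Cantelli yields the almost sure limit \eqref{Eqn:oa_rchi}. In the same step we would note that $\rchi_k$, as a normalized trace, has its own limit under Assumption~\ref{Assum:2}.

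The main obstacle is keeping the spectral norm of $\theta_0^k\bm{C}_k$ uniformly bounded; without that control the Hanson--Wright tail could degrade. When $\lambda^\dag$ is built from the surrogate bounds $\lambda_{\max}^{\rm up}$ and $\lambda_{\min}^{\rm low}=0$, the needed fact is that $\lambda_{\max}^{\rm up}\ge\lambda_{\max}$. This gives $\lambda^\dag\ge\lambda_{\max}/2$, so every eigenvalue $\lambda^\dag-\lambda$ of $\bm{B}$ lies in $[-\lambda^\dag,\lambda^\dag]$, and adding $\sigma^2>0$ to the denominator of $\theta_0$ keeps the spectral radius of $\theta_0\bm{B}$ strictly below one. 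The same contraction is what ensures that the iterates $\bar{\bm{h}}_i$ themselves never overflow. This is why we would explicitly verify the contraction in the unknown-eigenvalue setting before invoking the concentration inequality.
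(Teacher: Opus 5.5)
Your proposal is correct and follows the paper's proof: you unroll the recursion to $\bar{\bm{h}}_i^{\rm H}\bar{\bm{h}}_{k-i}=\bm{h}_0^{\rm H}\bm{A}^{\rm H}(\theta_0\bm{B})^k\bm{A}\bm{h}_0$ and concentrate this quadratic form on $\tfrac{\theta_0^k}{N}{\rm tr}\{\bm{A}^{\rm H}\bm{B}^k\bm{A}\}=\rchi_k$. The paper simply asserts the almost-sure trace step, writing it as $\tfrac{\theta_0^k}{N}{\rm tr}\{\cdot\}\,\bm{h}_0^{\rm H}\bm{h}_0$ with $\|\bm{h}_0\|^2\to 1$, whereas you justify it through the uniform bound $\|\bm{A}^{\rm H}(\theta_0\bm{B})^k\bm{A}\|\le\lambda_{\max}$ together with Hanson--Wright and Borel--Cantelli, which fills in that step rather than changing the route.
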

\begin{IEEEproof}
    For $0 \leq i \leq k$, 
    \BS
    \begin{align}
        \bar{\bm{h}}_i^{\rm H} \bar{\bm{h}}_{k-i} &= \lim_{N\to\infty} \bm{h}_0^{\rm H} \bm{A}^{\rm H}(\theta_0\bm{B})^k\bm{A}\bm{h}_0 \\ 
        &\overset{\rm a.s.}{=} \lim_{N\to\infty}\tfrac{\theta_0^k}{N}{\rm tr}\{\bm{A}^{\rm H}\bm{B}^k\bm{A}\} \bm{h}_0^{\rm H} \bm{h}_0 \\
        &= \theta_0^k w_k = \rchi_k.
    \end{align}
    Hence, we have completed the proof.
    \ES
\end{IEEEproof}
Approximating $\lambda_{\max}^{\rm up}$ via  (\ref{Eqn:max_min}) introduces $\tau$ matrix-vector products, and approximating $\{\rchi_k\}$ for $k \in [2T]$ via Lemma \ref{Lemma:oa_h} introduces $2T$ matrix-vector products. Thus, OA-GD-MAMP with unknown eigenvalues of $\bm{A}\bm{A}^{\rm H}$ requires a total of $5T + \tau$ matrix-vector products. In the subsequent sections, when comparing the number of matrix-vector products, we omit $\tau$ since it is typically set to be a small constant (e.g. $\tau \leq 20$).
%\footnote{When the eigenvalues of $\bm{A}\bm{A}^{\rm H}$ are unknown, GD-MAMP approximates $\{w_k, k<2T\}$ using the similar method in Lemma \ref{Lemma:oa_h}. In other words, OA-GD-MAMP introduces no additional matrix-vector products compared to GD-MAMP.}. If $T$ is large, we can consider $\tau \ll T$. 

\section{Variants of GD-MAMP to Reduce Complexity}\label{Sec:CR_GD}
In this section, we first propose a complexity-reduced GD-MAMP, which requires only two matrix-vector products per iteration, a $33\%$ reduction compared to GD-MAMP, without a noticeable degradation in convergence speed measured by iteration count. Second, we introduce a method to perform orthogonalization in the MLE using partial-length memory terms. This reduces the space complexity of storing memory terms. In addition, when the eigenvalues of $\bm{A}\bm{A}^{\rm H}$ are unknown, both variants need to estimate fewer terms of $\rchi_k$ by Lemma \ref{Lemma:oa_h}, thus further reducing the number of matrix-vector products.

\subsection{Complexity-Reduced GD-MAMP}\label{SSec:CR}
As shown in Section \ref{SSec:SGD}, GD-MAMP requires three matrix-vector products in each iteration. The two matrix-vector products from (\ref{Eqn:SGD_MAMP_a}) and (\ref{Eqn:SGD_MAMP_b}) are clearly unavoidable. However, the one arising from computing $\bm{z}_t=\bm{y}-\bm{A}\phi_{t-1}(\bm{r}_{t-1})$ is only needed when estimating the covariances $v_{t, i}^{\phi}$ in $\bm{V}_t^{\phi}$ by (\ref{Eqn:v_phi}). Motivated by this observation, our key idea is to modify the original GD-MAMP to eliminate the dependency on $\bm{V}_t^{\phi}$, thereby avoiding computing $\bm{z}_t$.

\textbf{\emph{CR-GD-MAMP Algorithm:}} Starting with $t=1$, $\bm{u}_{0} = \bm{0}_{M}$ and $\bm{x}_1 = {\rm E}\{\bm{x}\} = \bm{0}_{N}$,
\BS\label{Eqn:MLE_CR}
\begin{alignat}{2}
    {\rm MLE:} && \bm{u}_{t} &= \theta_t\lambda^\dag\bm{u}_{t-1} + \hat{\xi}_t\bm{y} - \bm{A}(\theta_t\hat{\bm{r}}_{t-1} + \hat{\xi}_t\bm{x}_t), \\
    && \hat{\bm{r}}_{t} &= \bm{A}^{\mr H}\bm{u}_{t} \\
    && \bm{r}_t &= \gamma_t(\bm{X}_t) = \tfrac{1}{{\varepsilon}^\gamma_t}\big(\hat{\bm{r}}_{t} + \textstyle\sum_{i=1}^t  p_{t, i}\bm{x}_i \big), \\
    && \tilde{\bm{r}}_t &= \big[\bm{r}_1, \cdots\!, \bm{r}_t \big] \cdot \tilde{\scaleto{\bm{\zeta}}{8pt}}_t,
    \label{Eqn:MLE_CR_d}\\
    {\rm NLE:} &&\ \bm{x}_{t+1} &= \phi_t(\tilde{\bm{r}}_t).
\end{alignat}\ES
Compared to GD-MAMP, the differences are as follows: 
\begin{enumerate}
    \item The error covariance matrix $\bm{V}_t^{\phi}$ for $\bm{x}_1, \cdots\!, \bm{x}_t$ is no longer required. Instead, we only obtain the variance $v_{t,t}^{\phi}$ of $\bm{x}_{t}$ as in OAMP/VAMP: $v_{t,t}^{\phi} = 1$, and for $t \geq 2$, 
    \begin{align}
        v_{t,t}^{\phi} = (1 / v_{t-1}^{\rm post} - 1 / v_{t-1,t-1}^{\tilde{\gamma}})^{-1},
    \end{align}
    where $v_{t-1}^{\rm post} = {\rm Var}\big\{x|x+w, w \sim \mathcal{CN}(0, v_{t-1,t-1})\big\}$ is the posterior variance, and $v_{t-1,t-1}^{\tilde{\gamma}}$ is the variance of $\tilde{\bm{r}}_{t-1}$ sent by the MLE. 
    \item The optimized weight parameter $\xi_t$ in (\ref{Eqn:xi}) cannot be used since $\bm{V}_t^{\phi}$ is unavailable. We replace it with a heuristically selected parameter: 
    \begin{align}
        \hat{\xi}_t = 1 / (v_{t,t}^{\phi} + \sigma^2).
    \end{align}
    Note that this $\hat{\xi}_t$ does not affect the convergence of the algorithm, but only its convergence speed \cite{liu2022memory}. While $\hat{\xi}_t$ is suboptimal, our simulation results in Section \ref{Sec:sim} show that it does not degrade the convergence speed notably.
    \item The variance $v_{t,t}^{\gamma}$ of $\bm{r}_t$ cannot be obtained from \eqref{Eqn:v_gam} since $\bm{V}_t^{\phi}$ is unavailable. Instead, we estimate $v_{t,t}^{\gamma}$ by
    \begin{align}
        v_{t, t}^{\gamma} \overset{\rm a.s.}{=} \lim_{N\to\infty}\tfrac{1}{N}\| \bm{r}_t - \bm{x}_t \|^2 - v_{t,t}^{\phi}.
    \end{align}
    This technique was used in \cite{skuratovs2022compressed}. 
    \item Inspired by \cite{liu2021sufficient, liu2022sufficient}, we perform a damping at the MLE, rather than the NLE, to improve the convergence rate. Analogous to \eqref{Eqn:xi_2}, the optimal damping vector is given by
    \begin{align}\label{Eqn:zeta_opt}
        \scaleto{\bm{\zeta}}{8pt}_t^{\rm op} = \frac{[\bm{V}_{t}^{\gamma}]^{-1} \bm{1}}{\bm{1}^{\rm T} [\bm{V}_{t}^{\gamma}]^{-1}\bm{1}},
    \end{align}
    where $\bm{V}_t^{\gamma} \equiv [v_{i,j}^{\gamma}]_{t \times t}$ denotes the error covariance matrix for $\bm{r}_1, \cdots\!, \bm{r}_t$. However, it is hard to estimate the full $\bm{V}_t^{\gamma}$. We can only estimate the real part of $\bm{V}_t^{\gamma}$ by the following lemma.
    \begin{lemma}\label{Lem:CR_cov}
        For $i \in [t]$, we have 
        \begin{align}
            \Re(v_{t, i}^{\gamma}) \overset{\rm a.s.}{=} \lim_{N\to\infty} \tfrac{1}{2}\big(v_{t, t}^{\gamma} + v_{i, i}^{\gamma} - \tfrac{1}{N}\|\bm{r}_{t} - \bm{r}_i\|^2 \big). \label{Eqn:real_v_g}
        \end{align}
    \end{lemma}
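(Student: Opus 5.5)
The plan is to expand the squared distance between two MLE outputs in terms of their error vectors and then use the definitions of the normalized covariances. Since $\bm{r}_t - \bm{r}_i = (\bm{r}_t - \bm{x}) - (\bm{r}_i - \bm{x}) = \bm{g}_t - \bm{g}_i$, the expansion reads
\begin{align}
\tfrac{1}{N}\|\bm{r}_t-\bm{r}_i\|^2 = \tfrac{1}{N}\|\bm{g}_t\|^2 + \tfrac{1}{N}\|\bm{g}_i\|^2 - \tfrac{1}{N}\big(\bm{g}_t^{\rm H}\bm{g}_i + \bm{g}_i^{\rm H}\bm{g}_t\big) = \tfrac{1}{N}\|\bm{g}_t\|^2 + \tfrac{1}{N}\|\bm{g}_i\|^2 - \tfrac{2}{N}\Re\big(\bm{g}_t^{\rm H}\bm{g}_i\big),
\end{align}
where the last step uses $\bm{g}_i^{\rm H}\bm{g}_t = (\bm{g}_t^{\rm H}\bm{g}_i)^*$. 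Rearranging gives the finite-$N$ identity
\begin{align}
\tfrac{1}{N}\Re\big(\bm{g}_t^{\rm H}\bm{g}_i\big) = \tfrac{1}{2}\Big(\tfrac{1}{N}\|\bm{g}_t\|^2 + \tfrac{1}{N}\|\bm{g}_i\|^2 - \tfrac{1}{N}\|\bm{r}_t-\bm{r}_i\|^2\Big),
\end{align}
which holds exactly for every $N$ and every realization, with no randomness involved.

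The second step is to pass to the large-system limit. I will invoke the concentration implicit in Lemma \ref{Lemma:AIIDG}, as used throughout the paper, to get
\begin{align}
\tfrac{1}{N}\bm{g}_t^{\rm H}\bm{g}_i \overset{\rm a.s.}{\to} v^{\gamma}_{t,i},
\end{align}
and in particular $\tfrac{1}{N}\|\bm{g}_t\|^2 \overset{\rm a.s.}{\to} v^{\gamma}_{t,t}$. The MLE is Lipschitz and orthogonal, so the empirical inner products of the error vectors converge almost surely to their expectations, which are exactly the normalized covariances defined earlier.

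Taking real parts and limits in the identity above yields \eqref{Eqn:real_v_g}. The variances $v^{\gamma}_{t,t}$ and $v^{\gamma}_{i,i}$ on the right-hand side are the quantities already available in CR-GD-MAMP, through the estimator $\tfrac{1}{N}\|\bm{r}_t-\bm{x}_t\|^2 - v^{\phi}_{t,t}$, so the right-hand side is computable.

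The only delicate point is justifying the almost-sure identification of empirical inner products with the $v^{\gamma}$'s. The paper treats this as standard via Lemma \ref{Lemma:AIIDG}, so I would cite it rather than reprove it. I would also note that the imaginary part cannot be recovered this way, because the norm of a difference only sees the real part of the inner product. This explains why the lemma is restricted to $\Re(v^{\gamma}_{t,i})$.
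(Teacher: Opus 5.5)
Your proof is correct and is essentially the paper's argument: write $\bm{r}_t-\bm{r}_i=\bm{g}_t-\bm{g}_i$, expand the squared norm, use $\bm{g}_i^{\rm H}\bm{g}_t=(\bm{g}_t^{\rm H}\bm{g}_i)^*$, and identify the normalized inner products with $v^{\gamma}_{t,t}$, $v^{\gamma}_{i,i}$ and $2\Re(v^{\gamma}_{t,i})$ in the large-system limit. The paper also assumes, without proof, the almost-sure convergence of $\tfrac{1}{N}\bm{g}_t^{\rm H}\bm{g}_i$ to $v^{\gamma}_{t,i}$, so citing it rather than reproving it matches the paper's level of rigor.
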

    \begin{IEEEproof}
    Let $\bm{g}_t = \bm{r}_t - \bm{x}$. We have
    \begin{align}
        & \lim_{N\to\infty} \tfrac{1}{N}\|\bm{r}_{t} - \bm{r}_i\|^2 \nonumber \\
        & = \lim_{N\to\infty} \tfrac{1}{N}\|\bm{g}_{t} - \bm{g}_i\|^2 \nonumber \\
        & = \lim_{N\to\infty} \tfrac{1}{N}(\bm{g}_t^{\rm H}\bm{g}_t + \bm{g}_i^{\rm H}\bm{g}_i - \bm{g}_i^{\rm H}\bm{g}_t - \bm{g}_t^{\rm H}\bm{g}_i) \nonumber \\
        &\overset{\rm a.s.}{=} v_{t, t}^{\gamma} + v_{i, i}^{\gamma} - 2\Re(v_{t, i}^{\gamma}).
    \end{align}
    Thus, we have finished the proof.
    \end{IEEEproof}
    In this case, we replace $\bm{V}_t^{\gamma}$ with $\Re(\bm{V}_t^{\gamma})$ in \eqref{Eqn:zeta_opt}, leading to the damping vector
    \begin{align}\label{Eqn:CR_da}
        \tilde{\scaleto{\bm{\zeta}}{8pt}}_t = \frac{[\Re(\bm{V}_{t}^{\gamma})]^{-1} \bm{1}}{\bm{1}^{\rm T} [\Re(\bm{V}_{t}^{\gamma})]^{-1}\bm{1}},
    \end{align}
    where $\Re(\bm{V}_t^{\gamma})$ is assumed to be invertible\footnote{If $\Re(V_t^{\gamma})$ is singular or ill-conditioned, we employ the back-off damping in \cite{liu2022memory}, or add a small amount of diagonal loading.}. 
    %Subsequently, the error covariance matrix $\bm{V}_t^{\tilde{\gamma}} \equiv [v_{i, j}^{\tilde{\gamma}}]_{t \times t}$ for the damped estimates $\tilde{\bm{r}}_1, \cdots\!, \tilde{\bm{r}}_t$ is given by computing its $t$-th row/column:
    % \begin{align}
    %     v^{\tilde{\gamma}}_{t, i} = v^{\tilde{\gamma}}_{i, t} = 1 \;/\; \bm{1}^{\rm T} [\Re(\bm{V}_t^{\gamma})]^{-1}\bm{1},\ \forall i \in [t].
    % \end{align}
    For real-valued systems, $\tilde{\scaleto{\bm{\zeta}}{8pt}}_t$ in (\ref{Eqn:CR_da}) is optimal since $\Re(\bm{V}_{t}^{\gamma}) = \bm{V}_{t}^{\gamma}$. For complex-valued systems, while this replacement is an approximation, the following lemma shows that it remains optimal if the damping vector is constrained to be real.
    \begin{lemma}\label{Lem:CR2}
        Suppose that $\bm{V}_t^{\gamma} \in \mathbb{C}^{t \times t}$ is invertible. If the damping vector $\scaleto{\bm{\zeta}}{8pt}$ is real-valued and satisfies $\bm{1}^{\rm T} \scaleto{\bm{\zeta}}{8pt} = 1$, then $ \tilde{\scaleto{\bm{\zeta}}{8pt}}_t$ in (\ref{Eqn:CR_da}) minimizes the variance of $\bm{r}(\scaleto{\bm{\zeta}}{8pt}) = [\bm{r}_1, \cdots\!, \bm{r}_t] \cdot \scaleto{\bm{\zeta}}{8pt}$.
    \end{lemma}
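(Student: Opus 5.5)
The plan is to write the variance of the damped estimate as a quadratic form in $\bm{\zeta}$ and then show that, for real $\bm{\zeta}$, only $\Re(\bm{V}_t^{\gamma})$ enters that form. The problem then becomes a standard equality-constrained quadratic minimization.

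\emph{Step 1 (error of the damped estimate).} Since $\bm{1}^{\rm T}\bm{\zeta}=1$, the error of the damped estimate is
\begin{align}
\bm{r}(\bm{\zeta})-\bm{x}=\textstyle\sum_{i=1}^t \zeta_i(\bm{r}_i-\bm{x})=\sum_{i=1}^t \zeta_i \bm{g}_i .
\end{align}
Its normalized variance is therefore
\begin{align}
v(\bm{\zeta})=\tfrac{1}{N}{\rm E}\{\|\textstyle\sum_i\zeta_i\bm{g}_i\|^2\}=\sum_{i,j}\zeta_i^*\zeta_j v^{\gamma}_{i,j}=\bm{\zeta}^{\rm H}\bm{V}_t^{\gamma}\bm{\zeta}.
\end{align}

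\emph{Step 2 (reduction to the real part).} For real $\bm{\zeta}$, the quantity $v(\bm{\zeta})$ is real. Moreover, $\bm{V}_t^{\gamma}$ is Hermitian, so $v^{\gamma}_{j,i}=(v^{\gamma}_{i,j})^*$. Symmetrizing the sum, the imaginary parts cancel pairwise, which gives
\begin{align}
v(\bm{\zeta})=\bm{\zeta}^{\rm T}\Re(\bm{V}_t^{\gamma})\bm{\zeta}.
\end{align}
The real part of a Hermitian positive semidefinite matrix is real symmetric positive semidefinite, because $\bm{a}^{\rm T}\Re(\bm{V})\bm{a}=\bm{a}^{\rm T}\bm{V}\bm{a}\ge 0$ for real $\bm{a}$.

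\emph{Step 3 (constrained minimization).} We minimize $\bm{\zeta}^{\rm T}\Re(\bm{V}_t^{\gamma})\bm{\zeta}$ over real $\bm{\zeta}$ subject to $\bm{1}^{\rm T}\bm{\zeta}=1$. The Lagrangian stationarity condition $2\Re(\bm{V}_t^{\gamma})\bm{\zeta}=\mu\bm{1}$ gives $\bm{\zeta}\propto[\Re(\bm{V}_t^{\gamma})]^{-1}\bm{1}$, and normalizing to meet the constraint yields exactly $\tilde{\bm{\zeta}}_t$ in \eqref{Eqn:CR_da}. Global optimality then follows from convexity. A direct check is also available: for any feasible $\bm{\zeta}=\tilde{\bm{\zeta}}_t+\bm{d}$ with $\bm{1}^{\rm T}\bm{d}=0$, the cross term $2\bm{d}^{\rm T}\Re(\bm{V}_t^{\gamma})\tilde{\bm{\zeta}}_t$ is proportional to $\bm{d}^{\rm T}\bm{1}=0$, so $v(\bm{\zeta})=v(\tilde{\bm{\zeta}}_t)+\bm{d}^{\rm T}\Re(\bm{V}_t^{\gamma})\bm{d}\ge v(\tilde{\bm{\zeta}}_t)$.

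\emph{Main obstacle.} The statement only assumes that $\bm{V}_t^{\gamma}$ is invertible, whereas the formula uses $[\Re(\bm{V}_t^{\gamma})]^{-1}$. I would handle this as follows.
\begin{itemize}
\item If $\bm{V}_t^{\gamma}\succ 0$, then for real $\bm{a}\neq\bm{0}$ we have $\bm{a}^{\rm T}\Re(\bm{V}_t^{\gamma})\bm{a}=\bm{a}^{\rm H}\bm{V}_t^{\gamma}\bm{a}>0$, so $\Re(\bm{V}_t^{\gamma})\succ0$.
\item Positive definiteness makes $\Re(\bm{V}_t^{\gamma})$ invertible and also gives $\bm{1}^{\rm T}[\Re(\bm{V}_t^{\gamma})]^{-1}\bm{1}>0$, so the normalization in \eqref{Eqn:CR_da} is well defined.
\item An invertible covariance matrix, which is Hermitian positive semidefinite, is necessarily positive definite, so the first bullet applies under the lemma's hypothesis.
\end{itemize}
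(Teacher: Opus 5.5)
Your proposal is correct and takes the same route as the paper. Both write the variance as $\bm{\zeta}^{\rm H}\bm{V}_t^{\gamma}\bm{\zeta}$, observe that for real $\bm{\zeta}$ it equals $\bm{\zeta}^{\rm T}\Re(\bm{V}_t^{\gamma})\bm{\zeta}$, and solve the equality-constrained quadratic minimization, which the paper only calls ``straightforward to verify.'' Your added argument that an invertible Hermitian positive semidefinite $\bm{V}_t^{\gamma}$ forces $\Re(\bm{V}_t^{\gamma})\succ 0$ justifies the invertibility of $\Re(\bm{V}_t^{\gamma})$ and the positivity of the normalizer, which the paper simply assumes in the text before the lemma.
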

    \begin{IEEEproof}
        For any $\scaleto{\bm{\zeta}}{8pt}$ satisfying $\bm{1}^{\rm H}\scaleto{\bm{\zeta}}{8pt}=1$, the variance of $\bm{r}(\scaleto{\bm{\zeta}}{8pt})$ is given by \cite{liu2022memory}:
        \begin{align}
            v_{t, t}(\scaleto{\bm{\zeta}}{8pt}) = \scaleto{\bm{\zeta}}{8pt}^{\rm H}\, \bm{V}_t^{\gamma} \,\scaleto{\bm{\zeta}}{8pt}.
        \end{align}
        We have $\scaleto{\bm{\zeta}}{8pt}^{\rm H}\,\bm{V}_t^{\gamma} \,\scaleto{\bm{\zeta}}{8pt} = \scaleto{\bm{\zeta}}{8pt}^{\rm T}\,\Re(\bm{V}_t^{\gamma}) \,\scaleto{\bm{\zeta}}{8pt}$ when $\scaleto{\bm{\zeta}}{8pt}$ is real-valued. It is straightforward to verify that $ \tilde{\scaleto{\bm{\zeta}}{8pt}}_t$ in (\ref{Eqn:CR_da}) is the solution of
        \BS
        \begin{align}
            &\min_{\scaleto{\bm{\zeta}}{8pt}}\ \; v_{t, t}(\scaleto{\bm{\zeta}}{8pt}) = \scaleto{\bm{\zeta}}{8pt}^{\rm H}\,\Re(\bm{V}_t^{\gamma})\,\scaleto{\bm{\zeta}}{8pt}, \\
            &\;{\rm s.t.}\ \; \bm{1}^{\rm T}\scaleto{\bm{\zeta}}{8pt} = 1,\ \scaleto{\bm{\zeta}}{8pt} \in \mathbb{R}^t.
        \end{align}
        \ES
        Thus, we have finished the proof.
    \end{IEEEproof}
    Similar to GD-MAMP, we typically set the maximum damping length as $L=2$ or $L=3$ to enhance the robustness. In addition, based on our experience, for systems of moderate scale (e.g. $N < 1000$), the estimation of $\Re(v_{t, i}^{\gamma})$ in \eqref{Eqn:real_v_g} can become inaccurate, leading to algorithmic instability. In this case, we can use a simple heuristic damping as
    \begin{align}
        \scaleto{\bm{\zeta}}{8pt}_t^{\rm he} = \Big[\frac{v_{t, t}^{\gamma}}{{v_{t-1, t-1}^{\gamma}+v_{t, t}^{\gamma}}},\  \frac{v_{t-1, t-1}^{\gamma}}{{v_{t-1, t-1}^{\gamma}+v_{t, t}^{\gamma}}}\Big]
    \end{align}
\end{enumerate}
%Recently, a sufficient-statistic GD-MAMP (SS-GD-MAMP) was proposed by moving the optimal damping to the MLE in \cite{liu2022sufficient}. For SS-GD-MAMP, the damping at NLE is not necessary (cannot improve the convergence speed). In addition, the simulation results presented in \cite{liu2022sufficient} also demonstrate that SS-GD-MAMP convergences quicker than GD-MAMP. All these results above indicate that damping at MLE is a better choice. That is why we adopt this approach when designing CR-GD-MAMP.

The proposed CR-GD-MAMP requires two matrix-vector products per iteration, while GD-MAMP requires three. The complexity advantage is more pronounced when the eigenvalues of $\bm{A}\bm{A}^{\rm H}$ are unknown. In this case, CR-GD-MAMP only requires estimating $\rchi_1, \cdots\!, \rchi_T$ by Lemma \ref{Lemma:oa_h} over $T$ iterations, since computing \eqref{Eqn:v_gam} is unnecessary. As a result, the total number of matrix-vector products is reduced to nearly $3T$. This count consists of $2T$ products for the main iterations, and $2\lceil0.5T\rceil \approx T$ for the estimation of $\rchi_1, \cdots\!, \rchi_T$. As shown in Table \ref{table:CR}, CR-GD-MAMP reduces the required matrix-vector products by $1/3$ (from $3T$ to $2T$) and $2/5$ (from $5T$ to $3T$) for known and unknown eigenvalues of $\bm{A}^{\mr{H}}\bm{A}$, respectively.

\begin{table}[htb!]
\renewcommand{\arraystretch}{1.5} 
\centering \footnotesize \setlength{\tabcolsep}{1mm}{
\caption{Total Number of matrix-vector products in GD-MAMP and CR-GD-MAMP \\ ($T$: number of iterations)}
\label{table:CR}
\begin{tabular}{|c|c|c|}
\hline
\multirow{2}{*}{\quad Algorithms\quad} & \multicolumn{2}{c|}{Total number of matrix-vector products} \\
\cline{2-3}
&\ Known eigenvalues of $\bm{A}\bm{A}^{\rm H}$ \ & \ Unknown eigenvalues of $\bm{A}\bm{A}^{\rm H}$ \  \\
\hline
GD-MAMP \cite{liu2022memory} & $3T$ & $5T$ \\
\hline
CR-GD-MAMP & $2T$ & $3T$ \\
\hline
\end{tabular}} 
\end{table}

\subsection{GD-MAMP with Partial-Length Memory Terms}\label{SSec:PMLE}
The time complexity of GD-MAMP is primarily dominated by the complexity $\mathcal{O}(MNT)$ arising from matrix-vector products. However, for a large $T$, the secondary complexity $\mathcal{O}(NT^2)$, stemming from the orthogonalization step $\big\{\textstyle\sum_{i=1}^t  p_{t, i}\bm{x}_i\}_{t=1}^T$, is also non-negligible. More importantly, the space complexity of storing memory terms $\bm{x}_1, \cdots\!, \bm{x}_t$ is $\mathcal{O}(NT)$, which is relatively high when storage resources are limited. 

Let us analyze the orthogonalization. For a fixed $k$,
\begin{align}
    |p_{t,k}| \propto {\rm tr}\big\{\textstyle\prod_{\tau=k+1}^t\theta_\tau\bm{A}^{\rm H}\bm{B}^{t-k}\bm{A}\big\}.
\end{align}
Since $\theta_1 > \cdots > \theta_t > 0$ and $\rho(\theta_1\bm{B}) < 1$ \cite{liu2022memory}, $|p_{t,k}|$ decreases rapidly as the iteration count $t$ increases. If $t-k$ is sufficiently large, $p_{t,k} \to 0$, meaning that the influence of $\bm{x}_1, \cdots, \bm{x}_k$ for $\hat{\bm{r}}_t$ is negligible. Therefore, we consider replacing the MLE with a partial MLE, i.e., (\ref{Eqn:SGD_MAMP_c}) is modified to 
\begin{align}\label{Eqn:PM_MLE}
    \bm{r}_t = \tfrac{1}{{\varepsilon}^\gamma_t}\big(\hat{\bm{r}}_{t} + \textstyle\sum_{i=l_t}^t  p_{t, i}\bm{x}_i \big),
\end{align}
where $l_t \geq 1$ is a length parameter. In particular, the variant degenerates to the original GD-MAMP when $l_t = 1$. Corresponding to (\ref{Eqn:PM_MLE}), for $1 \leq  t' \leq t$,
\begin{align}
    v^{\gamma}_{t,t'} \approx \tfrac{1}{\varepsilon^\gamma_{t}\varepsilon^\gamma_{t'}} \textstyle{\sum}_{i=l_t}^{t}\textstyle{\sum}_{j=l_{t'}}^{t'}  \vartheta_{t, i}\vartheta_{t'\!,j}\big[\sigma^2  w_{t+t'-i-j} + v^{\bar{\phi}}_{i,j}\bar{ w}_{t-i,t'\!-j}\big].
\end{align}
In addition, $c_{t,0}, c_{t,2}, c_{t,3}$ are changed to
\BS\begin{align}
    c_{t,0} &= \textstyle\sum_{i=l_t}^{t-1} \vartheta_{t, i} w_{t-i} / w_0,\\
    c_{t,2} &= -\textstyle{\sum}_{i=l_t}^{t-1} \vartheta_{t, i} \big(\sigma^2 w_{t-i}+ \Re(v_{{t},{i}}^{\bar{\phi}})\bar{ w}_{0,t-i}\big), \\
    c_{t,3} &= \textstyle{\sum}_{i=l_t}^{t-1}\textstyle{\sum}_{j=l_t}^{t-1} \vartheta_{t, i}\vartheta_{t,j}\big(\sigma^2 w_{2t-i-j} + v^{\bar{\phi}}_{i,j}\bar{ w}_{t-i,t-j}\big),
\end{align}\ES
For simplicity, we can introduce a heuristic $l$ and let
\begin{align}
    l_t  = \max(1, t-l+1).
\end{align}
In practice, we can set $l$ according to the state evolution (SE). As a result, the time and space complexity due to the orthogonalization are reduced to $\mathcal{O}(NTl)$ and $\mathcal{O}(Nl)$ respectively. Moreover, when the eigenvalues of $\bm{A}\bm{A}^{\rm H}$ are unknown, OA-GD-MAMP with the partial MLE only requires approximating $\{\rchi_k, k<2l\}$. Thus, the number of matrix-vector products is reduced from $5T$ to $3T+2l$.

\section{General Formulation of MAMP Algorithms and Reformulated warm-started conjugate gradient VAMP}\label{Sec:CG}
In this section, we first develop a general formulation for designing MAMP algorithms based on gradient methods. The warm-started conjugate gradient VAMP (WS-CG-VAMP) \cite{skuratovs2022compressed} can be recovered as a special case of this formulation by applying the CG method. Then, we show that the computation of the orthogonalization parameters is susceptible to catastrophic cancellation, which may lead to numerical instability in finite-precision arithmetic. This analysis explains the finite-precision instability observed in WS-CG-VAMP. Next, we derive an equivalent reformulation of WS-CG-VAMP that eliminates redundant computations, yielding WS-CG-VAMP(r), which reduces the number of matrix-vector products by up to $50\%$. Finally, we compare the convergence speed of GD-MAMP and WS-CG-VAMP(r) in terms of the number of matrix-vector products. Our simulation results show that GD-MAMP converges faster than WS-CG-VAMP(r) when the condition number $\kappa(\bm{A})$ is small. When $\kappa(\bm{A})$ is large, WS-CG-VAMP(r) converges faster than GD-MAMP under high-precision arithmetic. However, under IEEE double-precision arithmetic, WS-CG-VAMP(r) diverges due to catastrophic cancellation. 

\subsection{General Formulation for Designing MAMP Algorithms} \label{Sec:GM_MAMP}
A general gradient-based formulation for designing MAMP algorithms is presented as follows: starting with $\bm{x}_1^{\phi} = \bm{0}_N$ and the outer iteration count $t = 1$,
\BS \label{Eqn:GF}
\begin{alignat}{2}
    {\rm Memory\ LE:} \quad && \quad \bm{x}_t^{\gamma} &= \tfrac{1}{\varepsilon^{\gamma}_{t}}\big[\bm{A}^{\mr H}\bm{u}^{(I)}_{t} + \textstyle\sum_{k=1}^t \omega^{(I)}_{t,k}\bm{x}_k^{\phi} \big], \label{Eqn:GF_MLE} \\
    {\rm NLE:} \quad && \bm{x}_{t+1}^{\phi} &= \phi_t(\bm{x}_t^{\gamma}). \label{Eqn:GF_NLE}
\end{alignat}
\ES
where $\phi_t(\cdot)$ is the same separable and
Lipschitz-continuous NLE as in OAMP/VAMP \cite{ma2017orthogonal, rangan2019vector}. Depending on the design requirements, one may use either the optimized damping in \eqref{Eqn:xi_2} or a heuristic damping. We omit damping here for simplicity. The details of the memory LE (MLE) are given below.

Let $\bm{z}_t = \bm{y} - \bm{A}\bm{x}_{t}^{\phi}$ and $\bm{W}_t = \sigma^2 \bm{I} + v_{t, t}^{\phi} \bm{A}\bm{A}^{\rm H}$, where $v_{t, t}^{\phi}$ is the error variance of $\bm{x}_t^{\phi}$. Since $\bm{W}_t$ is Hermitian positive definite, $\bm{W}_t^{-1}\bm{z}_t$ is the minimizer of the strongly convex quadratic function $F_t(\bm{u}) = \frac{1}{2}\bm{u}^{\rm H}\bm{W}_t\bm{u} - \Re(\bm{z}_t^{\rm H}\bm{u})$. The vector $\bm{u}^{(I)}_{t}$ approximates $\bm{W}_t^{-1}\bm{z}_t$, obtained after $I$ inner iterations of the following generic gradient method: for $0 \leq i < I$,
\BS\label{Eqn:gen_grad}
\begin{align}
    \Delta\bm{u}_t^{(i)} &= \bm{u}_t^{(i)} - \bm{u}_t^{(i-1)}, \label{Eqn:du}\\
    \hat{\bm{u}}_t^{(i)} &= \bm{u}_t^{(i)} + \eta_t^{(i)} \Delta\bm{u}_t^{(i)},  \\
    \hat{\bm{r}}_t^{(i)} &= \bm{z}_t - \bm{W}_t  \hat{\bm{u}}_t^{(i)}, \\
    \bm{p}_t^{(i+1)} &= \hat{\bm{r}}_t^{(i)} + \beta_t^{(i)} \bm{p}_t^{(i)}, \\
    \bm{u}_{t}^{(i+1)} &= \bm{u}_t^{(i)} + \theta_t^{(i)} \Delta\bm{u}_t^{(i)} + \alpha_t^{(i)}\bm{p}_t^{(i+1)}, \label{Eqn:u_t}
\end{align}
\ES
where $\eta_t^{(i)}$ is the look-ahead coefficient controlling the residual-evaluation point, $\beta_t^{(i)}$ is the direction coefficient weighting the previous search direction, $\theta_t^{(i)}$ is the momentum coefficient weighting the previous displacement, $\alpha_t^{(i)}$ is the step size along the current search direction. The initialization of $\bm{u}_t^{(0)}$, $\bm{u}_t^{(-1)}$ and $\bm{p}_t^{(0)}$ is given by 
\begin{align}
    \bm{u}_t^{(0)} = \bm{u}_{t-1}^{(I)},\ \,\bm{u}_t^{(-1)} = \bm{u}_{t-1}^{(I-1)},\ \,\bm{p}_t^{(0)} = \bm{p}_{t-1}^{(I)}, \label{Eqn:ini}
\end{align}
with $\bm{u}_1^{(0)} = \bm{u}_{1}^{(-1)} = \bm{p}_{1}^{(0)} = \bm{0}_M$.

The iteration in \eqref{Eqn:gen_grad} recovers several classical gradient methods, including
\begin{itemize}
    \item Gradient descent (GD): $\eta_t^{(i)}=\beta_t^{(i)}=\theta_t^{(i)}=0$. A typical choice is
    \begin{align}
        \alpha_t^{(i)} = \big(\sigma^2+v_{t, t}^{\phi}\lambda^\dag\big)^{-1},
    \end{align}
    where $\lambda^{\dag} = (\lambda_{\max} + \lambda_{\min}) / 2$ denotes the average of extreme eigenvalues of $\bm{A}\bm{A}^{\rm H}$. 
    \item Nesterov accelerated gradient (NAG): 
    $\beta_t^{(i)}=0$, and
    \begin{align}
        \alpha_t^{(i)} &= \big(\sigma^2 + v_{t, t}^{\phi}\lambda_{\max}\big)^{-1}, \\
        \eta_t^{(i)} &= \theta_t^{(i)} = \frac{\sqrt{\tilde{\kappa}}-1}{\sqrt{\tilde{\kappa}}+1},
    \end{align}
    where $\tilde{\kappa} = (\sigma^2 + v_{t, t}^{\phi}\lambda_{\max}) / (\sigma^2 + v_{t, t}^{\phi}\lambda_{\min})$.
    \item Conjugate gradient (CG): $\eta_t^{(i)}=\theta_t^{(i)}= 0$, 
    \begin{align}
        \alpha_{t}^{(i)} = \frac{\|\bm{r}_t^{(i)}\|^2} {\big\langle\bm{p}_t^{(i+1)},\bm{W}_t\bm{p}_t^{(i+1)}\big\rangle}, \ \;
        \beta_t^{(i)} = \frac{\|\bm{r}_t^{(i)}\|^2}{\|\bm{r}_{t}^{(i-1)}\|^2},
    \end{align}
    where $\beta_1^{(0)} = 0$ and $\beta_t^{(0)}=\beta_{t-1}^{(I)}$ for $t \geq 2$.
\end{itemize}
Different choices of these coefficients recover different gradient methods and hence yield different MAMP algorithms.  In the following lemma, we show that $\bm{u}_t^{(i)}$ and $\bm{p}_t^{(i)}$ belong to Krylov-type spans generated by $\{\bm{z}_k\}_{k=1}^t$, with coefficient matrices given by polynomials in $\bm{A}\bm{A}^{\mr{H}}$.
\begin{lemma}\label{Lem:gen_MAMP}
    For $t \geq 1$, $I \geq 1$ and $0 \leq i \leq I$, we have
    \BS \label{Eqn:Lem8}
    \begin{align}
        \bm{p}_t^{(i)} &= \textstyle\sum_{k=1}^t \bm{C}_{t,k}^{(i)}\bm{z}_k,\\
        \bm{u}_t^{(i)} &= \textstyle\sum_{k=1}^t \bm{D}_{t,k}^{(i)}\bm{z}_k, \label{Eqn:Lem8_b}\\
        \bm{C}_{t,k}^{(i)} &= \textstyle\sum_{j=0}^{(t-k)I+i-1} c_{t,k,j}^{(i)} (\bm{A}\bm{A}^{\rm H})^j,\\ 
        \bm{D}_{t,k}^{(i)} &= \textstyle\sum_{j=0}^{(t-k)I+i-1} d_{t,k,j}^{(i)} (\bm{A}\bm{A}^{\rm H})^j,
    \end{align}
    where $\{c_{t,k,j}^{(i)}\}, \{d_{t,k,j}^{(i)}\}$ are some power-basis coefficients. An empty sum is understood as zero, i.e., $\bm{C}_{t, t}^{(0)} = \bm{D}_{t, t}^{(0)} = \bm{0}_{M \times M}$.
    \ES
\end{lemma}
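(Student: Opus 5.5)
We argue by induction over the lexicographically ordered pair $(t,i)$, i.e., the sequence of all inner iterates $(1,0),(1,1),\dots,(1,I),(2,0),\dots$. Alongside $\bm{u}_t^{(i)}$ and $\bm{p}_t^{(i)}$ we also track $\bm{u}_t^{(i-1)}$, since \eqref{Eqn:du} needs it. For this auxiliary vector the claim is that it lies in the same kind of span with polynomial degree at most $(t-k)I+i-2$. This is consistent with the initialization \eqref{Eqn:ini}, because $\bm{u}_t^{(-1)}=\bm{u}_{t-1}^{(I-1)}$ has degree bound $(t-1-k)I+I-2=(t-k)I-2$. We write $\bm{K}=\bm{A}\bm{A}^{\rm H}$ and use the convention that $\bm{z}_k$ terms with $k$ beyond the current $t$ carry zero coefficients.

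\textbf{Base case and outer transitions.} For $t=1$, $i=0$ all vectors are zero, which matches the empty-sum convention. Now take $t\ge 2$, $i=0$. By \eqref{Eqn:ini}, $\bm{p}_t^{(0)}=\bm{p}_{t-1}^{(I)}=\sum_{k=1}^{t-1}\bm{C}_{t-1,k}^{(I)}\bm{z}_k$, and $\bm{C}_{t-1,k}^{(I)}$ has degree at most $(t-1-k)I+I-1=(t-k)I-1$, which is exactly the bound required for $\bm{C}_{t,k}^{(0)}$. We then set $\bm{C}_{t,k}^{(0)}=\bm{C}_{t-1,k}^{(I)}$ for $k<t$ and $\bm{C}_{t,t}^{(0)}=\bm{0}$. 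The same argument applies to $\bm{D}$. So the warm start shifts the degree index consistently, and no $\bm{z}_t$ component appears yet.

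\textbf{Inner step $i\to i+1$.} Assume the representations hold at $(t,i)$, and for $\bm{u}_t^{(i-1)}$ with degree at most $(t-k)I+i-2$. Then the following hold in turn.
\begin{itemize}
\item $\Delta\bm{u}_t^{(i)}$ and $\hat{\bm{u}}_t^{(i)}$ are combinations of $\bm{z}_k$ with degree at most $(t-k)I+i-1$.
\item $\hat{\bm{r}}_t^{(i)}=\bm{z}_t-\sigma^2\hat{\bm{u}}_t^{(i)}-v_{t,t}^{\phi}\bm{K}\hat{\bm{u}}_t^{(i)}$. Multiplication by $\bm{K}$ raises the degree by one, to $(t-k)I+i$. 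The extra $\bm{z}_t$ term contributes degree $0\le i$ to $k=t$.
\item $\bm{p}_t^{(i+1)}=\hat{\bm{r}}_t^{(i)}+\beta_t^{(i)}\bm{p}_t^{(i)}$ therefore has $\bm{C}_{t,k}^{(i+1)}$ of degree at most $(t-k)I+(i+1)-1$, as required.
\item $\bm{u}_t^{(i+1)}$ is a linear combination of $\bm{u}_t^{(i)}$, $\Delta\bm{u}_t^{(i)}$ and $\bm{p}_t^{(i+1)}$, so it satisfies the same bound.
\end{itemize}
The new "previous" iterate $\bm{u}_t^{(i)}$ has degree $(t-k)I+i-1=(t-k)I+(i+1)-2$, which closes the auxiliary hypothesis. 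The coefficients $c_{t,k,j}^{(i+1)}$ and $d_{t,k,j}^{(i+1)}$ follow explicitly from these recursions: shift $j\to j+1$ for the $\bm{K}$ term and combine linearly. The scalars $\eta,\beta,\theta,\alpha$ are treated as given constants, even when, as in CG, they depend on the data.

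\textbf{Main obstacle.} The only delicate point is bookkeeping the degree bound so that it is tight and consistent across the warm-start boundary. In particular, the auxiliary bound for $\bm{u}_t^{(-1)}$ must match. The case $i=0$, where $\bm{D}_{t,t}^{(0)}$ and $\bm{C}_{t,t}^{(0)}$ are empty sums, must also be handled so that the first inner step produces degree-$0$ terms in $\bm{z}_t$ only.

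The lemma requires $I\ge1$ so that $\bm{u}_{t-1}^{(I-1)}$ is a genuine earlier iterate. When $I=1$ the auxiliary degree bound $(t-k)-2$ may be negative, meaning zero, and this is again consistent: for $k=t-1$, $\bm{u}_{t-1}^{(0)}$ carries no $\bm{z}_{t-1}$ component.
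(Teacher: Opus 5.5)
Your proposal is correct and follows essentially the same route as the paper: a nested induction (outer on $t$, inner on $i$) that carries the auxiliary degree bound $(t-k)I+i-2$ on $\bm{u}_t^{(i-1)}$ and handles the warm start through the index shift $(t-1-k)I+I-1=(t-k)I-1$. The paper simply packages this degree bookkeeping into the nested spaces $\mathcal{H}_{t,s}$ and the identity $\mathcal{H}_{t,s}=\mathcal{H}_{t+1,s-I}$ for $s\le I$.
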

\begin{IEEEproof}
    See Appendix \ref{App:gen_MAMP}.
\end{IEEEproof}
While closed-form expressions for $c_{t,k,j}^{(i)}$ and $d_{t,k,j}^{(i)}$ are difficult to derive (except for standard gradient descent with $\eta_t^{(i)}=\beta_t^{(i)}=\theta_t^{(i)}=0$), we provide recursive constructions for ${c_{t,k,j}^{(i)}}$ and ${d_{t,k,j}^{(i)}}$ in the following lemma.
\begin{lemma}\label{Lem:c_and_d}
For $1 \leq k \leq t$ and $0 \leq i \leq I$, define the coefficient vectors of $\bm{C}_{t, k}^{(i)}$ and $\bm{D}_{t, k}^{(i)}$ in \eqref{Eqn:Lem8}, respectively, as
\BS\begin{align}
    \bm{c}_{t,k}^{(i)} &= \big[c_{t, k, 0}^{(i)}, \cdots, c_{t, k, (t-k)I+i-1}^{(i)}\big], \\
    \bm{d}_{t,k}^{(i)} &= \big[d_{t, k, 0}^{(i)}, \cdots, d_{t, k, (t-k)I+i-1}^{(i)}\big].
\end{align}\ES
For $m \geq 0$, let $\bm{0}_m^{\mr row}$ denote the $m$-dimensional zero row vector. Thus, $\bm{c}_{t,t}^{(0)}=\bm{d}_{t,t}^{(0)}=\bm{0}_0^{\mr row}$, where $\bm{0}_0^{\mr row}$ is the empty vector. For $\bm{a} \in \mathbb{C}^{1 \times m}$, define the left and right zero-padding operators as
\begin{align}
    \mathcal{L}(\bm{a}) = [0, \bm{a}],\ \; \mathcal{R}(\bm{a}) = [\bm{a}, 0].
\end{align}
Then, for $0 \leq i < I$, we have
\BS\label{Eqn:recursion}\begin{align}
    \bm{s}_{t,k}^{(i)} &= 
    \begin{cases}
        \mathcal{R}\big(\bm{d}_{t-1,k}^{(I-1)}\big),
            &\ \; i=0,\ k<t,\\
        \bm{0}_0^{\mr row},
            &\ \; i=0,\ k=t,\\
        \mathcal{R}\big(\bm{d}_{t,k}^{(i-1)}\big),
            &\ \; 1\leq i<I.
    \end{cases} 
    \displaybreak[4] \\
    \hat{\bm{d}}_{t,k}^{(i)} &= \bm{d}_{t,k}^{(i)} + \eta_t^{(i)} \big(\bm{d}_{t,k}^{(i)} - \bm{s}_{t,k}^{(i)}\big), \\
    \bm{c}_{t,k}^{(i+1)} &= \big[\delta_{t,k}, \mathbf{0}_{(t-k)I+i}^{\mr row}\big] + \mathcal{R}\Big(\beta_t^{(i)}\bm{c}_{t,k}^{(i)} - \sigma^2 \hat{\bm{d}}_{t,k}^{(i)}\Big) - v_{t,t}^{\phi}\mathcal{L}\Big(\hat{\bm{d}}_{t,k}^{(i)}\Big), \\
    \bm{d}_{t,k}^{(i+1)} &= \mathcal{R}\Big(\bm{d}_{t,k}^{(i)} + \theta_t^{(i)}\big(\bm{d}_{t,k}^{(i)} - \bm{s}_{t,k}^{(i)}\big)\Big) + \alpha_t^{(i)}\bm{c}_{t,k}^{(i+1)},
\end{align}\ES
where $\delta_{t, k} = 0$, $\bm{c}_{t,k}^{(0)} = \bm{c}_{t-1, k}^{(I)}$ and $\bm{d}_{t,k}^{(0)} = \bm{d}_{t-1, k}^{(I)}$ for $k < t$, and $\delta_{t, t} = 1$.
\end{lemma}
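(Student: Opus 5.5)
The plan is to prove Lemma~\ref{Lem:c_and_d} by induction over the combined index $(t,i)$ in lexicographic order, using the structure already established in Lemma~\ref{Lem:gen_MAMP}. Because $\bm{C}_{t,k}^{(i)}$ and $\bm{D}_{t,k}^{(i)}$ are polynomials in $\bm{A}\bm{A}^{\rm H}$, we identify each one with its power-basis coefficient row vector. Under this identification:
\begin{itemize}
\item multiplication by $\sigma^2\bm{I}$ scales the vector;
\item multiplication by $v_{t,t}^{\phi}\bm{A}\bm{A}^{\rm H}$ shifts it by one degree, which is exactly the operator $\mathcal{L}$;
\item $\mathcal{R}$ only pads a trailing zero, so that vectors of degree $(t-k)I+i-1$ can be added to vectors of degree $(t-k)I+i$.
\end{itemize}
Since the $\bm{z}_k$ are treated as formal generators, identities among the matrix coefficients of each $\bm{z}_k$ become identities among coefficient vectors. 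So it suffices to substitute the representations of Lemma~\ref{Lem:gen_MAMP} into each line of \eqref{Eqn:gen_grad} and read off the coefficient of $\bm{z}_k$.

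The base case is $t=1$, $i=0$, where $\bm{u}_1^{(0)}=\bm{u}_1^{(-1)}=\bm{p}_1^{(0)}=\bm{0}$, consistent with the empty vectors. The warm start \eqref{Eqn:ini} gives $\bm{c}_{t,k}^{(0)}=\bm{c}_{t-1,k}^{(I)}$ and $\bm{d}_{t,k}^{(0)}=\bm{d}_{t-1,k}^{(I)}$ for $k<t$, and empty vectors for $k=t$. The lengths match because $(t-1-k)I+I-1=(t-k)I-1$.

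For the inductive step, fix $0\le i<I$ and proceed through the lines of \eqref{Eqn:gen_grad} in order.
\begin{enumerate}
\item \emph{Previous iterate.} Define $\bm{s}_{t,k}^{(i)}$ as the coefficient vector of $\bm{u}_t^{(i-1)}$, padded to length $(t-k)I+i$.
  \begin{itemize}
  \item For $i\ge1$ this is $\mathcal{R}(\bm{d}_{t,k}^{(i-1)})$.
  \item For $i=0$ and $k<t$, $\bm{u}_t^{(-1)}=\bm{u}_{t-1}^{(I-1)}$, so it is $\mathcal{R}(\bm{d}_{t-1,k}^{(I-1)})$. The length check is $(t-1-k)I+I-2+1=(t-k)I-1$.
  \item For $k=t$ it is empty.
  \end{itemize}
  Hence $\Delta\bm{u}_t^{(i)}$ has coefficient vector $\bm{d}_{t,k}^{(i)}-\bm{s}_{t,k}^{(i)}$, and $\hat{\bm{u}}_t^{(i)}$ has coefficient vector $\hat{\bm{d}}_{t,k}^{(i)}$.
\item \emph{Residual.} Next, $\hat{\bm{r}}_t^{(i)}=\bm{z}_t-\sigma^2\hat{\bm{u}}_t^{(i)}-v_{t,t}^{\phi}\bm{A}\bm{A}^{\rm H}\hat{\bm{u}}_t^{(i)}$. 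The $\bm{z}_t$ term contributes a constant $1$ only when $k=t$; this is the $[\delta_{t,k},\bm{0}]$ term. The other two terms contribute $-\sigma^2\mathcal{R}(\hat{\bm{d}})$ and $-v_{t,t}^{\phi}\mathcal{L}(\hat{\bm{d}})$.
\item \emph{Search direction.} Adding $\beta_t^{(i)}\bm{p}_t^{(i)}$ contributes $\mathcal{R}(\beta_t^{(i)}\bm{c}_{t,k}^{(i)})$, which yields the formula for $\bm{c}_{t,k}^{(i+1)}$.
\item \emph{Update.} Line \eqref{Eqn:u_t} yields $\bm{d}_{t,k}^{(i+1)}$: the $\mathcal{R}$-padded momentum combination plus $\alpha_t^{(i)}\bm{c}_{t,k}^{(i+1)}$.
\end{enumerate}

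The main obstacle I expect is bookkeeping rather than any conceptual difficulty. Every summand has to be padded to the common length $(t-k)I+i+1$, which is the new maximal degree plus one. The edge case $k=t$, $i=0$ also needs care: there the empty vectors make $\mathcal{L}$ and $\mathcal{R}$ of an empty vector equal to $[0]$, and $\bm{c}_{t,t}^{(1)}$ must reduce to $[1]$ (plus $\beta$ terms that vanish). I would verify these length identities explicitly in each case, and then note that since the scalar coefficients $\eta,\beta,\theta,\alpha,v^\phi_{t,t}$ are deterministic, they commute with the polynomial identification.
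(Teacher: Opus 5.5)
Your proposal is correct and follows the same argument as the paper: the paper substitutes the polynomial representations of Lemma~\ref{Lem:gen_MAMP} into each line of \eqref{Eqn:gen_grad} together with \eqref{Eqn:ini}, reading $\mathcal{R}$ as degree-preserving padding and $\mathcal{L}$ as multiplication by $\bm{A}\bm{A}^{\rm H}$, and your explicit length checks and treatment of the $k=t$, $i=0$ edge case simply spell out bookkeeping the paper leaves implicit. One small correction: in the CG specialization, $\alpha_t^{(i)}$ and $\beta_t^{(i)}$ depend on residual norms and so are not deterministic, but this is harmless because your argument only needs them to be scalars.
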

\begin{IEEEproof}
Under the polynomial representations in Lemma \ref{Lem:gen_MAMP}, the right zero-padding operator $\mathcal{R}(\cdot)$ preserves a polynomial in $\bm{A}\bm{A}^{\rm H}$, whereas the left zero-padding operator $\mathcal{L}(\cdot)$ corresponds to multiplication by $\bm{A}\bm{A}^{\rm H}$. Applying these facts to \eqref{Eqn:gen_grad}, together with \eqref{Eqn:ini}, we complete the proof.
\end{IEEEproof}

By Lemma~\ref{Lem:gen_MAMP}, we can rewrite $\bm{A}^{\rm H}\bm{u}_t^{(I)}$ in \eqref{Eqn:GF_MLE} as
\begin{align}
    \bm{A}^{\rm H}\bm{u}_t^{(I)} &= \bm{A}^{\rm H}\textstyle\sum_{k=1}^{t} \bm{D}_{t,k}^{(I)}
    (\bm{y}-\bm{A}\bm{x}_k^{\phi}) \notag \\
    &= \underbrace{\big(\textstyle\sum_{k=1}^{t}\bm{A}^{\rm H}\bm{D}_{t, k}^{(I)}\big)}_{\hat{\bm{Q}}_t}\bm{y} + \textstyle\sum_{k=1}^{t}\underbrace{-\bm{A}^{\rm H}\bm{D}_{t, k}^{(I)}\bm{A}}_{\hat{\bm{P}}_{t,k}}\bm{x}_k^{\phi}. \label{Eqn:Q_P}
\end{align}
Since $\bm{D}_{t,k}^{(I)}$ is a polynomial in $\bm{A}\bm{A}^{\rm H}$, both $\hat{\bm{Q}}_t\bm{A}$ and $\hat{\bm{P}}_{t,k}$ are polynomials in
$\bm{A}^{\rm H}\bm{A}$. Thus, Lemma~\ref{Lemma:MAMP} applies. The orthogonalization parameters $\{\omega^{(I)}_{t,k}\}$ and the normalization factor $\varepsilon^{\gamma}_{t}$ are given by
\BS\begin{align}
    \omega^{(I)}_{t,k} &= \tfrac{1}{N}{\rm tr}\big\{\bm{A}^{\rm H}\bm{D}_{t,k}^{(I)}\bm{A}\big\} \label{Eqn:gen_orth_a} \\
    &= \textstyle\sum_{j=0}^{J_{t, k}} d_{t,k,j}^{(I)} \lambda_{j+1}, \label{Eqn:gen_orth_b} \\
    \varepsilon^{\gamma}_{t} &= \textstyle\sum_{k=1}^{t} \omega^{(I)}_{t,k}, \label{Eqn:gen_ep}
\end{align}\ES
where $J_{t, k} = (t-k+1)I-1$, $\lambda_{j} = \tfrac{1}{N}{\rm tr}\{(\bm{A}\bm{A}^{\rm H})^{j}\}$, and the coefficients $\{d_{t,k,j}^{(I)}\}$ are computed recursively from Lemma~\ref{Lem:c_and_d}.

Next, for $1 \leq t' \leq t$, let $v_{t, t'}^\gamma$ denote the covariance between $\bm{x}_t^{\gamma}$ and $\bm{x}_{t'}^{\gamma}$, given by
\begin{align}\label{Eqn:gen_var}
    v_{t, t'}^\gamma = \frac{1}{(\varepsilon^{\gamma}_{t})^{*}\varepsilon^{\gamma}_{t'}}\textstyle\sum_{k_1=1}^t\sum_{k_2=1}^{t'}\sum_{j_1=0}^{J_{t, k_1}}\sum_{j_2=0}^{J_{t', k_2}} \Big[(d_{t,k_1,j_1}^{(I)})^{*}d_{t',k_2,j_2}^{(I)}
    (\sigma^2\lambda_{j_1+j_2+1} + \tilde{\lambda}_{j_1,j_2}v_{k_1,k_2}^{\phi})\Big],
\end{align}
where $\tilde{\lambda}_{j_1,j_2} \equiv \lambda_{j_1+j_2+2}-\lambda_{j_1+1}\lambda_{j_2+1}$. See Appendix \ref{App:Var} for the derivation. The variance $v_{t, t}^{\gamma}$ can also be estimated by \cite{skuratovs2022compressed}:
\begin{align}
    v_{t, t}^{\gamma} \overset{\rm a.s.}{=} \lim_{N\to\infty}\tfrac{1}{N}\| \bm{x}_t^{\gamma} - \bm{x}_t^{\phi} \|^2 - v_{t,t}^{\phi}.
\end{align}
Note that the algorithm requires only the variance $v_{t, t}^\gamma$, while the SE needs $\{v_{t, t'}^{\gamma}\}_{t'=1}^{t}$.
\begin{remark}
    The initialization in \eqref{Eqn:ini} causes $\bm{u}_t^{(I)}$ to retain dependence on $\bm{z}_1, \cdots\!, \bm{z}_{t-1}$. Consequently, the orthogonalization term in \eqref{Eqn:GF_MLE} involves all the input estimates $\bm{x}_1^{\phi}, \cdots\!, \bm{x}_t^{\phi}$, rather than only the current estimate $\bm{x}_t^{\phi}$. In contrast, if we reset $\bm{u}_t^{(0)} = \bm{u}_{t}^{(-1)} = \bm{p}_{t}^{(0)} = \bm{0}_M$ at each outer iteration, then $\bm{u}_t^{(I)}$ depends only on $\bm{z}_t$, which implies that $\omega^{(I)}_{t,k} = 0$ for $k <t$. Hence, the MLE reduces to a memoryless linear estimator. In this case, the fixed-point MSE performance is generally worse than that of OAMP/VAMP unless $I$ is large enough for $\bm{u}_t^{(I)}$ to accurately approximate $\bm{W}_t^{-1}\bm{z}_t$\cite{takeuchi2022convergence, liu2021sufficient, liu2022sufficient}. EP-CG proposed in \cite{takeuchi2017rigorous} is an example of such a memoryless MLE implemented using the conjugate gradient method.
\end{remark}

\subsection{Finite-Precision Instability in Computing Orthogonalization Parameters}\label{Sec:gen_precision}
Recall from \eqref{Eqn:gen_orth_a} that
\begin{align}
    \omega_{t,k}^{(I)} = \tfrac{1}{N}{\rm tr}\big\{\bm{A}^{\rm H}\bm{D}_{t,k}^{(I)}\bm{A}\big\}.
\end{align}
By Lemma \ref{Lem:gen_MAMP}, we know that $\bm{A}^{\rm H}\bm{D}_{t,k}^{(I)}\bm{A}$ is a polynomial in $\bm{A}^{\rm H}\bm{A}$, with the power-basis coefficient vector $\big[0, d_{t, k, 0}^{(I)}, \cdots\!, d_{t, k, J_{t, k}}^{(I)}\big]$, where $J_{t, k} = (t-k+1)I-1$. This is why 
\begin{align}
    \omega_{t,k}^{(I)} = \textstyle\sum_{j=0}^{J_{t, k}} d_{t,k,j}^{(I)} \lambda_{j+1}. \label{Eqn:omega_rp}
\end{align}
Without loss of generality, we assume that $d_{t,k,j}^{(I)} \in \mathbb{R}$. The key point is that a small magnitude of $\omega_{t,k}^{(I)}$ does not mean that each individual summand $d_{t,k,j}^{(I)}\lambda_{j+1}$ has a small magnitude. Since $\lambda_{j+1} \geq 0$, the signs of $d_{t,k,j}^{(I)}\lambda_{j+1}$ are determined by
$d_{t,k,j}^{(I)}$. For a large $J_{t, k}$, some summands
$d_{t,k,j}^{(I)}\lambda_{j+1}$ may be large and positive, while others may be large and negative, even when their exact sum $\omega_{t,k}^{(I)}$ is much smaller in magnitude. In finite-precision floating-point arithmetic, the subtraction of such nearly balanced quantities can cause a considerable loss of significant digits. More explicitly, let $\tau_{t,k,j}^{(I)} \equiv d_{t,k,j}^{(I)}\lambda_{j+1}$, and
\begin{align}
    S_{t,k}^{+} = \textstyle\sum_{\tau_{t,k,j}^{(I)}>0} \tau_{t,k,j}^{(I)},\ \;
    S_{t,k}^{-} = -\textstyle\sum_{\tau_{t,k,j}^{(I)}<0} \tau_{t,k,j}^{(I)}.
\end{align}
It is clear that $\omega_{t,k}^{(I)} = S_{t,k}^{+}-S_{t,k}^{-}$. Assume that $\omega_{t,k}^{(I)} \neq 0$. Define the ``cancellation factor'' as
\begin{align}
    \mathcal{C}_{t,k} \equiv 
    \frac{\textstyle\sum_{j=0}^{J_{t, k}}\big|\tau_{t, k, j}^{(I)}\big|}{\big|\omega_{t, k}^{(I)}\big|} = 
    \frac{S_{t,k}^{+}+S_{t,k}^{-}}{\big|S_{t,k}^{+}-S_{t,k}^{-}\big|},
    \label{Eqn:can_condition}
\end{align}
which measures the total magnitude of the summands relative to the magnitude of their final sum. Specifically, 
\begin{itemize}
    \item $\mathcal{C}_{t,k} = 1$: All the nonzero terms $\tau_{t,k,j}^{(I)}$ have the same sign. Hence, no cancellation occurs in the summation.
    \item $\mathcal{C}_{t,k} \gg 1$: The final sum $\omega_{t,k}^{(I)}$ is obtained by summing positive and negative terms that are much larger than $\omega_{t,k}^{(I)}$, indicating severe cancellation. For example, if $S_{t,k}^{+}=10^{16}+1$ and $S_{t,k}^{-}=10^{16}$, then $\mathcal{C}_{t,k} = 2 \times 10^{16}+1$.
\end{itemize}
A large $\mathcal{C}_{t,k}$ may lead to a catastrophic loss of significant digits in finite-precision arithmetic. To illustrate this more clearly, we present the following lemma.
\begin{lemma}\label{Lem:round}
    Let $\omega_{t,k}^{\rm fp}$ denote the result of computing $\omega_{t,k}^{(I)} = \textstyle\sum_{j=0}^{J_{t, k}} d_{t,k,j}^{(I)} \lambda_{j+1}$ using standard sequential floating-point arithmetic. Assume that $\omega_{t,k}^{(I)}\neq0$, that $\{d_{t,k,j}^{(I)}\}$ and $\{\lambda_{j+1}\}$ are exact, and that no overflow or underflow occurs. Then, the relative roundoff error satisfies
    \begin{align}
        \frac{\big|{\omega}_{t,k}^{\mr{fp}}-\omega_{t,k}^{(I)}\big|}{\big|\omega_{t,k}^{(I)}\big|}\leq \frac{(J_{t,k}+1) u_{\rm fp}}{1-(J_{t,k}+1)u_{\rm fp}} \mathcal{C}_{t,k},
        \label{Eqn:relative_error}
    \end{align}
    where $\mathcal{C}_{t,k}$ is defined in \eqref{Eqn:can_condition}, and $u_{\rm fp}$ denotes the unit roundoff.
\end{lemma}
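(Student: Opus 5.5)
The plan is to follow the standard backward error analysis for floating-point dot products (Higham's framework). We use the standard model $\mathrm{fl}(a\circ b)=(a\circ b)(1+\epsilon)$ with $|\epsilon|\le u_{\rm fp}$ for $\circ\in\{+,\times\}$. Write $n=J_{t,k}+1$ for the number of summands $\tau_j\equiv\tau_{t,k,j}^{(I)}=d_{t,k,j}^{(I)}\lambda_{j+1}$. Sequential evaluation computes $\hat s_0=\mathrm{fl}(d_0\lambda_1)$ and then $\hat s_j=\mathrm{fl}(\hat s_{j-1}+\mathrm{fl}(d_j\lambda_{j+1}))$.

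\textbf{Step 1: backward error representation.} Unrolling the recursion, each product contributes one factor $(1+\epsilon)$. The summand $\tau_j$ then passes through at most $n-j$ additions (for $j\ge1$), and the first summand through $n-1$ additions. Hence
\begin{align*}
\omega_{t,k}^{\rm fp}=\textstyle\sum_{j=0}^{J_{t,k}}\tau_j\prod_{\ell=1}^{m_j}(1+\epsilon_{j,\ell}),\qquad m_j\le n,\ |\epsilon_{j,\ell}|\le u_{\rm fp}.
\end{align*}

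\textbf{Step 2: bound on the accumulated factors.} Assuming $nu_{\rm fp}<1$ (implicit in the statement, since otherwise the bound is vacuous), we invoke the standard lemma $\big|\prod_{\ell=1}^{m}(1+\epsilon_\ell)-1\big|\le\gamma_m\equiv\frac{mu_{\rm fp}}{1-mu_{\rm fp}}$. I would prove it quickly by induction, or by bounding $(1+u_{\rm fp})^m-1\le \frac{mu_{\rm fp}}{1-mu_{\rm fp}}$ and $1-(1-u_{\rm fp})^m\le mu_{\rm fp}$. Since $\gamma_m$ is monotone in $m$, every factor is bounded by $\gamma_n$.

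\textbf{Step 3: conclusion.} Subtracting the exact sum and applying the triangle inequality gives
\begin{align*}
|\omega^{\rm fp}_{t,k}-\omega^{(I)}_{t,k}|\le\gamma_n\textstyle\sum_j|\tau_j|=\gamma_n(S^+_{t,k}+S^-_{t,k}).
\end{align*}
Dividing by $|\omega_{t,k}^{(I)}|\neq0$ and using the definition \eqref{Eqn:can_condition} of $\mathcal{C}_{t,k}$ yields \eqref{Eqn:relative_error}. The hypotheses that there is no overflow or underflow and that $d$ and $\lambda$ are exact are exactly what the standard rounding model requires.

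\textbf{Main obstacle.} The only delicate point is the count in Step 1. Each summand must incur at most $n=J_{t,k}+1$ rounding factors in total (one multiplication plus at most $J_{t,k}$ additions), so that the constant comes out as $(J_{t,k}+1)u_{\rm fp}$ rather than something larger. I would verify this count carefully: the first two summands see $1+J_{t,k}$ factors, and later ones see fewer. The remaining task is the clean proof of the $\gamma_m$ lemma.
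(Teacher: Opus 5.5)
Your proposal is correct and follows essentially the same route as the paper's proof in Appendix~D. Both arguments write the sequentially computed sum as $\sum_j \tau_j(1+e_j)$, where each $\tau_j$ incurs one multiplication rounding and at most $J_{t,k}$ addition roundings, bound $|e_j|\le\gamma_{J_{t,k}+1}$, and then apply the triangle inequality and divide by $|\omega_{t,k}^{(I)}|$. Your explicit count of rounding factors and your sketch of the $\gamma_m$ bound fill in details that the paper only cites as the standard product bound. Like the paper, you need the side condition $(J_{t,k}+1)u_{\rm fp}<1$; without it the right-hand side is negative or undefined, so the bound fails rather than being merely vacuous.
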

\begin{IEEEproof}
    See Appendix \ref{App:proof_round}.
\end{IEEEproof}
In almost all practical cases, $(J_{t,k}+1)u_{\rm fp} \ll 1$. Thus, we have
\begin{align}
    \frac{(J_{t,k}+1) u_{\rm fp}}{1-(J_{t,k}+1)u_{\rm fp}} \mathcal{C}_{t,k} \approx (J_{t,k}+1) u_{\rm fp}\mathcal{C}_{t,k}.
\end{align}
When $(J_{t,k}+1)u_{\rm fp}\ \mathcal{C}_{t,k}$ approaches or exceeds one, the upper bound on the relative error becomes of order one. Consequently, the bound can no longer guarantee even one correct significant digit in ${\omega}_{t,k}^{\mr{fp}}$. In other words, the rounding error may be as large as the exact $\omega_{t, k}^{(I)}$ in the worst case. 

For example, consider IEEE double-precision arithmetic with $u_{\rm fp} = 2^{-53} \approx 1.11 \times 10^{-16}$, $S_{t,k}^{+}=10^{16}+1$ and $S_{t,k}^{-}=10^{16}$. We have
\begin{align}
    \omega_{t,k}^{(I)}=1,\ \; \omega_{t,k}^{\mr{fp}} = \mr{fl}(10^{16}+1) - \mr{fl}(10^{16}) = 10^{16} - 10^{16} = 0.
\end{align}
The relative error and its upper bound are
\begin{align}
    1 = \frac{\big|{\omega}_{t,k}^{\mr{fp}}-\omega_{t,k}^{(I)}\big|}{\big|\omega_{t,k}^{(I)}\big|}\leq \frac{(J_{t,k}+1) u_{\rm fp}}{1-(J_{t,k}+1)u_{\rm fp}} \mathcal{C}_{t,k} \approx 2.2(J_{t,k}+1).
\end{align}
Clearly, catastrophic cancellation occurs in this example.

The bound in \eqref{Eqn:relative_error} is optimistic,
since it assumes exact $\{d_{t,k,j}^{(I)}\}$ and exact 
$\{\lambda_{j+1}\}$. In practice, the coefficients
$\{d_{t,k,j}^{(I)}\}$ are themselves generated recursively, while the high-order spectral moments $\{\lambda_{j+1}\}$ may also accumulate rounding errors. These additional errors are amplified by the same cancellation mechanism. If catastrophic cancellation occurs, the computed orthogonalization parameters may be unreliable, causing the divergence of the algorithm.

\subsection{Reformulated WS-CG-VAMP}\label{SSec:WS}
The conjugate gradient (CG) method iteratively solves Hermitian positive-definite linear systems by generating search directions that are mutually conjugate with respect to the system matrix. Specializing the general formulation in Section~\ref{Sec:GM_MAMP} to CG recovers the WS-CG-VAMP algorithm proposed in~\cite{skuratovs2022compressed}, thereby
showing that WS-CG-VAMP falls within the MAMP framework.

\textbf{\emph{WS-CG-VAMP}}\cite{skuratovs2022compressed}: Starting with $t = 1$ and $\bm{x}_1^{\phi} = \bm{0}_N$,
\BS\begin{alignat}{2}
    {\rm MLE:} \quad && \quad\bm{x}_t^{\gamma} &= \tfrac{1}{\varepsilon^{\gamma}_{t}}\big(\bm{A}^{\mr H}\bm{u}^{(I)}_{t} + \textstyle\sum_{k=1}^t \omega^{(I)}_{t,k}\bm{x}_k^{\phi} \big), \label{Eqn:WS_MLE} \\
    {\rm NLE:} \quad && \bm{x}_{t+1}^{\phi} &= \phi_t(\bm{x}_t^{\gamma}). \label{Eqn:WS_NLE}
\end{alignat}\ES
Setting $\eta_t^{(i)}=\theta_t^{(i)}= 0$ in \eqref{Eqn:gen_grad} gives the following CG inner iterations: for $0 \leq i < I$,
\BS\begin{align}
    \bm{p}_t^{(i+1)} &= \bm{r}_t^{(i)} + \beta_t^{(i)} \bm{p}_t^{(i)}, \\
    \bm{u}_t^{(i+1)} &= \bm{u}_t^{(i)} + \alpha_t^{(i)}\bm{p}_t^{(i+1)}, \\
    \bm{r}_t^{(i+1)} &= \bm{r}_t^{(i)} - \alpha_t^{(i)}\bm{W}_t\bm{p}_t^{(i+1)}, \label{Eqn:mvcg_0}
\end{align}
where 
\begin{align}
    \alpha_{t}^{(i)} &= \frac{\|\bm{r}_t^{(i)}\|^2} {\big\langle\bm{p}_t^{(i+1)},\bm{W}_t\bm{p}_t^{(i+1)}\big\rangle}, \label{Eqn:mvcg_1} \\
    \beta_t^{(i)} &= \frac{\|\bm{r}_t^{(i)}\|^2}{\|\bm{r}_{t}^{(i-1)}\|^2},\ \; i \geq 1.
\end{align}\ES
The initialization is
\begin{align}
    \bm{u}_t^{(0)} &= \bm{u}_{t-1}^{(I)},\ \,\bm{p}_t^{(0)} = \bm{p}_{t-1}^{(I)},\ \;
    \beta_t^{(0)} = \beta_{t-1}^{(I)},\ \; t \geq 2 \\
    \bm{r}_t^{(0)} &= \bm{z}_t - \bm{W}_t \bm{u}_t^{(0)}, \label{Eqn:mvcg_2}
\end{align}
with $\bm{u}_1^{(0)} = \bm{p}_{1}^{(0)} = \bm{0}_M$ and $\beta_1^{(0)} = 0$. The orthogonalization parameters $\{\omega^{(I)}_{t,k}\}$ can be obtained by setting $\eta_t^{(i)}=\theta_t^{(i)}= 0$ in Lemma~\ref{Lem:c_and_d}.

A direct implementation of WS-CG-VAMP requires $2I+4$ matrix-vector products per iteration, outlined as follows:
\begin{itemize}
    \item Computing $\bm{z}_t = \bm{y} - \bm{A}\bm{x}_t^{\phi}$ requires one.
    \item Computing $\bm{W}_t\bm{u}_{t}^{(0)} = \sigma^2\bm{u}_{t}^{(0)} + v_{t, t}^{\phi} \bm{A}\bm{A}^{\rm H}\bm{u}_{t}^{(0)}$ in \eqref{Eqn:mvcg_2} requires two.
    \item Computing $\bm{W}_t\bm{p}_t^{(i+1)} = \sigma^2\bm{p}_t^{(i+1)} + v_{t, t}^{\phi} \bm{A}\bm{A}^{\rm H}\bm{p}_t^{(i+1)}$ in \eqref{Eqn:mvcg_0}-\eqref{Eqn:mvcg_1} for $0 \leq i < I$ requires $2I$.
    \item Computing $\bm{A}^{\mr H}\bm{u}^{(I)}_{t}$ in \eqref{Eqn:WS_MLE} requires one.
\end{itemize}
Indeed, the computations of $\bm{A}^{\mr H}\bm{u}^{(I)}_{t}$ and $\bm{W}_t\bm{u}_{t}^{(0)}$ are redundant. To eliminate these, we introduce some intermediate variables. Let $\bm{b}_t^{(i)} = \bm{A}^{\rm H} \bm{p}_t^{(i)}$, $\bm{e}_t^{(i)} = \bm{A}\bm{A}^{\rm H} \bm{p}_t^{(i)}$, $\bm{h}_t^{(i)} =  \bm{A}\bm{A}^{\rm H} \bm{u}_t^{(i)}$ and $\bm{q}_t^{(I)} = \bm{A}^{\mr H}\bm{u}^{(I)}_{t}$. This leads to the following equivalent reformulation of WS-CG-VAMP.

\textbf{\emph{Reformulated WS-CG-VAMP}}: Starting with $t = 1$, $\bm{q}_{0}^{(I)} = \bm{0}_N$ and $\bm{x}_1^{\phi} = \bm{0}_N$,
\BS\begin{alignat}{2}
    {\rm MLE:} 
    \quad && \bm{q}_t^{(I)} 
    &= \bm{q}_{t-1}^{(I)} + \textstyle\sum_{i=0}^{I-1}\alpha_t^{(i)}\bm{b}_t^{(i+1)}, \\
    && \quad \bm{x}_t^{\gamma} &= \tfrac{1}{{\varepsilon}^{\gamma}_{t}}\big(\bm{q}_t^{(I)} + \textstyle\sum_{k=1}^t \omega^{(I)}_{t,k}\bm{x}_k^{\phi} \big),  \\
    {\rm NLE:} \quad && \bm{x}_{t+1}^{\phi} &= \phi_t(\bm{x}_t^{\gamma}).
\end{alignat}\ES
where $\bm{b}_t^{(i)}$ and $\alpha_{t}^{(i)}$ are given by the following iterations: for $0\leq i<I$,
\BS\begin{align}
    \bm{p}_t^{(i+1)} &= \bm{r}_t^{(i)} + \beta_t^{(i)}\bm{p}_t^{(i)},  \\
    \bm{b}_t^{(i+1)} &= \bm{A}^{\mr H}\bm{p}_t^{(i+1)}, \\
    \bm{e}_t^{(i+1)} &= \bm{A}\bm{b}_t^{(i+1)}, \\
    \bm{u}_t^{(i+1)} &= \bm{u}_t^{(i)} + \alpha_t^{(i)}\bm{p}_t^{(i+1)},  \\
    \bm{h}_t^{(i+1)} &= \bm{h}_t^{(i)} + \alpha_t^{(i)}\bm{e}_t^{(i+1)},  \\
    \bm{r}_t^{(i+1)} &= \bm{r}_t^{(i)} - \alpha_t^{(i)}
    \Big( \sigma^2\bm{p}_t^{(i+1)} + v_{t,t}^{\phi}\bm{e}_t^{(i+1)} \Big), 
\end{align}
where
\begin{align}
    \alpha_t^{(i)} &=
    \frac{\|\bm{r}_t^{(i)}\|^2}{\sigma^2\|\bm{p}_t^{(i+1)}\|^2 + v_{t,t}^{\phi}\|\bm{b}_t^{(i+1)}\|^2}, \\
    \beta_t^{(i)} &= \frac{\|\bm{r}_t^{(i)}\|^2}{\|\bm{r}_t^{(i-1)}\|^2},\ \; i \geq 1
\end{align}\ES
The initialization is
\begin{align}
    \bm{u}_t^{(0)} &= \bm{u}_{t-1}^{(I)},\ \;\bm{h}_t^{(0)} = \bm{h}_{t-1}^{(I)},\ \; t \geq 2 \\
    \bm{p}_t^{(0)} &= \bm{p}_{t-1}^{(I)},\ \;\beta_t^{(0)} = \beta_{t-1}^{(I)},\ \; t \geq 2 \\
    \bm{r}_t^{(0)} &= \bm{z}_t - \sigma^2\bm{u}_t^{(0)} - v_{t,t}^{\phi}\bm{h}_t^{(0)},
\end{align}
with $\bm{u}_1^{(0)} = \bm{h}_1^{(0)} = \bm{p}_1^{(0)} = \bm{0}_M$ and $\beta_1^{(0)}=0$.

This reformulated algorithm, referred to as WS-CG-VAMP(r),
requires only $2I+1$ matrix-vector products per iteration. For a fair comparison, we evaluate the convergence of GD-MAMP and WS-CG-VAMP(r) in terms of the
number of matrix-vector products rather than the number of iterations. As will be shown in Section~\ref{Sec:sim} (Figures~\ref{Fig:CG_kap}-\ref{Fig:CG_vpa}):
\begin{itemize}
    \item GD-MAMP converges faster than WS-CG-VAMP(r) when the condition number $\kappa(\bm{A})$ is small.
    \item For large $\kappa(\bm A)$,  WS-CG-VAMP(r) converges faster than GD-MAMP under high-precision
    arithmetic. Under IEEE double-precision arithmetic, WS-CG-VAMP(r) may diverge due to catastrophic cancellation in the computation of the orthogonalization parameters, as analyzed in Section~\ref{Sec:gen_precision}. In other words, realizing the convergence advantage of WS-CG-VAMP(r) requires high-precision arithmetic, which incurs substantial runtime overhead.
\end{itemize}
\begin{table}[htb]
\renewcommand{\arraystretch}{1.5} 
\centering \footnotesize \setlength{\tabcolsep}{1mm}{
\caption{Comparison for GD-MAMP and WS-CG-VAMP}
\label{table:CG}
\begin{tabular}{|c|c|c|c|}
\hline
Algorithm & \makebox[2.4cm]{Version} & \makecell{Number of matrix-vector \\ products per iteration} & \makecell{Numerical problem} \\
\hline
\hline
\multirow{2}{*}{GD-MAMP} & original \cite{liu2022memory} & $4$ & overflow (with large $\kappa$) \\
\cline{2-4}
& OA-GD-MAMP (this paper) & $3$ & no \\
\hline
\multirow{2}{*}{WS-CG-VAMP} & original \cite{skuratovs2022compressed} & $2I+4$ & \multirow{2}{*}{\makecell{catastrophic cancellation \\ (with large $\kappa$)}} \\
\cline{2-3}
& WS-CG-VAMP(r) (this paper) & $2I+1$ & \\
\hline
\end{tabular}} 
\end{table}
Table~\ref{table:CG} summarizes the comparison for GD-MAMP and WS-CG-VAMP.

\section{Simulation Results}\label{Sec:sim}
In this section, we present numerical simulations to support our results in Sections \ref{Sec:OA-GD}-\ref{Sec:CG}.

\subsection{Experiment Setup}
We set the matrix $\bm{A}$ and the signal $\bm{x}$ as follows:
\begin{itemize}
    % \item The signal $\bm{x} = [x_1, \cdots\!, x_N]^{\rm T}$ is IID. For the real-valued setting, each element $x_k$ follows a Bernoulli-Gaussian distribution, i.e., for $k \in [N]$, 
    % \begin{align}
    %     x_k = 
    %     \begin{cases}
    %     0, & \text{probability } 1 - \mu \\[1mm]
    %     {\mathcal{N}}(0, {\mu}^{-1}), & \text{probability } \mu,
    %     \end{cases}
    % \end{align}
    % where $\mu \in (0, 1)$ is the sparsity rate. For the complex-valued setting, each element $x_k$ follows a complex Bernoulli-Gaussian distribution, i.e., for $k \in [N]$, 
    % \begin{align}
    %     x_k = 
    %     \begin{cases}
    %     0, & \text{probability } 1 - \mu \\[1mm]
    %     {\mathcal{CN}}(0, {\mu}^{-1}), & \text{probability } \mu.
    %     \end{cases}
    % \end{align}
    % Hence, we have $\mathrm{E}\{\bm{x}\}=\bm{0}$ and $\tfrac{1}{N}{\rm E}\{\|\bm{x}\|^2\}=1$. The signal-to-noise ratio (SNR) is defined as ${\rm SNR} = 1 / \sigma^2$.
    \item The signal $\bm{x}=[x_1,\ldots,x_N]^{\rm T}$ has IID Bernoulli-Gaussian entries: for $k \in [N]$,
    \begin{align}
        x_k = b_k g_k,\quad
        b_k\sim\operatorname{Bern}(\mu),\quad
        g_k\sim
        \begin{cases}
            \mathcal{N}(0,\mu^{-1}), & \text{real-valued setting},\\
            \mathcal{CN}(0,\mu^{-1}), & \text{complex-valued setting},
        \end{cases}
    \end{align}
    where $b_k$ is independent
    of $g_k$. Here, $\mu\in(0,1)$ is the sparsity rate. Hence, $\mathbb{E}[\bm{x}]=\bm{0}$ and $N^{-1}\mathbb{E}[\|\bm{x}\|^2]=1$, and the SNR is
    $\mathrm{SNR}=1/\sigma^2$.
    \item Unless otherwise stated, we use the following sensing matrix $\bm{A}$:
    \begin{align}
        \bm{A} = \bm{\Sigma}\bm{\Pi}\bm{F}\bm{S}, \label{Eqn:spA}
    \end{align}
    where $\bm{\Pi}$ is a uniformly random permutation matrix, $\bm{F}$ is the $N$-point normalized discrete cosine transform (DCT) matrix for the real-valued setting or the discrete Fourier transform (DFT) matrix for the complex-valued setting, $\bm{S}$ is a diagonal matrix with IID Rademacher entries independent of $\bm{\Pi}$, $\bm{\Sigma} \in \mathbb{R}^{M \times N}$ is a rectangular diagonal matrix with diagonal entries $\sigma_1, \cdots, \sigma_J$, where $J = \min(M, N)$. We set $\sigma_i/\sigma_{i+1} = \kappa^{1/J}$ and $\tfrac{1}{N}\textstyle\sum_{i=1}^{J}\sigma_i^2 = 1$. As a result, the condition number of $\bm{A}$ is $\kappa^{1-1/J}$, which tends to $\kappa$ when $J$ is large. %The use of DCT/DFT matrix reduces the simulation time, since performing a matrix-vector product only requires complexity of $\mathcal{O}(N\log N)$ instead of $\mathcal{O}(MN)$.
    Although $\bm{A}$ is not right-unitarily invariant, it belongs to the broader spectral universality class \cite{Rishabh2024, wang2024universality}, under which OAMP/VAMP retains replica Bayes-optimality.
    \item When comparing the runtime of algorithms, in order to reflect the time cost of matrix-vector products, we do not use $\bm{A}$ in \eqref{Eqn:spA}. Instead, we generate a right-unitarily invariant matrix $\bm{A}$ with the same singular values $\sigma_1,\cdots, \sigma_J$ as those in \eqref{Eqn:spA}. 
\end{itemize}
In general, AMP-type algorithms converge more slowly when dealing with more challenging matrices $\bm{A}$. The ``difficulty'' of $\bm{A}$ is determined not only by its condition number but also by the specific distribution of its singular values. Even with the same condition numbers, the matrix $\bm{A}$ with singular values described above may be much more challenging than those encountered in practical applications.

\subsection{Simulation Results}
The simulations in Figs. \ref{Fig:OA}-\ref{Fig:CR_GD_1} are conducted under the complex-valued setting. Recall that the intermediate variables $|w_{2t-2}|$ and $|w_{2t-1}|$ are required in iteration $t$ of GD-MAMP. In Fig.~ \ref{Fig:OA}(a), we show that $|w_k|$ increases exponentially with $k$. When $k \geq 273$, $|w_k|$ exceeds the maximum representable value of standard 64-bit double-precision floating-point (FP64) numbers (approximately $1.8 \times 10^{308}$), leading to the overflow problem. This explains the crash of GD-MAMP at $t=137$ in Fig. \ref{Fig:OA}(b). In contrast, the proposed OA-GD-MAMP in Section \ref{Sec:OA-GD} avoids this overflow problem. When the eigenvalues of $\bm{A}\bm{A}^{\rm H}$ are available, OA-GD-MAMP is mathematically equivalent to GD-MAMP. Furthermore, when the eigenvalues of $\bm{A}\bm{A}^{\rm H}$ are unavailable, OA-GD-MAMP performs well by using the approximation method in Lemma \ref{Lemma:oa_h}.

In the following, unless otherwise specified, we assume that the eigenvalues of $\bm{A}\bm{A}^{\rm H}$ are known. Thus, we drop the distinction with OA-GD-MAMP and write GD-MAMP only. Furthermore, the overflow-avoiding technique is applied to CR-GD-MAMP and its partial memory variant (abbreviated as GD-MAMP-P) as described in Section \ref{Sec:CR_GD}.
\begin{figure}[t] 
  \begin{tabular}{c c}
  \includegraphics[width=75mm]{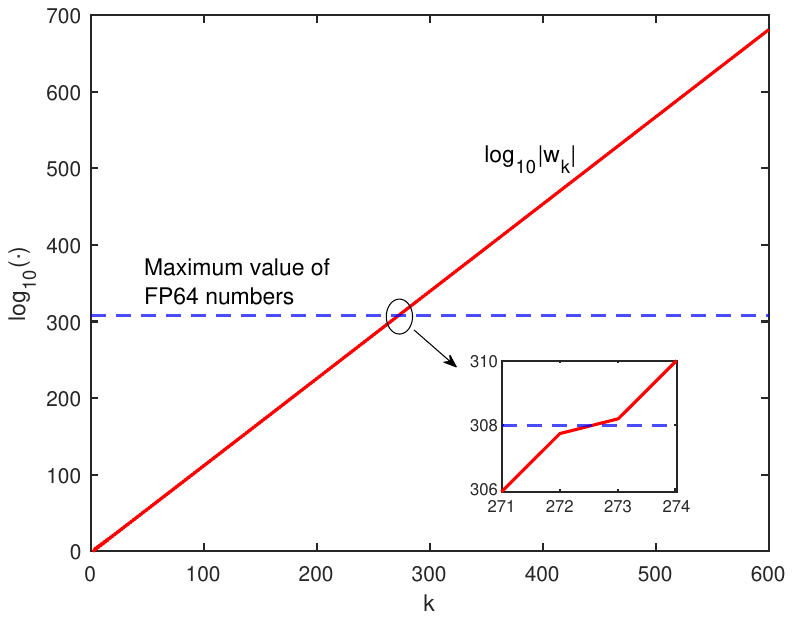} & \includegraphics[width=75mm]{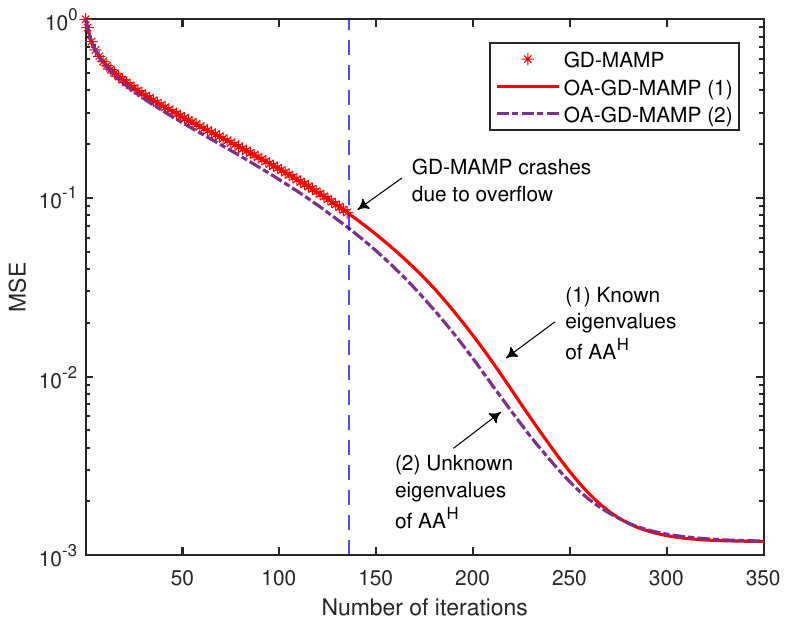} \\
  $\quad$\footnotesize{(a) $\log_{10}|w_k|$ versus $k$} &
  $\quad$\footnotesize{(b) MSE versus number of iterations} \\
  \end{tabular}
  \caption{In (a), $|w_k|$ exceeds the maximum value of FP64 numbers when $k$ is large, leading to the overflow of GD-MAMP in (b). In contrast, OA-GD-MAMP avoids this overflow problem and works well. $M=8192, N=16384$, $\mu=0.1$, ${\rm SNR}=35$ dB, $\kappa = 1000$, $L=3$.}
  \label{Fig:OA}
\end{figure}
\begin{figure}[t] 
  \begin{tabular}{c c}
  \includegraphics[width=75mm]{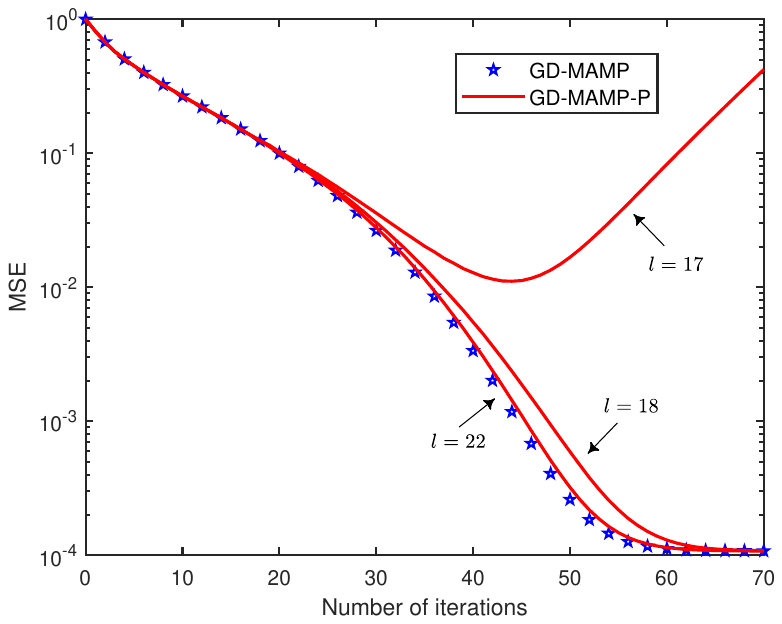} & \includegraphics[width=75mm]{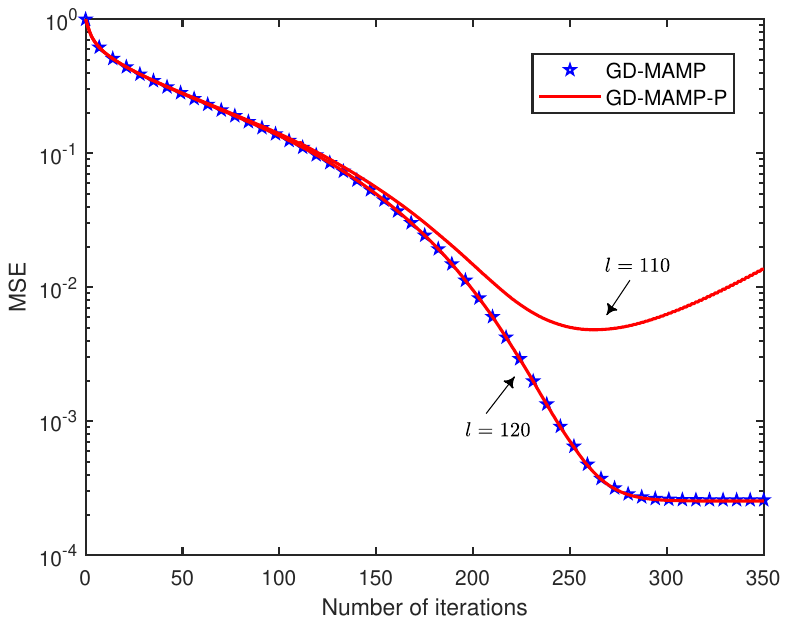} \\
  $\quad$\footnotesize{(a) $\kappa = 100$} &
  $\quad$\footnotesize{(b) $\kappa = 1000$} \\
  \end{tabular}
  \caption{If $\ell$ is suitable (e.g. $\ell = 22$ in (a) and $\ell=120$ in (b)), GD-MAMP-P saves the storage of $(T-\ell)N$ floating numbers while keeping the same convergence as GD-MAMP. $M=8192, N=16384$, $\mu=0.1$, ${\rm SNR}=35$ dB, $L=3$.}
  \label{Fig:part}
\end{figure}

Recall that $\ell$ denotes the maximum number of memory terms in GD-MAMP-P. As shown in Fig. \ref{Fig:part}(a), when $\kappa = 100$, GD-MAMP requires about $T=60$ iterations to converge, and GD-MAMP-P has the similar convergence with $\ell = 22$. In this case, GD-MAMP-P only needs to store $\ell N = 22 \cdot 16384$ floating numbers,  instead of $TN = 60 \cdot 16384$ in GD-MAMP. However, when $\ell < 18$, GD-MAMP-P diverges. In Fig. \ref{Fig:part}(b), when $\kappa = 1000$, GD-MAMP requires about $T=300$ iterations to converge, while GD-MAMP-P has the same convergence with $\ell = 120$. Based on these observations, it might be safe to set $\ell$ to $40\%$-$50\%$ of the maximum number of iterations of GD-MAMP, giving a $50\%$-$60\%$ decrease in storage of memory terms.
\begin{figure}[t] 
  \begin{tabular}{c c}
  \includegraphics[width=75mm]{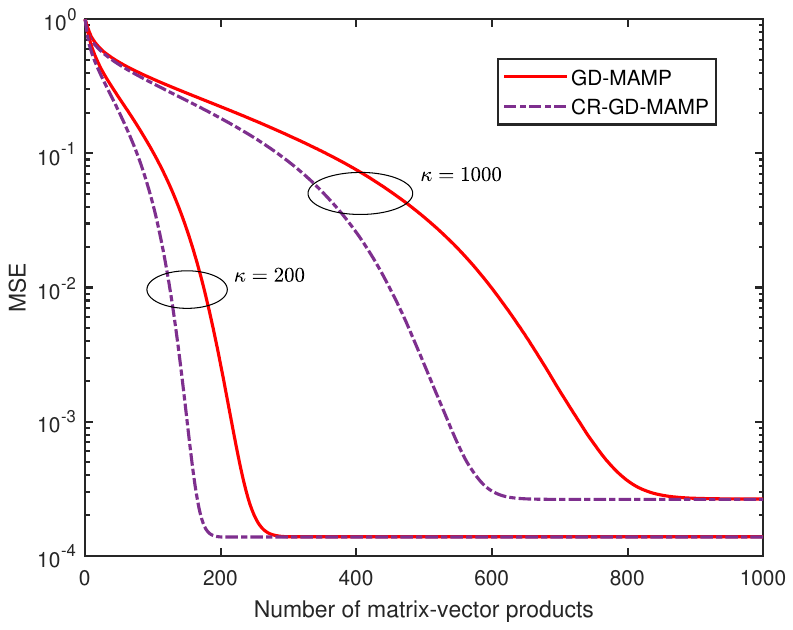} & \includegraphics[width=75mm]{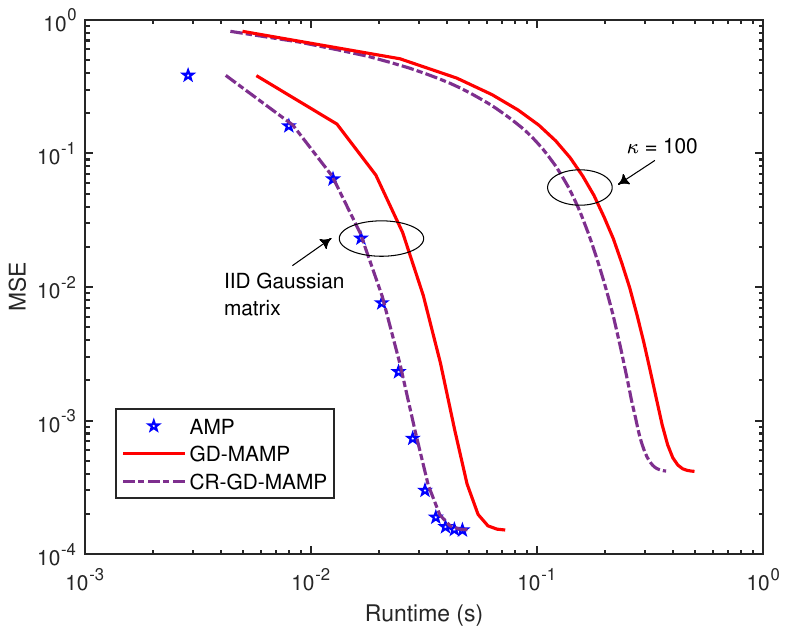} \\
  $\quad$\footnotesize{(a) MSE versus number of matrix-vector products,} & $\quad$\footnotesize{(b) MSE versus runtime,} \\
  $\quad$\footnotesize{$\kappa \in \{200, 1000\}$, $N=16384$} 
  & $\quad$\footnotesize{$\kappa \in \{15, 100\}$, $N=4096$} \\
  \end{tabular}
  \caption{In (a), CR-GD-MAMP saves nearly $1/3$ matrix-vector products while keeping the similar convergence as GD-MAMP. In (b), the runtime of CR-GD-MAMP is the same as that of AMP (for IID Gaussian $\bm{A}$), and only $2/3$ the runtime of GD-MAMP. $M=0.5N$, $\mu=0.1$, ${\rm SNR}=35$ dB, $L=3$.}
  \label{Fig:CR_GD_1}
\end{figure}
\begin{figure}[t] 
    \centering
    \includegraphics[width=80mm]{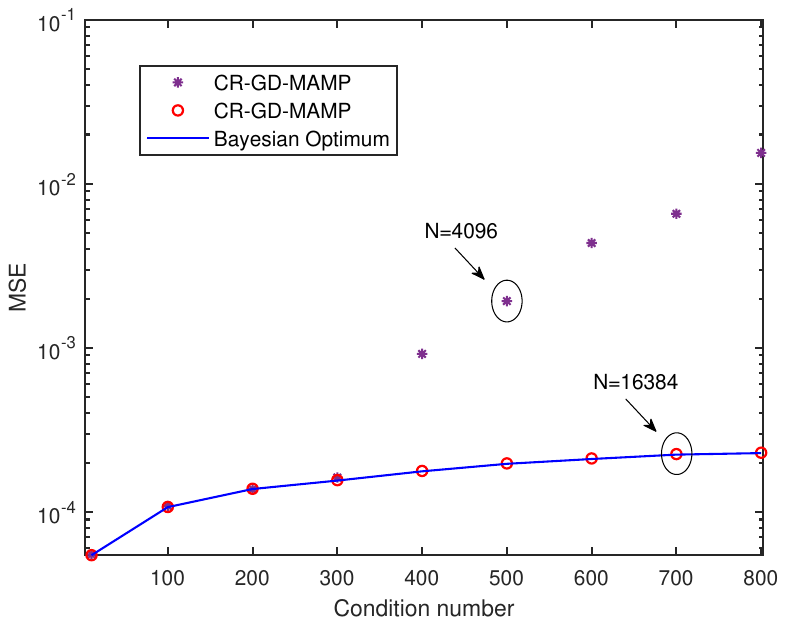} \\
    \caption{When $\kappa$ is large and $N$ is not large enough, CR-GD-MAMP may not converge to the same performance as GD-MAMP. $N \in \{4096, 16384\}$, $M = 0.5N$, $\mu=0.1$, ${\rm SNR}=35$ dB, $L=3$.}
    \label{Fig:CR_GD_2}
\end{figure}

Next, we compare the convergence speed of GD-MAMP and CR-GD-MAMP by evaluating the number of matrix-vector products. As shown in Fig. \ref{Fig:CR_GD_1}(a), CR-GD-MAMP converges to the same point as GD-MAMP with about $2/3$ the number of matrix-vector products, supporting our analysis in Section \ref{SSec:CR}. Fig. \ref{Fig:CR_GD_1}(b) shows that the average runtime of CR-GD-MAMP is similar to AMP (for IID Gaussian matrices $\bm{A}$), and only $2/3$ the runtime of GD-MAMP. This result is consistent with our expectation that matrix-vector products dominate the time complexity. In addition, Fig. \ref{Fig:CR_GD_2} shows that CR-GD-MAMP fails to converge to Bayesian optimum for a large $\kappa$ when the system scale decreases (e.g. $N=4096$, $\kappa > 300$). The reason is the estimation method in Lemma \ref{Lem:CR_cov} is less robust when $N$ becomes small, causing bad damping in CR-GD-MAMP.

\begin{figure}[h] 
  \begin{tabular}{c c}
  \includegraphics[width=75mm]{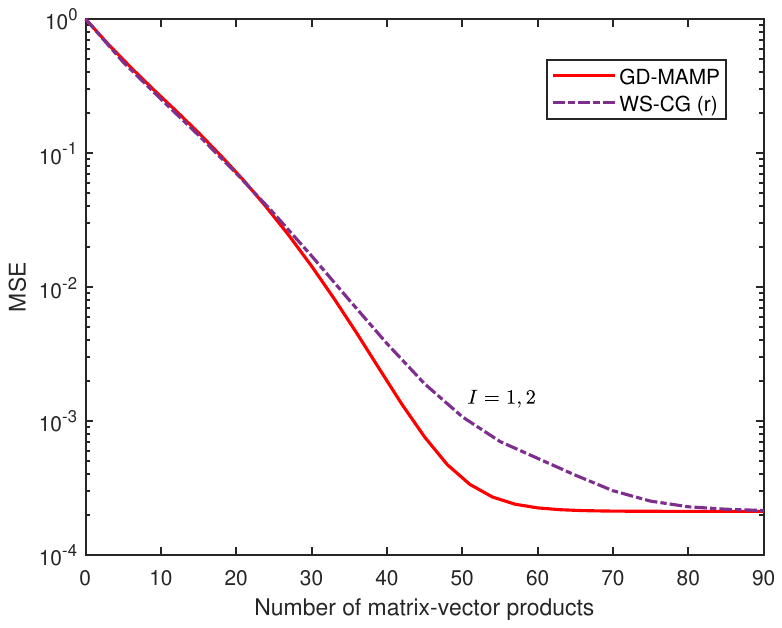} & \includegraphics[width=75mm]{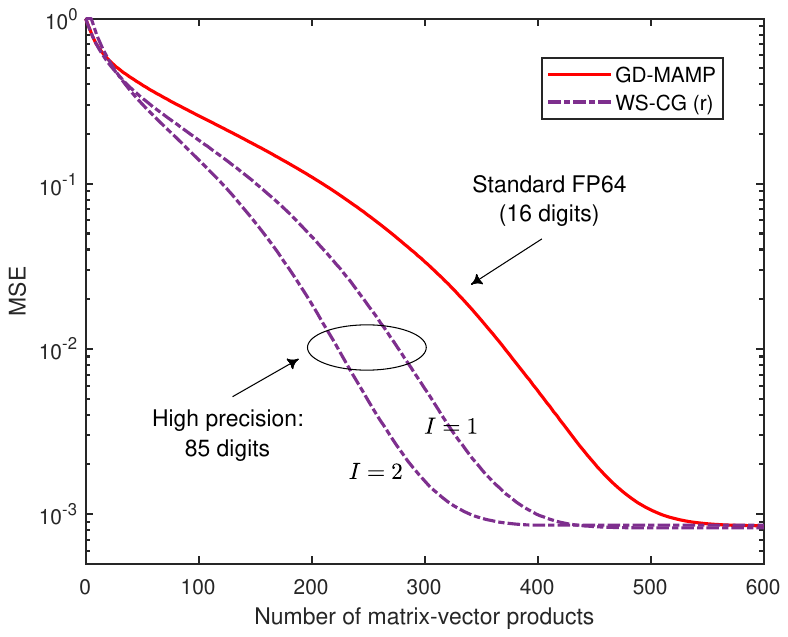} \\
  \footnotesize{(a) $\kappa = 15$} &
  \footnotesize{(b) $\kappa = 500$} \\
  \end{tabular}
  \caption{WS-CG(r) is the short of reformulated WS-CG-VAMP. In (a), GD-MAMP converges faster than WS-CG(r) when $\kappa$ is small. In (b), WS-CG(r) converges faster, but requiring high precision arithmetic, leading to a high cost. $M=8192, N=16384$, $\mu=0.1$, ${\rm SNR}=30$ dB, $L=3$.}
  \label{Fig:CG_kap}
\end{figure}

\begin{figure}[t] 
    \centering
    \includegraphics[width=80mm]{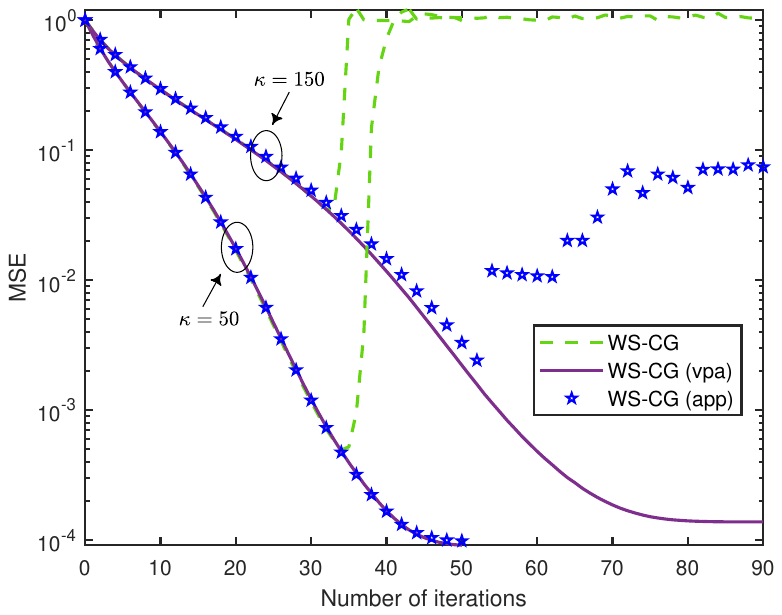}
    \caption{WS-CG(vpa)/(app) denotes WS-CG-VAMP with high-precision arithmetic and the approximation method in \cite{skuratovs2022compressed}, respectively. When $\kappa$ is large, only high-precision arithmetic can fully solve the precision problem of WS-CG-VAMP. $M = 8192$, $N = 16384$, ${\rm SNR}=35$ dB, $\kappa \in \{50, 150\}$, $I=1$.}
    \label{Fig:CG_vpa}
\end{figure}

\begin{figure}[t]
    \centering
    \includegraphics[width=80mm]{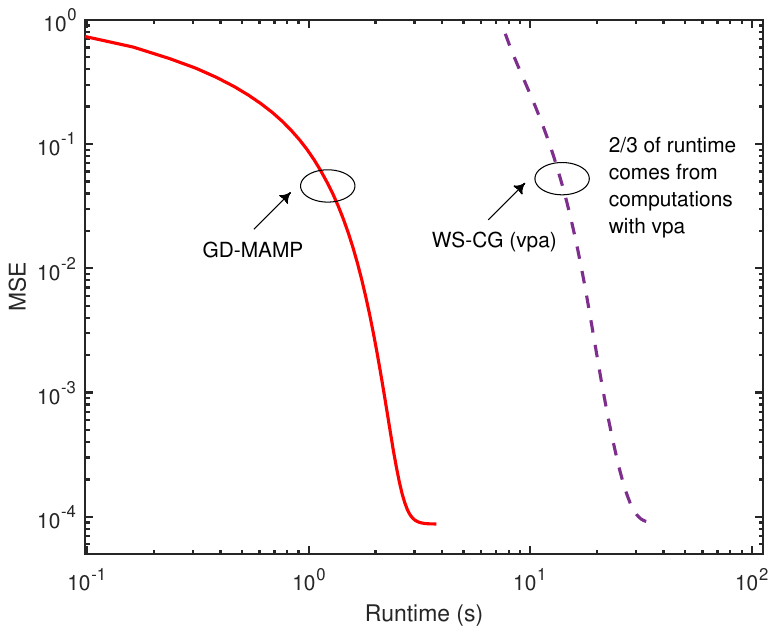}
    \caption{WS-CG(vpa) denotes WS-CG-VAMP with high-precision arithmetic. The practical time cost of WS-CG(vpa) is much higher than GD-MAMP. $M = 8192$, $N = 16384$, ${\rm SNR}=35$ dB, $\kappa = 50$, $L=3$, $I=1$.}
    \label{Fig:CG_time}
\end{figure}

Starting from Fig. \ref{Fig:CG_kap}, the simulations are conducted under the real-valued setting. We compare GD-MAMP and reformulated WS-CG-VAMP (abbreviated as WS-CG(r)) in Section \ref{Sec:CG}, by evaluating the number of matrix-vector products. As shown in Fig. \ref{Fig:CG_kap}(a), GD-MAMP converges faster than WS-CG(r) with $I=1$ or $I=2$ when $\kappa=15$. On the contrary, as shown in Fig. \ref{Fig:CG_kap}(b), WS-CG(r) converges faster than GD-MAMP when $\kappa=500$. However, in this case, high precision arithmetic for WS-CG(r) is required.

Fig.~\ref{Fig:CG_vpa} shows the necessity of using high-precision arithmetic for WS-CG-VAMP under moderate to large $\kappa$. Let WS-CG(vpa)/(app) be WS-CG-VAMP with high-precision arithmetic and the approximation method in \cite{skuratovs2022compressed}, respectively. When $\kappa=50$, WS-CG diverges due to the precision problem when calculating the orthogonalization parameters. In this case, both WS-CG(vpa) and  WS-CG(app) converge. When $\kappa = 150$, WS-CG(app) is not robust. Only with high-precision arithmetic can WS-CG-VAMP overcome its precision problem and converge. However, the time cost of using high-precision arithmetic is quite high, which is typically not accepted in practice. As shown in Fig.~\ref{Fig:CG_time}, the runtime of WS-CG(vpa) is much higher than that of GD-MAMP, even though their theoretical complexity is similar.

\section{Conclusion}
%In this paper, we first addressed the potential overflow problem in the original GD-MAMP, ensuring its numerical stability for systems with large condition numbers. Second, we introduced two complexity-reduced variants of GD-MAMP, achieving less memory terms and $2/3$ the number of matrix-vector products while keeping the similar convergence, respectively. Next, we proposed a general gradient-based formulation for designing MAMP algorithms and show that WS-CG-VAMP can be unified under this framework. Furthermore, we provide an equivalent reformulation of WS-CG-VAMP to reduce the number of matrix-vector products, so that we can conduct a comprehensive comparison with GD-MAMP. For small condition numbers, GD-MAMP achieves superior convergence speed. Conversely, for large condition numbers, although GD-MAMP exhibits a slower convergence rate, it does not suffer from the numerical precision problem in WS-CG-VAMP.

In this paper, we first resolved the overflow problem in the
original GD-MAMP by deriving an equivalent overflow-avoiding
version, thereby preventing numerical overflow for systems with large condition numbers. Second, we developed two
low-complexity variants: GD-MAMP with partial memory terms, which reduces memory requirements, and CR-GD-MAMP, which reduces the number of matrix-vector products per outer iteration from three to two. Both variants retain the similar convergence speed as GD-MAMP.

Next, we developed a general gradient-based formulation for
designing MAMP algorithms and showed that WS-CG-VAMP can be
recovered as its CG specialization. We identified catastrophic cancellation in the computation of the orthogonalization parameters as the mechanism underlying the finite-precision instability of WS-CG-VAMP, and quantified the resulting roundoff-error amplification using a cancellation factor. Furthermore, we derived an equivalent reformulation of WS-CG-VAMP, termed WS-CG-VAMP(r), which reduces the number of matrix-vector products per outer iteration from $2I+4$ to $2I+1$, where $I \geq 1$ is the number of inner CG iterations. In terms of the number of matrix-vector products, GD-MAMP converges faster for small condition numbers. For large condition numbers, WS-CG-VAMP(r) converges faster under high-precision arithmetic. Under IEEE double-precision arithmetic, however, WS-CG-VAMP(r) may diverge because of catastrophic cancellation, whereas GD-MAMP does not exhibit this instability.

Developing numerically stable IEEE double-precision
implementations of MAMP algorithms derived from the proposed
gradient-based formulation, including WS-CG-VAMP, remains an
important direction for future work.

%\section*{Acknowledgment}  

\appendices

\section{Details and Pseudocode of OA-GD-MAMP}\label{App:OA}
\subsection{Details of OA-GD-MAMP}
For simplicity, we define 
\begin{align}
    \vartheta_{t,i}^{\log} &\equiv \log|\vartheta_{t,i}|,
\end{align}
for $1 \leq i \leq t$. The following expressions in GD-MAMP need to be rewritten by \eqref{Eqn:bw}:
\begin{itemize}
    \item 
    Note that $w_0 = \rchi_0$ and $w_1 = \rchi_1/\theta_0$. Thus, $c_{t,0}, c_{t,1}, c_{t,2}, c_{t,3}$ in \eqref{Eqn:ct0_4} are rewritten as
    \BS\label{Eqn:oa_ct}\begin{align}
        c_{t,0} &= \textstyle\sum_{i=1}^{t-1}{\rm sgn}(\xi_i) \mathrm{e}^{\vartheta_{t,i}^{\log}-(t-i)\log\theta_0} \rchi_{t-i} / \rchi_0,\\
        c_{t,1} &= \sigma^2\rchi_0+ v_{t,t}^{\bar{\phi}}\big(\lambda^\dag\rchi_0 - \rchi_1/\theta_0 - \rchi_0^2\big), \\
        c_{t,2} &= \textstyle\sum_{i=1}^{t-1} {\rm sgn}(\xi_i)\mathrm{e}^{\vartheta_{t,i}^{\log}-(t-i)\log\theta_0} \big[\Re(v_{t,i}^{\bar{\phi}})\rchi_{t-i+1}/\theta_0-\big(\sigma^2+\Re(v_{t,i}^{\bar{\phi}})(\lambda^\dag-\rchi_0)\big)\rchi_{t-i}\big]\\
        \begin{split}
            c_{t,3} &= \textstyle{\sum}_{i=1}^{t-1}\textstyle{\sum}_{j=1}^{t-1} {\rm sgn}(\xi_i\xi_j) \mathrm{e}^{\vartheta_{t,i}^{\log}+\vartheta_{t,j}^{\log}-(2t-i-j)\log\theta_0} \big[\big(\sigma^2+v_{i,j}^{\bar{\phi}}\lambda^\dag\big)\rchi_{2t-i-j} \\
            & \qquad \qquad \qquad -v_{i,j}^{\bar{\phi}}\big(\rchi_{2t-i-j+1}/\theta_0+\rchi_{t-i}\rchi_{t-j}\big) \big]
        \end{split}
    \end{align}\ES
    \item 
    $p_{t,i}$ in \eqref{Eqn:p_ti} is rewritten as
    \begin{align}
        p_{t,i} = {\rm sgn}(\xi_i) \mathrm{e}^{\vartheta_{t,i}^{\log}-(t-i)\log\theta_0} \rchi_{t-i}. \label{Eqn:oa_pti}
    \end{align}
    \item 
    $v_{t,t'}^{\gamma}$ in \eqref{Eqn:v_gam} is rewritten as
    \begin{align}
        \begin{split}
            v_{t,t'}^{\gamma} &= \tfrac{1}{\varepsilon_t^{\gamma}}\tfrac{1}{\varepsilon_{t'}^{\gamma}} \textstyle{\sum}_{i=1}^{t}\textstyle{\sum}_{j=1}^{t'} {\rm sgn}(\xi_i\xi_j) \mathrm{e}^{\vartheta_{t,i}^{\log}+\vartheta_{t',j}^{\log}-(t+t'-i-j)\log\theta_0}\big[\big(\sigma^2+v_{i,j}^{\bar{\phi}}\lambda^\dag\big)\rchi_{t+t'-i-j} \\
            & \qquad \qquad \qquad -v_{i,j}^{\bar{\phi}}\big(\rchi_{t+t'-i-j+1}/\theta_0+\rchi_{t-i}\rchi_{t'-j}\big) \big]
        \end{split}
    \end{align}
\end{itemize}

\subsection{Pseudocode of OA-GD-MAMP}
\begin{algorithm}[H]
    \caption{OA-GD-MAMP}\label{alg:cap}
    \begin{algorithmic}[1]
        \renewcommand{\algorithmicrequire}{\textbf{Input:}}
        \renewcommand{\algorithmicensure}{\textbf{Output:}}
        \Require $\bm{A}$, $\bm{y}$, $\sigma^2$, $M$, $N$, $L$, $T$ 
        %\vspace{1mm}
        \If{$\bm{\lambda}$ (eigenvalues of $\bm{A}\bm{A}^{\rm H}$) is known}
            %\vspace{1mm}
            \State $\lambda^{\dag} = (\lambda_{\max}+\lambda_{\min}) / 2$, \  $\bm{\lambda}_{B} = \lambda^{\dag}\bm{1}-\bm{\lambda}$, \ $\rchi_0 = \frac{1}{N}\bm{1}^{\rm T}\bm{\lambda}$
            %\vspace{1mm}
            \State $\big\{\rchi_k, 1 \leq k < 2T\big\}$ by Lemma \ref{Lemma:oa_1}
            %\vspace{1mm}
        \Else
            %\vspace{1mm}
            \State $\lambda_{\max}^{\rm up}$ and $\lambda_{\min}^{\rm low}$ by (\ref{Eqn:max_min}), \ $\lambda^{\dag} = (\lambda_{\max}^{\rm up}+\lambda_{\min}^{\rm low}) / 2$
            %\vspace{1mm}
            \State $\big\{\rchi_k, 0 \leq k < 2T\big\}$ by Lemma \ref{Lemma:oa_h}
            %\vspace{1mm}
        \EndIf
        %\vspace{1mm}
        \State $\bm{x}_1 = {\rm E}(\bm{x}) = \bm{0}_N$, \ $\bm{z}_1 = \bm{y} - \bm{A}\bm{x}_1$, \ $\delta = M/N$
        %\vspace{1mm}
        \State $v_{1, 1}^{\bar{\phi}} = (\tfrac{1}{N}\bm{z}_1^{\rm H}\bm{z}_1 - \delta\sigma^2) / \rchi_0$, \ $\xi_1 = 1$, \ $\bm{u}_0 = \bm{0}_M$, \ $\hat{\bm{r}}_0 = \bm{0}_N$.
        %\vspace{1mm}
        \For{$t = 1$ to $T$}
            %\vspace{1mm}
            \State \% MLE
            %\vspace{1mm}
            \State $\theta_t = (\lambda^{\dag} + \sigma^2/v_{t,t}^{\bar{\phi}})^{-1}$, \ \big\{$\vartheta_{t,i}^{\log} = \vartheta_{t-1,i}^{\log} + \log(\theta_t), 1 \leq i < t$\big\}
            %\vspace{1mm}
            \State \big\{$c_{t, 0}, \cdots\!, c_{t, 3}$\big\} by (\ref{Eqn:oa_ct}), \ \big\{$\xi_t = (c_{t,2}c_{t,0} + c_{t,3})/(c_{t,1}c_{t,0} + c_{t,2}), t > 1$\big\}
            %\vspace{1mm}
            \State $\vartheta_{t,t}^{\log} = \log|\xi_t|$, \ $\big\{p_{t,i}, 1 \leq i \leq t\big\}$ by (\ref{Eqn:oa_pti}), \ $\varepsilon_t^{\gamma} = \textstyle\sum_{i=1}^t p_{t,i}$
            %\vspace{1mm}
            \State $\bm{u}_t = \theta_t\lambda^{\dag}\bm{u}_{t-1} + \xi_t \bm{y} - \bm{A}(\theta_t\hat{\bm{r}}_{t-1} + \xi_t\bm{x}_t)$, \ $\hat{\bm{r}}_t = \bm{A}^{\rm H}\bm{u}_t$ 
            %\vspace{1mm}
            \State $\bm{r}_t = \tfrac{1}{{\varepsilon}^\gamma_t}(\hat{\bm{r}}_{t} + \textstyle\sum_{i=1}^t  p_{t, i}\bm{x}_i)$, \ $v_{t,t}^{\gamma} = \tfrac{1}{({\varepsilon}^\gamma_t)^2}(c_{t,1}\xi_t^2 - 2 c_{t,2}\xi_t + c_{t,3})$
            %\vspace{1mm}
            \State \% NLE
            %\vspace{1mm}
            \State $\big\{\hat{\bm{x}}_t, \hat{v}_{t, t}\big\} = \hat{\phi}_t(\bm{r}_t, v_{t,t}^{\gamma})$, \ $\bm{x}_{t+1} = (   \hat{\bm{x}}_t / \hat{v}_{t, t} -  \bm{r}_t / v_{t,t}^{\gamma}) / (1 / \hat{v}_{t, t} - 1 / v_{t,t}^{\gamma})$
            %\vspace{1mm}
            \State $\bm{z}_{t+1} = \bm{y} - \bm{A}\bm{x}_{t+1}$, \ $ \big\{ v^{\bar{\phi}}_{t+1,k} = (v^{\bar{\phi}}_{k,t+1})^* = (\tfrac{1}{N}\bm{z}_{t+1} ^{\mr H} \bm{z}_{k} - \delta \sigma^2) / \rchi_0, 1 \leq k \leq t\!+\!1 \big\}$
            %\vspace{1mm}
            \State \% Damping at NLE
            %\vspace{1mm}
            \State $l_{t+1} = \min(L, t\!+\!1)$, \ $\bm{\mathcal{V}}_{t+1} \gets$ bottom-right $l_{t+1} \times l_{t+1}$ submatrix of $\bm{V}_{t+1}^{\bar{\phi}}$ 
            %\vspace{1mm}
            \State $\scaleto{\bm{\zeta}}{8pt}_{t+1} = \big(\bm{\mathcal{V}}_{t+1}^{-1}\bm{1}\big) \big/ \big(\bm{1}^{\rm T}\bm{\mathcal{V}}_{t+1}^{-1}\bm{1})$, \ $v_{t+1, t+1}^{\bar{\phi}} = 1 \big/ \big(\bm{1}^{\rm T}\bm{\mathcal{V}}_{t+1}^{-1}\bm{1})$
            %\vspace{1mm}
            \State $\bm{x}_{t+1} = \textstyle\sum_{i=1}^{l_{t+1}}{\scaleto{\bm{\zeta}}{8pt}}_{t+1}[i]\bm{x}_{t-l_{t+1}+1+i}$, \ $\bm{z}_{t+1} = \textstyle\sum_{i=1}^{l_{t+1}}{\scaleto{\bm{\zeta}}{8pt}}_{t+1}[i]\bm{z}_{t-l_{t+1}+1+i}$
            %\vspace{1mm}
            \State $\big\{v^{{\bar{\phi}}}_{t+1,k} = (v^{{\bar{\phi}}}_{k,t+1})^* =   \textstyle\sum_{i=1}^{l_{t+1}} {\scaleto{\bm{\zeta}}{8pt}}_{t+1}[i]v^{{\bar{\phi}}}_{t-l_{t+1}+1+i,k}, 1 \leq  k \leq t \big\}$
            %\vspace{1mm}
        \EndFor
        %\vspace{1mm}
        \Ensure $\hat{\bm{x}}_t, \hat{v}_{t, t}$
    \end{algorithmic}
\end{algorithm}
 
\section{Proof of Lemma \ref{Lem:gen_MAMP}}\label{App:gen_MAMP}
For $\tau \in \mathbb{Z}$, let $\mathcal{P}_{\tau}(\bm{A}\bm{A}^{\rm H})$ denote the set of polynomials in $\bm{A}\bm{A}^{\rm H}$ of degree at most $\tau$ with complex-valued coefficients, with the convention that $\mathcal{P}_{\tau}(\bm{A}\bm{A}^{\rm H}) = \{\bm{0}_{M \times M}\}$ for $\tau < 0$. For $t \geq 1$ and $s \in \mathbb{Z}$, define
\begin{align}
    \mathcal{H}_{t, s} \equiv \bigg\{\sum_{k=1}^t \bm{H}_{t, k}\bm{z}_k: \bm{H}_{t, k} \in \mathcal{P}_{(t-k)I+s-1}(\bm{A}\bm{A}^{\rm H}) \bigg\}. \label{Eqn:H_ts}
\end{align}
Each $\mathcal{H}_{t,s}$ is a vector space, and the following properties hold:
\begin{align}
    \mathcal{H}_{t, s} &\subseteq \mathcal{H}_{t, s+1}, \\
    \mathcal{H}_{t, s} &= \mathcal{H}_{t+1, s-I} \label{Eqn:p2_H},\ \; s \leq I.
\end{align} 
Thus, it suffices to prove that
\begin{align}
    \bm{u}_t^{(i)},\; \bm{p}_t^{(i)} \in \mathcal{H}_{t, i}, \label{Eqn:proof_Hti}
\end{align}
for $0 \leq i \leq I$ and $t \geq 1$. We use a nested induction: an outer induction on $t$ and an inner induction on $i$ for each fixed $t$.

First, we establish the inner induction step with respect to $i$. Fix $t$, and suppose that
\begin{align}
    \bm{u}_t^{(i-1)} \in \mathcal{H}_{t, i-1},\ \, \bm{u}_t^{(i)}, \bm{p}_t^{(i)} \in \mathcal{H}_{t, i},
\end{align}
for some $0 \leq i < I$. Since $\mathcal{H}_{t, i-1} \subseteq \mathcal{H}_{t, i}$,
\begin{align}
    \Delta\bm{u}_t^{(i)} &= \bm{u}_t^{(i)} - \bm{u}_t^{(i-1)} \in \mathcal{H}_{t, i}, \\
    \hat{\bm{u}}_t^{(i)} &= \bm{u}_t^{(i)} + \eta_t^{(i)}\Delta\bm{u}_t^{(i)} \in \mathcal{H}_{t, i}.
\end{align}
Since $\bm{W}_t = \sigma^2 \bm{I} + v_{t, t}^{\phi}\bm{A}\bm{A}^{\rm H}$ is a polynomial in $\bm{A}\bm{A}^{\rm H}$ of degree one and $\bm{z}_t \in \mathcal{H}_{t, i+1}$, we have
\begin{align}
    \hat{\bm{r}}_t^{(i)} &= \bm{z}_t - \bm{W}_t\hat{\bm{u}}_t^{(i)} \in \mathcal{H}_{t, i+1}, \\
    \bm{p}_t^{(i+1)} &= \hat{\bm{r}}_t^{(i)} + \beta_t^{(i)} \bm{p}_t^{(i)} \in \mathcal{H}_{t, i+1}, \\
    \bm{u}_{t}^{(i+1)} &= \bm{u}_t^{(i)} + \theta_t^{(i)} \Delta\bm{u}_t^{(i)} + \alpha_t^{(i)}\bm{p}_t^{(i+1)} \in \mathcal{H}_{t, i+1}.
\end{align}

Second, we establish the base case and the induction step with respect to $t$. For $t=1$,
\begin{align}
    \bm{u}_1^{(0)} = \bm{u}_1^{(-1)} = \bm{p}_{1}^{(0)} = \bm{0}_M \in \mathcal{H}_{1, -1} = \mathcal{H}_{1, 0}. \label{Eqn:outer_1}
\end{align}
The inner induction initialized by \eqref{Eqn:outer_1} proves \eqref{Eqn:proof_Hti} for $t=1$. Next, suppose that \eqref{Eqn:proof_Hti} holds for some $t \geq 1$. Then, in particular,
\begin{align}
    \bm{u}_t^{(I-1)} \in \mathcal{H}_{t, I-1},\ \;\bm{u}_t^{(I)}, \bm{p}_t^{(I)} \in \mathcal{H}_{t, I}.
\end{align}
Hence, using \eqref{Eqn:ini} and \eqref{Eqn:p2_H}, we have
\BS\label{Eqn:outer_t}\begin{align}
    \bm{u}_{t+1}^{(-1)} &= \bm{u}_t^{(I-1)} \in \mathcal{H}_{t, I-1} = \mathcal{H}_{t+1, -1}, \\
    \bm{u}_{t+1}^{(0)} &= \bm{u}_t^{(I)} \in \mathcal{H}_{t, I} = \mathcal{H}_{t+1, 0}, \\
    \bm{p}_{t+1}^{(0)} &= \bm{p}_t^{(I)} \in \mathcal{H}_{t, I} = \mathcal{H}_{t+1, 0}.
\end{align}\ES
The inner induction initialized by \eqref{Eqn:outer_t} yields
\begin{align}
    \bm{u}_{t+1}^{(i)},\; \bm{p}_{t+1}^{(i)}  \in \mathcal{H}_{t+1, i}.
\end{align}
Thus, \eqref{Eqn:proof_Hti} holds for $t+1$. This completes the proof.

\section{Derivation of Equation~(\ref{Eqn:gen_var})}\label{App:Var}
Let $\bm{x}_{t}^{\phi} = \bm{x} + \bm{f}_{t}$ and $\bm{x}_{t}^{\gamma} = \bm{x} + \bm{g}_{t}$. We have
\BS\begin{align}
    \bm{g}_{t} &= \tfrac{1}{\varepsilon^{\gamma}_{t}}\big[\bm{A}^{\mr H}\bm{u}^{(I)}_{t} + \textstyle\sum_{k=1}^t \omega^{(I)}_{t,k}\bm{x}_k^{\phi} \big] - \bm{x} \notag \\
    &= \tfrac{1}{\varepsilon^{\gamma}_{t}}\textstyle\sum_{k=1}^t \big[\bm{A}^{\mr H}\bm{D}_{t,k}^{(I)}(\bm{y} - \bm{A}\bm{x}_k^{\phi}) + \omega^{(I)}_{t,k}\bm{x}_k^{\phi} \big] - \bm{x} \label{Eqn:var_a}\\
    &= \tfrac{1}{\varepsilon^{\gamma}_{t}}\textstyle\sum_{k=1}^t \big[\bm{A}^{\mr H}\bm{D}_{t,k}^{(I)}(\bm{y} - \bm{A}\bm{x} - \bm{A}\bm{f}_k) + \omega^{(I)}_{t,k}(\bm{x} + \bm{f}_k) \big] - \bm{x} \\
    &= \tfrac{1}{\varepsilon^{\gamma}_{t}}\textstyle\sum_{k=1}^t \big[\bm{A}^{\mr H}\bm{D}_{t,k}^{(I)}(\bm{n} -\bm{A}\bm{f}_k) + \omega^{(I)}_{t,k}\bm{f}_k\big],\label{Eqn:var_c}
\end{align}\ES
where \eqref{Eqn:var_c} follows from $\varepsilon^{\gamma}_{t} = \sum_{k=1}^t \omega^{(I)}_{t,k}$. 
Let $\bm{Q}_t = \sum_{k=1}^t \bm{A}^{\mr H}\bm{D}_{t,k}^{(I)}$, and $\bm{P}_{t, k} = \omega^{(I)}_{t,k}\bm{I} - \bm{A}^{\mr H}\bm{D}_{t,k}^{(I)}\bm{A}$. For $t' \leq t$,
\BS\begin{align}
    v^{\gamma}_{t,t'} &\equiv \tfrac{1}{N}{\mr E} \{\bm{g}_{t}^{\mr H} \bm{g}_{t'} \} \notag \\
    &= \tfrac{1}{N (\varepsilon_{t}^{\gamma})^*\varepsilon_{t'}^{\gamma}} \mr{E} \big\{(\bm{Q}_{t}\bm{n} + \textstyle\sum_{k=1}^t\bm{P}_{t,k}\bm{f}_k)^{\rm H} (\bm{Q}_{t'}\bm{n} + \textstyle\sum_{k=1}^{t'}\bm{P}_{t',k}\bm{f}_k) \big\} \\
    &\overset{\mr{a.s.}}{=} \tfrac{1}{(\varepsilon_{t}^{\gamma})^*\varepsilon_{t'}^{\gamma}} \big(\tfrac{1}{N}\mr{tr}\{\bm{Q}_{t}^{\mr H}\bm{Q}_{t'}\}\sigma^2 + \textstyle\sum_{k_1=1}^t\sum_{k_2=1}^{t'} \tfrac{1}{N}\mr{tr}\{\bm{P}_{t,k_1}^{\rm H}\bm{P}_{t',k_2}\}v_{k_1,k_2}^{\phi}\big). \label{Eqn:var_d}
\end{align}\ES
Next,
\BS\begin{align}
    \tfrac{1}{N}\mr{tr}\{\bm{Q}_{t}^{\mr H}\bm{Q}_{t'}\} &=\tfrac{1}{N}\textstyle\sum_{k_1=1}^t\sum_{k_2=1}^{t'}\mr{tr}\Big\{\big(\bm{D}_{t,k_1}^{(I)}\big)^{\rm H}\bm{A}\bm{A}^{\rm H}\bm{D}_{t',k_2}^{(I)}\Big\} \\
    &= \textstyle\sum_{k_1=1}^{t}\sum_{k_2=1}^{t'}\sum_{j_1=0}^{J_{t, k_1}}\sum_{j_2=0}^{J_{t', k_2}} \big(d^{(I)}_{t,k_1,j_1}\big)^{*}d_{t',k_2,j_2}^{(I)}\lambda_{j_1+j_2+1}, \label{Eqn:var_e} \\
    \tfrac{1}{N}\mr{tr}\big\{\bm{P}_{t,k_1}^{\rm H}\bm{P}_{t',k_2}\big\} 
    &= \tfrac{1}{N}\mr{tr}\big\{\big(\omega_{t,k_1}^{(I)}\bm{I}-\bm{A}^{\rm H}\bm{D}_{t,k_1}^{(I)}\bm{A})^{\rm H}(\omega_{t',k_2}^{(I)}\bm{I}-\bm{A}^{\rm H}\bm{D}_{t',k_2}^{(I)}\bm{A}\big)\big\} \\
    &= \textstyle\sum_{j_1=0}^{J_{t, k_1}}\sum_{j_2=0}^{J_{t', k_2}}  (d_{t,k_1,j_1}^{(I)})^* d_{t',k_2,j_2}^{(I)}\tilde{\lambda}_{j_1,j_2}, \label{Eqn:var_f}
\end{align}\ES
where $J_{t, k} \equiv (t-k+1)I-1$, $\tilde{\lambda}_{j_1,j_2} \equiv \lambda_{j_1+j_2+2}-\lambda_{j_1+1}\lambda_{j_2+1}$. Thus, we can obtain (\ref{Eqn:gen_var}) by substituting \eqref{Eqn:var_e} and \eqref{Eqn:var_f} into \eqref{Eqn:var_d}.

\section{Proof of Lemma~\ref{Lem:round}}\label{App:proof_round}
For notational simplicity, let
\begin{align}
    J &= J_{t,k}, \\
    \tau_j &\equiv \tau_{t,k,j}^{(I)} = d_{t,k,j}^{(I)}\lambda_{j+1},\ \; 0 \leq j\leq J.
\end{align}
Then, $\omega_{t,k}^{(I)}$ can be written as
\begin{align}
    \omega_{t,k}^{(I)} = \sum_{j=0}^{J}\tau_j.
\end{align}
Under the standard floating-point model,
\begin{align}
    \mr{fl}(a + b) = (a + b)(1+\tilde{\delta}),\ \; \mr{fl}(a \times b) = (a \times b)(1+\tilde{\delta}),
\end{align}
where $|\tilde{\delta}| \leq u_{\rm fp}$. Let
\begin{align}
    \hat{\tau}_j = \mr{fl}
    \big(d_{t,k,j}^{(I)}\lambda_{j+1}\big),
\end{align}
and let the sequentially computed partial sums be
\begin{align}
    \hat{s}_0=\hat{\tau}_0,\ \;
    \hat{s}_j = \mr{fl}
    \big(\hat{s}_{j-1}+\hat{\tau}_j\big),\ \; 1\leq j\leq J.
\end{align}
Hence, we obtain $\omega_{t,k}^{\rm fp} = \widehat{s}_J$.

Expanding the above recursion, each exact term $\tau_j$ is affected by one rounded multiplication and at most $J$ rounded additions. As a result, the standard floating-point product bound gives
\begin{align}
    \omega_{t,k}^{\rm fp} &= \sum_{j=0}^{J}\tau_j(1+e_j), \\
    |e_j| &\leq \gamma_{J+1} \equiv \frac{(J+1) u_{\rm fp}}{1-(J+1)u_{\rm fp}},
\end{align}
where the standard condition $(J+1)u_{\rm fp}<1$ is assumed. Hence,
\begin{align}
    \Big| \omega_{t,k}^{\rm fp} - \omega_{t,k}^{(I)}\Big| &= \Big| \sum_{j=0}^{J}\tau_j e_j
    \Big| \leq \gamma_{J+1} \sum_{j=0}^{J}|\tau_j|.
\end{align}
Dividing both sides by $|\omega_{t,k}^{(I)}|$ and using the definition of $\mathcal{C}_{t,k}$ in \eqref{Eqn:can_condition}, we obtain
\begin{align*}
    \frac{\big|{\omega}_{t,k}^{\mr{fp}}-\omega_{t,k}^{(I)}\big|}{\big|\omega_{t,k}^{(I)}\big|}\leq \frac{(J_{t,k}+1) u_{\rm fp}}{1-(J_{t,k}+1)u_{\rm fp}} \mathcal{C}_{t,k}.
\end{align*}
This completes the proof.

\bibliographystyle{IEEEtran}
\bibliography{reference}

\end{document}